%% file: ms.tex
\documentclass[a4paper,12pt]{article}
\usepackage[T1]{fontenc}
\usepackage{amsmath,amssymb,amsthm}
\usepackage[ruled,vlined]{algorithm2e}
\usepackage[mathscr]{euscript}
\usepackage{mathrsfs}
\usepackage{mathtools}
\usepackage{enumitem}
\usepackage{color}
\usepackage{graphicx}
\usepackage{epsfig}
\usepackage{placeins}
\usepackage{longtable}
\usepackage{pdfpages}
\usepackage{float}
\usepackage{subcaption}
\usepackage[font=small]{caption}
\usepackage[hang,flushmargin]{footmisc} 
\usepackage{booktabs}
\usepackage{rotating,tabularx}
\usepackage{tikz}
\usetikzlibrary{positioning}
\usepackage{comment}
\usepackage{natbib}
\usepackage{setspace}
\usepackage{url}
\usepackage[left=3cm,right=3cm,bottom=3cm,top=3cm]{geometry}
\numberwithin{equation}{section}
\renewcommand{\baselinestretch}{1.2}\normalsize

\allowdisplaybreaks

\usepackage{hyperref}
\hypersetup{
hidelinks,
}

\input{macros}

\begin{document}

\heading{High-Dimensional Panel Data Models}{with Interactive Fixed Effects:}{Beyond the Linear Case}

\authors
{Maximilian R\"ucker\renewcommand{\thefootnote}{1}\footnotemark[1]}{Ulm University}
{Michael Vogt\renewcommand{\thefootnote}{2}\footnotemark[2]}{Ulm University} 
{Oliver Linton\renewcommand{\thefootnote}{3}\footnotemark[3]}{University of Cambridge}
\vspace{-0.9cm}

\renewcommand{\thefootnote}{1}
\footnotetext[1]{Corresponding author. Address: Institute of Statistics, Department of Mathematics and Economics, Ulm University, Helmholtzstrasse 20, 89081 Ulm, Germany. \\
Email: \texttt{maximilian.ruecker@uni-ulm.de}.}
\renewcommand{\thefootnote}{2}
\footnotetext[2]{Address: Institute of Statistics, Department of Mathematics and Economics, Ulm University, Helmholtzstrasse 20, 89081 Ulm, Germany. Email: \texttt{m.vogt@uni-ulm.de}.}
\renewcommand{\thefootnote}{3}
\footnotetext[3]{Address: Faculty of Economics, University of Cambridge, Austin Robinson Building, Sidgwick Avenue, Cambridge, CB3 9DD, UK. Email: \texttt{obl20@cam.ac.uk}.}
\renewcommand{\thefootnote}{\arabic{footnote}}
\setcounter{footnote}{3}

\renewcommand{\baselinestretch}{1.0}\normalsize
\renewcommand{\abstractname}{}
\begin{abstract}
\noindent Modern economic panel data sets are often high-dimensional: they contain information on a wide variety of control variables whose number may even exceed the sample size. Nevertheless, the literature on econometric methods for high-dimensional panels is quite limited. In this paper, we study high-dimensional panel models with interactive fixed effects where the regression function has an additive structure, i.e., each covariate enters the model via an unknown nonlinear component function. We develop estimation methodology and theory in this additive framework which substantially extends previous work on the high-dimensional linear case by \cite{Mruecker2025CCE}. In the theoretical part of the paper, we derive the convergence rate of our estimator for both the small-$T$ and the large-$T$ panel case.  The theory is complemented by comprehensive Monte Carlo experiments and an empirical application.
\end{abstract}

\renewcommand{\baselinestretch}{1.2}\normalsize

\noindent \textbf{Key words:} high-dimensional panel data; interactive fixed effects; additive models; lasso.

\noindent \textbf{JEL classifications:}  C13; C23; C55.

\input{ms_intro}

\input{ms_methods}

\input{ms_theory}

\input{ms_extensions}

\input{ms_sim}

\input{ms_app}

\FloatBarrier
\section*{Acknowledgements}
Financial support by the DFG (German Research Foundation) -- project number 501082519 -- is gratefully acknowledged.
\bibliographystyle{ims}
{\small
\setlength{\bibsep}{0.2em}
\bibliography{bibliography}}

\newpage
\input{ms_appendixA}

\input{ms_appendixB}

\input{ms_appendixC}

\end{document}

%% file: macros.tex
\newcommand{\reals}{\mathbb{R}}

\newcommand{\naturals}{\mathbb{N}}

\newcommand{\pr}{\mathbb{P}} 
\newcommand{\ex}{\mathbb{E}} 
\newcommand{\var}{\textnormal{Var}} 
\newcommand{\cov}{\textnormal{Cov}} 
\newcommand{\tr}{\textnormal{trace}} 
\newcommand{\normal}{N} 
\newcommand{\ind}{1} 
\newcommand{\bs}[1]{\boldsymbol{#1}}
\newcommand{\supp}{\textnormal{supp}}

\newcommand{\eig}{\psi} 
\newcommand{\Eig}{\Psi} 
\newcommand{\pen}{\lambda} 

\DeclareMathOperator*{\argmin}{arg\,min}

\newcommand{\convp}{\stackrel{p}{\longrightarrow}} 

\theoremstyle{plain}

\newtheorem{theorem}{Theorem}[section]

\newtheorem{lemma}[theorem]{Lemma}

\newtheorem{definition}[theorem]{Definition}
\newtheorem{remark}[theorem]{Remark}

\newtheorem{lemmaA}{Lemma}
\newtheorem{propA}[lemmaA]{Proposition}

\DeclareFontFamily{U}{mathx}{}
\DeclareFontShape{U}{mathx}{m}{n}{<-> mathx10}{}
\DeclareSymbolFont{mathx}{U}{mathx}{m}{n}
\DeclareMathAccent{\widecheck}{0}{mathx}{"71}

\makeatletter
\newcommand{\lefteqno}{\let\veqno\@@leqno}
\makeatother

\makeatletter
\newsavebox\myboxA
\newsavebox\myboxB
\newlength\mylenA

\newcommand*\xoverline[2][0.75]{%
    \sbox{\myboxA}{$\m@th#2$}%
    \setbox\myboxB\null
    \ht\myboxB=\ht\myboxA%
    \dp\myboxB=\dp\myboxA%
    \wd\myboxB=#1\wd\myboxA
    \sbox\myboxB{$\m@th\overline{\copy\myboxB}$}
    \setlength\mylenA{\the\wd\myboxA}
    \addtolength\mylenA{-\the\wd\myboxB}%
    \ifdim\wd\myboxB<\wd\myboxA%
       \rlap{\hskip 0.5\mylenA\usebox\myboxB}{\usebox\myboxA}%
    \else
        \hskip -0.5\mylenA\rlap{\usebox\myboxA}{\hskip 0.5\mylenA\usebox\myboxB}%
    \fi}
\makeatother

\newcommand{\heading}[3]
{  \setcounter{page}{1}
   \begin{center}

   {\LARGE \textbf{#1}}\\[0.3cm]
   {\LARGE \textbf{#2}}
   \\[0.35cm]
   {\LARGE \textbf{#3}}
    \end{center}
}

\newcommand{\authors}[6]
{  \begin{center}
      \begin{minipage}[c][2cm][c]{4.4cm}
      \begin{center} 
      {\large #1} 
      \vspace{0.05cm}
      
      #2 
      \end{center}
      \end{minipage}
      \begin{minipage}[c][2cm][c]{4cm}
      \begin{center} 
      {\large #3}
      \vspace{0.05cm}
      
      #4 
      \end{center}
      \end{minipage}
      \begin{minipage}[c][2cm][c]{4.4cm}
      \begin{center} 
      {\large #5}
      \vspace{0.05cm}

      #6 
      \end{center}
      \end{minipage}
   \end{center}
}


%% file: ms_intro.tex
\section{Introduction}

Panel data models are widely used in applied econometrics. Their popularity stems from their ability to account for unobserved heterogeneity in a flexible and tractable manner. A very popular class of panel models are linear interactive fixed effects models of the form 
\begin{equation}\label{eq:model-linear-intro} 
Y_{it} = \sum_{j=1}^p X_{it,j}\beta_j + \gamma_i^\top F_t + \varepsilon_{it}, 
\end{equation}
  where $\{ (Y_{it}, X_{it,1},\ldots,X_{it,p})^\top: 1 \le t \le T, \, 1 \le i \le n \}$ is the observed panel data set and $\beta = (\beta_1,\ldots,\beta_p)^\top$ is the parameter vector to be estimated. The error structure of the model has two components: (i) a standard idiosyncratic error $\varepsilon_{it}$ with $\ex[\varepsilon_{it}] = 0$ and (ii) the interactive fixed effects component $\gamma_i^\top F_t$ with $F_t$ denoting a vector of unobserved common factors and $\gamma_i$ being unit-specific factor loadings. Importantly, the interactive fixed effects $\gamma_i^\top F_t$ are allowed to be correlated with the regressors $X_{it,j}$, thus inducing endogeneity in the model. The seminal work of \cite{Pesaran2006} introduced the so-called Common Correlated Effects (CCE) approach to estimate the unknown parameter vector $\beta$ in a low-dimensional version of model \eqref{eq:model-linear-intro}, where the number of parameters $p$ is small compared to the dimensions $n$ and $T$. Recently, \cite{Mruecker2025CCE} have developed an extension of the CCE approach to high dimensions, where $p$ is allowed to be large relative to $n$ and $T$, potentially much larger than the sample size $nT$ itself.


A natural generalization of the linear model \eqref{eq:model-linear-intro} is an additive model of the form 
\begin{equation}\label{eq:model-additive-intro}
Y_{it} = \sum_{j=1}^p m_j(X_{it,j}) + \gamma_i^\top F_t + \varepsilon_{it},
\end{equation}
where each covariate $X_{it,j}$ enters the model via an unknown nonparametric link function $m_j$. Typically, the functions $m_j$ are assumed to belong to some smoothness class and are approximated via a basis expansion $m_j(x) = \phi_j(x)^\top \beta_j + \delta_j(x)$, where $\phi_j = (\phi_{j1},\ldots,\phi_{jL_j})^\top$ is a vector of (known) basis functions such as polynomials or splines, $\beta_j = (\beta_{j1},\ldots,\beta_{jL_j})^\top$ is a vector of (unknown) parameters and $\delta_j$ captures the approximation error of the basis expansion. 
An important special case, which is highly relevant for econometric practice, arises from neglecting the approximation error (i.e., $\delta_j \equiv 0$) and assuming $L_j$ to be a fixed natural number that does not grow with the sample size $nT$. This leads to a parametric additive model where each component function $m_j$ can be represented exactly as a linear combination of finitely many functions $\phi_{j1},\ldots,\phi_{jL_j}$. A leading example, which is ubiquitous in applied econometrics, is a polynomial setting where $\phi_{j\ell}$ are the first $L_j$ monomials $\phi_{j\ell}(x) = x^{\ell}$ for $\ell =1,\ldots,L_j$; see e.g.\ \cite{HAUSMAN1991273} who treat a very classical setup regarding measurement errors. By setting, for instance, $L_j = 3$ in this polynomial setting, we incorporate the $j$-th regressor $X_{it,j}$ itself in the model as well as the quadratic and cubic effects $X_{it,j}^2$ and $X_{it,j}^3$. Quadratic terms arise naturally in models with convex adjustment costs, diminishing returns in production or risk-averse utility, where marginal effects vary smoothly with the state variable. Cubic terms allow for asymmetric responses and inflection points, such as expansionary versus contractionary dynamics in investment or labor supply. The polynomial degree is usually fixed ex ante in practice, often to capture certain features -- such as curvature or local nonlinearities -- motivated by economic theory.

\pagebreak
The main aim of this paper is to develop estimation methods for the additive panel model \eqref{eq:model-additive-intro}, which work in both low and high dimensions, and to back them up by theory. We focus on the parametric additive model, leaving the nonparametric additive setting for future research.  
We demonstrate that the methodology proposed by \cite{Mruecker2025CCE} for the linear case can, with mild adjustments, be extended to the parametric (as well as the nonparametric) additive case, thereby considerably broadening its scope. 
Extending the underlying theory, in contrast, is far from trivial. The main issue is this: in the CCE approach, the covariates $X_{it} = (X_{it,1},\ldots,X_{it,p})^\top$ are modelled linearly in the factors $F_t$. More specifically, 
\begin{equation}\label{eq:model-regressors-intro}
X_{it} = \bs{\Gamma}_i F_t + Z_{it}, 
\end{equation}
where $Z_{it}$ is an idiosyncratic component and $\bs{\Gamma}_i F_t$ is a factor component with $\bs{\Gamma}_i$ denoting a matrix of factor loadings. The core idea of the CCE approach -- no matter whether the original low-dimensional or the high-dimensional version is considered -- is to (approximately) eliminate the factors $F_t$ from the model by a suitable data-driven projection prior to estimation. In the linear model \eqref{eq:model-linear-intro}, both the factor component $\bs{\Gamma}_i F_t$ in the regressors and the factor component $\gamma_i^\top F_t$ in the error structure of the model can be eliminated in this way. In the additive model \eqref{eq:model-additive-intro}, in contrast, the regressors enter the model nonlinearly (via the link functions $m_j$), implying that a projection will no longer eliminate the factors from the regressors. As discussed in detail in Section \ref{sec:heuristics}, this produces highly non-trivial technical challenges that are entirely absent in the linear case treated in \cite{Mruecker2025CCE}. The main theoretical contribution of the paper is to overcome these challenges.

\subsubsection*{Literature review}

Comprehensive surveys of interactive fixed effects models are provided by \cite{Bonhomme2024} and \cite{ditzen2025interactivegroupednonseparablefixed}. 
While there is a large amount of literature on interactive fixed effects models in the classical low-dimen\-sio\-nal case, the literature on the high-dimensional case is very limited: \cite{LuSu2016} extend the least squares method of \cite{Bai2009} to a high-dimensional dynamic panel model by adding a group-lasso penalty. However, they only consider a situation where the dimensionality of the model grows fairly slowly with the sample size. Hence, their setting is not really high-dimensional. \cite{BelloniChenPadillaWang2019} and \cite{gao2025robustestimationinferencehighdimensional} develop nuclear norm penalized estimation methods for high-dimensional panel settings with interactive fixed effects. Finally, as already discussed above, \cite{Mruecker2025CCE} devise an extension of the CCE approach to high dimensions. 

Besides interactive fixed effects models, models with other random and fixed effects structures have been studied in the high-dimensional regime. We briefly discuss some examples: \cite{Kock2013} derives theory for bridge estimators in both random and fixed effects models, while \cite{Kock2016} analyzes a model with a hybrid error structure that is in-between random and fixed effects. \cite{Belloni01102016GUNCONTROL} investigate high-dimensional two-way fixed effects models 
and propose a cluster-lasso estimator that accounts for within-cluster dependence. More recently, \cite{poselliClarke2025} have developed a double machine learning framework for partially linear panel models with fixed effects, allowing for high-dimensional and potentially nonlinear nuisance functions. \cite{chernozhukov2026arellanobondlassoestimatordynamic} study high-dimensional dynamic linear panel models with fixed effects and develop an Arellano-Bond lasso estimator to handle the many-instruments problem. In a similar vein, \cite{semenova2023} focus on heterogeneous treatment effects in dynamic panels under weak dependence, so that high dimensionality enters through treatment-effect heterogeneity and nuisance estimation. 

\subsubsection*{Structure of the paper}

Section \ref{sec:model} gives a detailed description of the parametric additive model that we analyze in the paper. The estimation stra\-tegy is described in Section \ref{sec:estimation-method}, while the theory is presented in Section \ref{sec:theoreticalresults}. 
Our main theoretical result is stated in Section \ref{sec:convergenceresults} and provides the convergence rate of our estimator in both the small-$T$ and the large-$T$ panel case. The central assumption required to derive the convergence rate is a so-called restricted eigenvalue condition. In Section \ref{sec:analysisofcomp}, we show that this condition is satisfied under a set of reasonable low-level assumptions. Extensions of our methods are discussed in Section \ref{sec:extensions}. There, we in particular propose a method to perform inference in our model setting. Moreover, we outline how to extend our methods and theory to the nonparametric additive case and highlight the technical complications involved. The methodological and theoretical analysis of the paper is complemented by a comprehensive simulation study in Section \ref{sec:simulations} and an empirical application to financial data in Section \ref{sec:empiricalstudy}.

\subsubsection*{\texttt{R} code} 
\texttt{R} code to implement the methods of the paper and to replicate the simulation exercises is publicly available and maintained at \url{https://github.com/RueckerM/nonLinearIFE_replicationFiles}.

\subsubsection*{Notation}

Matrices are denoted by bold letters,
whereas scalars and vectors are printed in normal font. For a vector $v =
(v_1,\ldots,v_q)^\top \in \reals^q$ and a set $S \subseteq \{1,\ldots,q\}$,
we let $v_S = (v_i: i \in S)$ be the vector which consists of the entries $%
v_i$ with $i \in S$ only. In addition, we sometimes write $v_{-i}$ to denote
the vector $v$ without the $i$-th component. We let $\|v\| = (\sum_i
v_i^2)^{1/2}$ denote the Euclidean norm of $v$, $\|v\|_1 = \sum_i|v_i|$ its $%
\ell_1$-norm, and $\|v\|_{\infty} = \max_i|v_i|$ its $\ell_\infty$-norm. For
a generic matrix $\bs{A}^{T \times p}$, we denote the row vectors by $A_t$
and the column vectors by $A_{(j)}$, that is, $\bs{A} = (A_1 \ldots
A_T)^\top = (A_{(1)} \ldots A_{(p)})$. Moreover, the matrix $\bs{A}$ without
the $t$-th row is denoted by $\bs{A}_{-t}$ and that without the $j$-th
column by $\bs{A}_{(-j)}$. 
The symbols $\eig_{\min}(\bs{A})$ and $\eig%
_{\max}(\bs{A})$ are used to denote the minimal and maximal eigenvalue of a
square matrix $\bs{A} \in \reals^{q \times q}$. In addition, we sometimes
write $\eig_1(\bs{A}) \ge \eig_2(\bs{A}) \ge \ldots \ge \eig_q(\bs{A})$ to
denote the eigenvalues of $\bs{A}$ (in decreasing order). For a general (not
necessarily square) matrix $\bs{A} = (a_{ij})$, $\| \bs{A} \|$, $\|\bs{A}%
\|_1 $, $\|\bs{A}\|_\infty$ and $\|\bs{A}\|_{\text{max}}$ are its spectral
norm, $\ell_1$-norm, $\ell_\infty$-norm and elementwise norm, respectively.
In particular, $\| \bs{A} \| = \eig_{\max}^{1/2}(\bs{A}^\top \bs{A})$, $\|%
\bs{A}\|_1 = \max_j \sum_i |a_{ij}|$, $\|\bs{A}\|_\infty = \max_i \sum_j
|a_{ij}|$ and $\|\bs{A}\|_{\max} = \max_{ij} |a_{ij}|$. The symbol $\bs{A}%
^{-}$ stands for the generalized inverse of a matrix $\bs{A}$ and the symbol 
$\bs{I}_q$ for the $q \times q$ identity matrix. Sometimes, we also write $%
\bs{I}$ instead of $\bs{I}_q$ for short. Finally, the indicator function is
denoted by $\ind(\cdot)$ and the cardinality of a set $S$ by $|S|$.

%% file: ms_methods.tex
\section{Model framework}\label{sec:model}

We observe a sample of panel data $\{ (Y_{it}, X_{it}): 1 \le t \le T, \, 1 \le i \le n \}$ with real-valued random variables $Y_{it}$ and $\reals^p$-valued random vectors $X_{it} = (X_{it,1},\ldots,X_{it,p})^\top$, where $n$ is the cross-section dimension and $T$ the time series length. We analyze both the large-$T$ case, where $n \to \infty$ and $T \to \infty$, and the small-$T$ case, where $n \to \infty$ but $T$ is a fixed natural number. Throughout the paper, we regard both $T$ and $p$ as a function of $n$, that is, $T=T(n)$ and $p=p(n)$. Hence, asymptotic statements are to be understood in the sense that $n \to \infty$ (and $T=T(n) \to \infty$ in the large-$T$ case).

The observed data are supposed to follow the additive interactive fixed effects model
\begin{equation}\label{eq:model-add}
Y_{it} = \sum_{j=1}^p m_j(X_{it,j}) + \gamma_i^\top F_t+ \varepsilon_{it}
\end{equation}
where $m_j$ are unknown functions, $\varepsilon_{it}$ is an idiosyncratic error term with $\ex[\varepsilon_{it}] = 0$,  $F_t = (F_{t,1},\ldots,F_{t,K})^\top$ is a vector of unobserved factors and $\gamma_i = (\gamma_{i,1},\ldots,\gamma_{i,K})^\top$ is a vector of unobserved factor loadings for subject $i$. Throughout the paper, we treat the factors $F_t$ as non-random parameters, which corresponds to an implicit conditioning on the factors. The factor loadings $\gamma_i$, in contrast, are considered to be random. This modelling choice is quite common in the literature; see e.g.\ \cite{MoonWeidner2015} and \cite{Mruecker2025CCE}. As already mentioned in the introduction, we restrict attention to the parametric additive case, i.e., we model the link functions $m_j$ as 
\begin{equation}\label{eq:model-mjs}
m_j(x)= \phi_{j}(x)^\top \beta_j,  
\end{equation}
where $\phi_{j} = (\phi_{j1},\ldots, \phi_{jL_j})^\top$ is a vector of known functions, $\beta_j = (\beta_{j1},\ldots,\beta_{jL_j})^\top \in \reals^{L_j}$ is a vector of unknown parameters, and $L_j$ are fixed natural numbers with the property that $\max_{1 \le j \le p} L_j \le L_{\max} < \infty$ for some sufficiently large $L_{\max} \in \naturals$. As usual in the CCE literature, the regressors $X_{it}$ in \eqref{eq:model-add} are supposed to have the structure 
\begin{equation}\label{eq:model-reg}
X_{it} = \bs{\Gamma}_i F_t + Z_{it},
\end{equation}
where $\bs{\Gamma}_i \in \reals^{p \times K}$ is a matrix of individual-specific factor loadings and $Z_{it}$ represents the idiosyncratic part of the regressors with $\ex[Z_{it}] = 0$ for all $i$ and $t$. Notably, the regressors $X_{it}$ may be correlated with the error structure $e_{it} = \gamma_i^\top F_t + \varepsilon_{it}$ in the main equation \eqref{eq:model-add} because the factor loadings $\bs{\Gamma}_i$ and $\gamma_i$ are allowed to be correlated with each other in an arbitrary way. Hence, the regressors are endogenous.

We allow model \eqref{eq:model-add} to be high-dimensional in the sense that the number of additive components $d=\sum_{j=1}^p L_j$ may be very large, potentially much larger than the sample size $nT$ itself. To handle this high dimensionality, we impose certain sparsity constraints on the model. In particular, we assume that many of the parameters $\beta_{j\ell}$ $(1 \le j \le p, \, 1 \le \ell \le L_j)$ are equal to zero. A bit more formally, letting $S = \{(j,\ell): \beta_{j\ell} \ne 0\}$ be the active set and $s = |S|$ the sparsity index of the model, we assume that $s$ is small relative to the sample size $nT$. The precise sparsity constraints are presented in Section \ref{sec:assumptions}.

To develop our estimation methods, it is convenient to formulate the model in matrix notation. Specifically, for each cross-sectional unit $i$, we write 
\begin{equation}\label{eq:model-matrix}
Y_i = \bs{\Phi}_i \beta + \bs{F} \gamma_i + \varepsilon_i, 
\end{equation}
where 
$Y_i = (Y_{i1},\ldots,Y_{iT})^\top \in \reals^T$ is the response vector, the error vector $\varepsilon_i$ is defined analogously, $\beta = (\beta_1^\top,\ldots,\beta_p^\top)^\top \in \reals^{d}$ is the unknown parameter vector to be estimated, $\bs{F} = (F_1 \ldots F_T)^\top \in \reals^{T \times K}$ is the matrix of factors and 
\[ \bs{\Phi}_i = (\Phi(X_{i1}) \ldots \Phi(X_{iT}))^\top \in \reals^{T \times d} \] 
is the design matrix with $\Phi(X_{it}) =(\phi_{1}(X_{it,1})^\top,\ldots,\phi_{p}(X_{it,p})^\top)^\top$ and $\phi_{j}(X_{it,j})= (\phi_{j1}(X_{it,j}),\ldots,\phi_{jL_j}(X_{it,j}))^\top$. The observed regressors $\bs{X}_i = (X_{i1}, \ldots, X_{iT})^\top$ from which the design matrix $\bs{\Phi}_i$ is computed are assumed to have the form \eqref{eq:model-reg}, which can be reformulated as
\begin{equation}\label{eq:model-reg-matrix}
\bs{X}_i = \bs{F} \bs{\Gamma}_i^\top + \bs{Z}_i    
\end{equation}
with $\bs{Z}_i$ defined analogously as $\bs{X}_i$. For technical reasons, we work with the following empirically centred version of \eqref{eq:model-matrix} in what follows:  
\begin{equation}\label{eq:model-centred}
Y_{i} - \overline{Y}
 = ({\bs{\Phi}}_i - \overline{\bs{\Phi}}) \beta + \bs{F} (\gamma_i - \overline{\gamma}) + (\varepsilon_i - \overline{\varepsilon})
\end{equation}
for $1 \le i \le n$. Here, $\overline{Y} = (\overline{Y}_1,\ldots,\overline{Y}_T)^\top$ is the vector of cross-sectional averages $\overline{Y}_t = n^{-1} \sum_{i=1}^n Y_{it}$ of the responses and the other quantities $\overline{\bs{\Phi}}$, $\overline{\gamma}$ and $\overline{\varepsilon}$ are defined analogously.

\section{Estimation methods}\label{sec:estimation-method}

In this section, we construct an estimator $\widehat{\beta}_\lambda$ of the unknown parameter vector $\beta$, which immediately yields an estimator of the component function $m_j$ for each $j$, in particular, the estimator $\widehat{m}_j(x) =  \phi_{j}(x)^\top \widehat{\beta}_{\lambda,j}$.

\subsection{Heuristics}
\label{sec:heuristics}

Before we give a precise definition of the estimator $\widehat{\beta}_\lambda$, we describe the main ideas of our approach and explain in detail why the additive case with $m_j(x)= \beta_j^\top \phi_{j}(x)$ is much more difficult to handle than the linear case with $m_j(x) = \beta_j x$. To do so, it is instructive to consider the (infeasible) oracle setting where the factors $F_t$ are known.

In the linear case, we may pursue the following strategy: we first eliminate or ``project away'' the factors from the model by a suitable transformation and then apply high-dimensional regression techniques to the transformed data. More formally, consider the linear model 
\begin{equation}\label{eq:model-linear} 
Y_i = \bs{X}_i \beta +  \bs{F} \gamma_i + \varepsilon_i \quad \text{with} \quad \bs{X}_i = \bs{F} \bs{\Gamma}_i^\top + \bs{Z}_i 
\end{equation} 
and let $\bs{\Pi}$ be the projection matrix onto the orthogonal complement of the column space of $\bs{F}$, i.e., 
\begin{equation}\label{eq:oracle-Pi}
\bs{\Pi} = \bs{I} - \bs{F} (\bs{F}^\top \bs{F})^{-} \bs{F}^\top, 
\end{equation} 
which is known in the oracle setting.
Applying $\bs{\Pi}$ to the linear model \eqref{eq:model-linear} yields the projected model
\begin{equation}\label{eq:model-linear-projected} 
\bs{\Pi} Y_i = \bs{\Pi} \bs{X}_i \beta + \bs{\Pi} \varepsilon_i \quad \text{with} \quad \bs{\Pi} \bs{X}_i = \bs{\Pi} \bs{Z}_i, 
\end{equation} 
thus completely eliminating the factor components $\bs{F} \gamma_i$ and $\bs{F} \bs{\Gamma}_i^\top$ from the error and covariate structure in \eqref{eq:model-linear}. Equation \eqref{eq:model-linear-projected} is a more or less standard high-dimensional regression model whose regressors $\bs{\Pi} \bs{X}_i$ are no longer endogenous (because endogeneity is solely induced by the factor components). We may thus expect the following: as long as the overall design matrix of the projected model 
\[ \bs{X}^\perp := \begin{pmatrix} \bs{\Pi} \bs{X}_1 \\ \vdots \\ \bs{\Pi} \bs{X}_n \end{pmatrix} \]
is regular enough in the sense of satisfying a suitable restricted eigenvalue (RE) condition, we can apply standard high-dimensional regression techniques such as the lasso to the transformed data sample $\{ (\bs{\Pi} Y_i, \bs{\Pi} \bs{X}_i): 1 \le i \le n \}$ in order to obtain an estimator of $\beta$ with good theoretical properties. This is exactly the approach pursued in \cite{Mruecker2025CCE}.

Let us now try to adapt this strategy to the additive case. Notably, the additive model \eqref{eq:model-matrix} is identical to the linear model \eqref{eq:model-linear} except that we now have the more complicated design matrix $\bs{\Phi}_i = (\Phi(X_{i1}),\ldots,\Phi(X_{iT}))^\top$. Applying the projection matrix $\bs{\Pi}$ to the additive model \eqref{eq:model-matrix} yields 
\begin{equation}\label{eq:model-matrix-projected} 
\bs{\Pi} Y_i = \bs{\Pi} \bs{\Phi}_i \beta +  \bs{\Pi} \varepsilon_i \quad \text{with} \quad \bs{\Pi} \bs{\Phi}_i = \bs{\Pi} (\Phi(X_{i1}) \ldots \Phi(X_{iT}))^\top.
\end{equation} 
As in the linear case, the factor component gets fully eliminated from the error structure. However, as the factors enter the design matrix $\bs{\Phi}_i$ in a nonlinear fashion (due to the nonlinearity of the functions $\phi_{j\ell}$), they do not get eliminated (not even approximately) from the design matrix $\bs{\Pi} \bs{\Phi}_i$. Importantly, endogeneity of the regressors is accounted for nevertheless because correlation between errors and regressors is solely induced by correlation between the factor components $\bs{F} \gamma_i$ and $\bs{F} \bs{\Gamma}_i^\top$ in the error and covariate structure. As the error component $\bs{F} \gamma_i$ is eliminated in the projected model \eqref{eq:model-matrix-projected}, there is no correlation left between errors and design part. Analogous to the linear case, we may thus expect the following: as long as the overall design matrix of the projected model 
\begin{equation}\label{eq:oracle-design-matrix}
\bs{\Phi}^\perp := \begin{pmatrix} \bs{\Pi} \bs{\Phi}_1 \\ \vdots \\ \bs{\Pi} \bs{\Phi}_n \end{pmatrix} 
\end{equation}
is regular enough in the sense of satisfying an RE condition, we can obtain a reasonable estimator of $\beta$ by running the lasso or related high-dimensional regression techniques on the transformed sample $\{ (\bs{\Pi} Y_i, \bs{\Pi} \bs{\Phi}_i): 1 \le i \le n \}$. There is, however, a crucial difference between the linear and the additive case: the design matrix $\bs{\Phi}^\perp$ has a vastly more complex structure than $\bs{X}^\perp$. Specifically, whereas the factors $\bs{F}$ are contained in $\bs{\Phi}^\perp$ in a complicated nonlinear fashion, they are not contained at all in $\bs{X}^\perp$. This makes a substantial difference when it comes to the RE condition: the techniques from \cite{Mruecker2025CCE} to show that $\bs{X}^\perp$ fulfills an RE condition are essentially useless in the additive case because $\bs{\Phi}^\perp$ is structurally so different from $\bs{X}^\perp$. Verifying an RE condition for $\bs{\Phi}^\perp$ turns out to be a theoretically much harder problem which requires totally different tools.  

In summary: The estimation approach proposed by \cite{Mruecker2025CCE} for the linear case can be extended to the additive case with minor modifications, as detailed in the remainder of this section. Consequently, their approach applies much more broadly than in the linear case alone. On the downside, the theory developed for the linear case does not readily extend to the additive case; instead, substantially different arguments are required, as shown in Section \ref{sec:theoreticalresults} and the corresponding proofs.

\subsection{Estimation algorithm}
\label{sec:estimationalgo}

We now define our estimator $\widehat{\beta}_\lambda = (\widehat{\beta}_{\lambda,1}^\top,\ldots,\widehat{\beta}_{\lambda,p}^\top)^\top$ of the unknown parameter vector $\beta = (\beta_1^\top,\ldots,\beta_p^\top)^\top$, which generalizes the estimator for the linear case proposed by \cite{Mruecker2025CCE}. We refer to it as the HD-CCE (short for high-dimensional common correlated effects) estimator.

\subsubsection*{Step 1: Estimation of the unknown number of factors $\bs{K}$}

Let $\overline{\bs{X}} = (\overline{X}_1 \ldots \overline{X}_T)^\top$ be the matrix of cross-sectional averages $\overline{X}_t = n^{-1} \sum_{i=1}^n X_{it}$. Compute the high-dimensional $p \times p$ matrix 
$\widehat{\bs{\Sigma}} = T^{-1} \overline{\bs{X}}^\top \overline{\bs{X}}$ and perform an eigendecomposition of $\widehat{\bs{\Sigma}}$, which yields the eigenvalues $\widehat{\eig}_1 \ge \widehat{\eig}_2 \ge \ldots \ge \widehat{\eig}_p \ge 0$ and the corresponding orthonormal eigenvectors $\widehat{U}_1,\ldots,\widehat{U}_p$. Estimate the unknown number of factors $K$ by 
\[ \widehat{K} = \sum_{j=1}^p \ind\big(\widehat{\eig}_j \ge \tau\big), \]
where $\tau = \tau_{n,T}$ is a threshold parameter that is of slightly smaller order than $p$. Precise technical conditions on $\tau$ are provided in Theorem \ref{theo:main}, while rules for selecting $\tau$ in practice are discussed in Section \ref{sec:est:tuning:tau}. 

\subsubsection*{Step 2: Approximation of the unknown projection matrix $\bs{\Pi}$}

Let $\widehat{\bs{U}} = (\widehat{U}_1 \ldots \widehat{U}_{\widehat{K}})$ be the matrix of eigenvectors of $\widehat{\bs{\Sigma}}$ that correspond to the $\widehat{K}$ largest eigenvalues $\widehat{\eig}_1 \ge \ldots \ge \widehat{\eig}_{\widehat{K}}$ and define $\widehat{\bs{W}} = \overline{\bs{X}} \widehat{\bs{U}}$. Approximate $\bs{\Pi}$ by 
\[ \widehat{\bs{\Pi}} = \bs{I} - \widehat{\bs{W}} (\widehat{\bs{W}}^\top \widehat{\bs{W}})^{-} \widehat{\bs{W}}^\top. \]

\subsubsection*{Step 3: Estimation of $\bs{\beta}$} 

Run a lasso regression on the transformed data sample $\{ (\widehat{Y}_i, \widehat{\bs{\Phi}}_i): 1 \le i \le n \}$, where $\widehat{Y}_i = \widehat{\bs{\Pi}} (Y_i - \overline{Y})$ and $\widehat{\bs{\Phi}}_i = \widehat{\bs{\Pi}}(\bs{\Phi}_i - \overline{\bs{\Phi}})$. Specifically, define the HD-CCE estimator of $\beta$ by  
\[ \widehat{\beta}_\pen \in \underset{b \in \reals^{d}}{\text{argmin}} \bigg\{ \frac{1}{nT} \sum_{i=1}^n \big\| \widehat{Y}_i - \widehat{\bs{\Phi}}_i b \big\|^2 + \pen \sum_{j=1}^p\sum_{\ell=1}^{L_j} |b_{j\ell}| \bigg\}, \]
where $\pen > 0$ is the penalty parameter of the lasso and we use the notation $b = (b_1^\top,\ldots,b_p^\top)^\top$ with $b_j = (b_{j1},\ldots,b_{jL_j})^\top$ for each $j$. 
It is also possible to define a (sparse) group lasso version of the HD-CCE estimator. However, to keep the exposition as simple as possible, we work with the standard lasso version defined above throughout the paper. Notably, our theoretical results can be extended to the (sparse) group lasso case by rather straightforward adaptions of the technical arguments.

\subsection{Tuning parameter choice}

The HD-CCE estimator $\widehat{\beta}_\pen$ depends on two tuning parameters: the threshold parameter $\tau$ for the estimation of $K$ and the penalty parameter $\pen$ of the lasso.

\subsubsection*{Choice of $\bs{\tau}$}\label{sec:est:tuning:tau}

It can be shown formally that the eigenvalues $\widehat{\eig}_k$ are of order $p$ for $k \le K$ but of much smaller order for $k > K$. Hence, to ensure that $\widehat{K}$ is a consistent estimator of $K$, we need to choose $\tau$ such that it separates the ``large'' eigenvalues of order $p$ (that is, those with $k \le K$) from the ``small'' ones (that is, those with $k > K$). As a practical rule-of-thumb, we regard an eigenvalue $\widehat{\eig}_k$ as ``small'' if $\widehat{\eig}_k/\widehat{\eig}_1 < \alpha$ with some small $\alpha$ (such as $\alpha = 0.05$ or $\alpha = 0.01$). Put differently, we regard $\widehat{\eig}_k$ as ``small'' if it is less than $100\cdot\alpha\%$ of the largest eigenvalue $\widehat{\eig}_1$ in size. This rule-of-thumb results in the choice $\tau = \alpha \widehat{\eig}_1$.\footnote{From a theoretical point of view, we need to let $\alpha = \alpha_{n,T}$ slowly go to $0$ with increasing sample size to make sure that $\tau = \alpha \widehat{\eig}_1$ is of somewhat smaller order than $p$ and thus produces a consistent estimator $\widehat{K}$ of $K$. In practice, however, the sample size is fixed, implying that $\alpha$ is a fixed number as well. We thus do not reflect the dependence of $\alpha$ on $n$ and $T$ in the notation.}


\subsubsection*{Choice of $\bs{\pen}$ for the HD-CCE estimator}\label{sec:est:tuning:lambda}

We choose the penalty parameter $\lambda$ by a version of $M$-fold cross-validation:
Divide the sample $\{(\widehat{Y}_i,\widehat{\bs{\Phi}}_i): i=1,\ldots,n\}$ of the projected data into $M$ folds $\mathcal{F}_1,\ldots,\mathcal{F}_{M}$, where $\mathcal{F}_{m} = \{(\widehat{Y}_i,\widehat{\bs{\Phi}}_i): i \in \mathcal{I}_m \}$ with $\mathcal{I}_m = \{ (m-1) \lfloor n/M \rfloor + 1,\ldots, m \lfloor n/M \rfloor\}$ for $m=1,\ldots,M-1$ and $\mathcal{I}_M = \{ (M-1) \lfloor n/M \rfloor + 1,\ldots, n\}$. Then run standard $M$-fold cross-validation over a grid of $\lambda$-values.\footnote{In our \texttt{R} package, we use the grid chosen by the cross-validation function of the \texttt{glmnet} package.}

%% file: ms_theory.tex
\section{Theoretical results}\label{sec:theoreticalresults}

\subsection{Assumptions}\label{sec:assumptions}

To state our technical assumptions for the components of model \eqref{eq:model-matrix}--\eqref{eq:model-reg-matrix}, let $C < \infty$ denote a generic positive constant that does not depend on $n$, $T$ and $p$. Moreover, for each cross-sectional unit $i$, define $\mathcal{C}_{\varepsilon}^{(i)} = \{ \varepsilon_{it}: 1 \le t \le T \}$ along with $\mathcal{C}_{\varepsilon} = \bigcup_{i=1}^n \mathcal{C}_{\varepsilon}^{(i)}$, and analogously let $\mathcal{C}_{Z}^{(i)} = \{ Z_{it}: 1 \le t \le T \}$ together with $\mathcal{C}_{Z} = \bigcup_{i=1}^n \mathcal{C}_{Z}^{(i)}$. 
\begin{enumerate}[label=(C\arabic*),leftmargin=1.0cm]

\item \label{C:loadings} 
The factor loadings $\{(\gamma_i, \bs{\Gamma}_i): 1 \le i \le n \}$ are independent of $\mathcal{C}_{\varepsilon}$ and $\mathcal{C}_{Z}$. Moreover, they are independent and identically distributed across $i$, with means $\gamma = \ex[\gamma_i]$ and $\bs{\Gamma} = \ex[\bs{\Gamma}_i]$, and satisfy the moment restrictions $\max_{k} \ex[|\gamma_{ik}|^\theta]\leq C$ and $\max_{j,k} \ex[|\Gamma_{i,jk}|^\theta]\leq C$ for some $\theta>8$. 

\item \label{C:eps} 
The family of random variables $\mathcal{C}_{\varepsilon}$ is independent of $\mathcal{C}_{Z}$, and the time series processes $\mathcal{C}_{\varepsilon}^{(i)}$ are mutually independent across $i$. Moreover, for each $t$, the variables $\{\varepsilon_{it}: 1 \le i \le n \}$ are identically distributed. Finally, for all $i$ and $t$,  $\ex[\varepsilon_{it}] = 0$ and $\ex[|\varepsilon_{it}|^\theta] \le C$ for some $\theta > 8$. 

\item \label{C:Z} The processes $\mathcal{C}_{Z}^{(i)}$ are mutually independent across $i$. Furthermore, for each $t$, the variables $\{Z_{it}: 1 \le i \le n \}$ are identically distributed. Finally, for all $i$, $j$ and $t$, $\ex[Z_{it,j}] = 0$ as well as $\ex[|Z_{it,j}|^{\theta}] \le C$ for some $\theta > 8$.  

\item \label{C:phi} For all $i$, $j$, $\ell$ and $t$, $\ex [|\phi_{j\ell}(X_{it,j})|^\theta] \leq C$ for some $\theta > 8$.


\item \label{C:factorssummable} The factors $F_t$ are normalized to be orthonormal, i.e., $(\bs{F}^\top\bs{F})/T = \bs{I}_{K\times K}$. It holds that $\max_{1 \le k \le K} \{ T^{-1} \sum_{t=1}^T|F_{t,k}|^\theta \} \le C$ for some $\theta>8$. 

\item \label{C:id2GAMMATGAMMAEV} 
The minimal and maximal eigenvalue $\eig_{\min}(\bs{\Gamma}^\top \bs{\Gamma}/p)$ and $\eig_{\max}(\bs{\Gamma}^\top \bs{\Gamma}/p)$ of the matrix $\bs{\Gamma}^\top \bs{\Gamma}/p$ are such that $0 < c_{\min} \le \eig_{\min} (\bs{\Gamma}^\top \bs{\Gamma}/p) \le \eig_{\max} (\bs{\Gamma}^\top \bs{\Gamma}/p) \le c_{\max} < \infty$ for some fixed constants $c_{\min}$ and $c_{\max}$. 

\end{enumerate}
\ref{C:loadings}--\ref{C:phi} impose a series of rather standard independence and moment conditions on the model variables. 
As is well-known from the literature on factor models, the orthogonality assumption on the factors in \ref{C:factorssummable} is without loss of generality. Moreover, the second constraint in \ref{C:factorssummable}, according to which $\max_{1 \le k \le K} \{ T^{-1} \sum_{t=1}^T|F_{t,k}|^\theta \} \le C$ is rather mild as well. If $\{F_{t,k}: 1 \le t \le T \}$ were a time series of weakly dependent random variables with sufficiently many moments, then standard concentration bounds would imply that $ T^{-1} \sum_{t=1}^T|F_{t,k}|^\theta \le C < \infty$ with probability approaching 1. Hence, if we think of our factors as realizations of such time series, the second requirement in  \ref{C:factorssummable} will be fulfilled with high probability.
By assuming \ref{C:id2GAMMATGAMMAEV}, we focus on the case of strong factors as in most other works on high-dimensional approximate factor models \citep[see e.g.][]{Fan2013}. 
A profound discussion of weak factor detection can be found in \cite{Onatski2010}.

In addition to the above assumptions, we require the design matrix 
\[\widehat{\bs{\Phi}} =(\widehat{\bs{\Phi}}_1^\top \ldots \widehat{\bs{\Phi}}_n^\top)^\top \in \reals^{nT \times d} \quad \textnormal{with} \quad \widehat{\bs{\Phi}}_i =  \widehat{\bs{\Pi}}(\bs{\Phi}_i - \overline{\bs{\Phi}})\]  to be sufficiently well-behaved in the sense that it fulfills a restricted eigenvalue condition. 
\begin{definition}\label{def:RE-condition}
A matrix $\bs{A} \in \reals^{nT \times d}$ fulfills the restricted eigenvalue condition $\textnormal{RE}(I,{\varphi^2})$ for some index set $I \subseteq \{ (j,\ell): 1 \le j \le p, \, 1 \le \ell \le L_j \}$ and a constant $\varphi^2 > 0$ if  
\[ \| b \|_1 ^2 \le \frac{|I|}{\varphi^2} \frac{\| \bs{A} b \|^2}{nT} \qquad \text{for all } b \text{ with }  \| b_{I^c} \|_1 \le 3 \| b_I \|_1. \]
\end{definition}
\noindent With this definition, the formal assumption on the design matrix $\widehat{\bs{\Phi}}$ reads as follows:
\begin{enumerate}[label=(C\arabic*), leftmargin=1.0cm]
\setcounter{enumi}{6}
\item \label{C:id:PhiCompatibilityCondition} The design matrix $\widehat{\bs{\Phi}}$ satisfies the condition RE$(S,{\varphi^2})$ for some constant $\varphi > 0$ with probability tending to $1$, i.e., there exist non-negative numbers $c_{n,T}$ with $c_{n,T}\to 0$ such that 
\[ \pr\Big( \widehat{\bs{\Phi}} \ \textnormal{fulfills}\  \textnormal{RE}(S, {\varphi^2}) \Big) = 1 -c_{n,T}.\]
\end{enumerate} 
Such a condition is standard in high-dimensional statistics; see \cite{BickelRitovTsybakov2009} or \cite{BuehlmannvandeGeer2011}. Many papers on high-dimensional additive models impose an RE condition without further justification, even though it is far from clear that such a condition holds under reasonable assumptions in such models. There are, however, a few notable exceptions: Both \cite{Meier2009} and \cite{HuandHorrowitzAdditive2006} justify an RE condition in an additive model for cross-sectional data where the component functions are modelled by splines and the covariates have (known) bounded support. Moreover, \cite{scheidegger2023spectral} verify an RE condition in an additive model with hidden confounders and Gaussian cross-sectional design. We verify the RE condition from \ref{C:id:PhiCompatibilityCondition} under simple low-level assumptions in Section \ref{sec:analysisofcomp}. 

Besides \ref{C:loadings}--\ref{C:id:PhiCompatibilityCondition}, we need some restrictions on the dimension $p$ and the sparsity index $s$. In the large-$T$ case, we impose the following conditions on the dimension para\-meters $n$, $T$, $p$, $s$ and $K$: 
\begin{enumerate}[label=(D$_{\ell}$\arabic*), leftmargin=1.15cm]

 \item \label{C:nTp-large} The dimensions $n$, $T$ and $p$ are such that $(npT)^{\frac{1}{\theta-\xi}} = o(\sqrt{n /\log(pT)})$ for some small $\xi>0$ with $\theta$ specified in \ref{C:loadings}--\ref{C:factorssummable}.

\item \label{C:s-large} The set $S=\{(j,\ell): \beta_{j\ell} \ne 0\}$ of non-zero components of $\beta$ has cardinality $s := |S|$ with $s = o(\sqrt{n / \log(pT)} (npT)^{-2/\theta} )$. 
\item \label{C:K-large} The number of factors $K$ is a fixed natural number with $K < T$ and $K \le p$.
\end{enumerate}
\ref{C:nTp-large} is equivalent to $pT=o(n^{\{(\theta-\xi)/2\} - 1} \{\log(pT)\}^{-(\theta-\xi)/2})$. To better understand its implications, suppose that both $p$ and $T$ grow polynomially in $n$, that is, $p = n^a$ and $T = n^b$ for some $a, b > 0$. In this case,  \ref{C:nTp-large} simplifies to the condition that $a + b <\frac{\theta-\xi}{2}-1$. Hence, how fast $p$ (and $T$) can grow relative to $n$ depends on how many moments $\theta$ the model variables have. If all moments exist, meaning that $\theta$ can be chosen as large as desired, then $a$ and $b$ can be arbitrarily large. In particular, for any fixed $b>0$, we can set $a=q(b+1)$ with $q>0$ arbitrarily large, implying that $p=n^{q(b+1)}=(nT)^q$. Hence, if all moments $\theta$ exist, then $p$ can grow as an arbitrary polynomial of the sample size $nT$. If the number of moments $\theta$ is only moderate, in contrast, then \ref{C:nTp-large} implies substantially stronger restrictions on the growth of $p$ (and $T$). 
\ref{C:s-large} imposes constraints on the growth of the sparsity index $s$, i.e., on the number of non-zero components of $\beta$. The constraints again depend on the number of moments $\theta$: the more moments exist, the weaker the restrictions on $s$. In particular, if all moments exist, then $s$ can grow as fast as $\sqrt{n}$ (up to log factors). This restriction on the sparsity index $s$ is more stringent than the restrictions typically imposed in high-dimensional models for cross-sectional i.i.d.\ data, where the sparsity index is usually required to grow slightly more slowly than the square root of the sample size. The more stringent restriction on $s$ is a consequence of the fact that we do not impose any conditions on the time series dependence of the model variables, as discussed in more detail in Remark \ref{remark:largeT-Rate} below. 
In the small-$T$ case, we work with conditions on the dimension parameters $n$, $T$, $p$, $s$ and $K$ which are completely analogous to those for the large-$T$ case: 
\begin{enumerate}[label=(D$_s$\arabic*), leftmargin=1.15cm]
\item  \label{C:nTp-small}   The dimensions $n$ and $p$ are such that $(np)^{\frac{1}{\theta-\xi}} = o(\sqrt{n / \log(p)})$ for some small $\xi>0$ with $\theta$ specified in \ref{C:loadings}--\ref{C:factorssummable}.

\item \label{C:s-small} The set $S=\{(j,\ell): \beta_{j\ell} \ne 0\}$ of non-zero components of $\beta$ has cardinality $s := |S|$ with $s = o(\sqrt{n/\log (p)} (np)^{-2/\theta})$. 
\item \label{C:K-small} The number of factors $K$ is a fixed natural number with $K < T$ and $K \le p$.
\end{enumerate}

\noindent Importantly, the above conditions on the dimension parameters (both in the large-$T$ and the small-$T$ case) allow us to deal with a wide range of scenarios (as long as the tails of the model variables are not too thick, i.e., as long as sufficiently many moments $\theta$ exist). In particular, we can deal with ``standard low-dimensional'' scenarios where $p$ is small and fixed, with ``moderately high-dimensional'' scenarios where $p$ is fairly large but still smaller than the sample size $nT$ and with ``truly high-dimensional'' scenarios where $p$ exceeds the sample size $nT$.

\subsection{Identification}\label{sec:identification}

The same arguments as in \cite{Mruecker2025CCE} yield that the number of factors $K$ and the sparse parameter vector $\beta$ are identified in model \eqref{eq:model-matrix}--\eqref{eq:model-reg-matrix}. More formally, the following results hold.  
\begin{lemma}\label{lemma:id-K}
Let \ref{C:loadings}--\ref{C:id2GAMMATGAMMAEV} be satisfied. 
Moreover, assume \ref{C:nTp-large}--\ref{C:K-large} in the large-$T$ case and \ref{C:nTp-small}--\ref{C:K-small} in the small-$T$ case. Then the number of factors $K$ is unique for sufficiently large $n$.
\end{lemma}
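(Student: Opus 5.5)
The plan is to identify $K$ with the rank of a population matrix that does not depend on which factor representation one chooses. Suppose the regressors admit two representations of the form \eqref{eq:model-reg-matrix}, $\bs{X}_i = \bs{F}\bs{\Gamma}_i^\top + \bs{Z}_i = \bs{F}'\bs{\Gamma}_i'^\top + \bs{Z}_i'$. The first has $K$ factors and the second has $K'$ factors, and both satisfy \ref{C:loadings}--\ref{C:id2GAMMATGAMMAEV} together with \ref{C:K-large} (resp.\ \ref{C:K-small}). I will show that $K=K'$.

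\textbf{Step 1: a representation-free matrix.} Take expectations in \eqref{eq:model-reg-matrix}. By \ref{C:factorssummable} the factors are non-random. By \ref{C:loadings} the loadings are i.i.d.\ with mean $\bs{\Gamma}$, and by \ref{C:Z} we have $\ex[\bs{Z}_i]=0$. Hence
\[ \ex[\bs{X}_i] = \bs{F}\bs{\Gamma}^\top \qquad \text{for every } i. \]
The left-hand side depends only on the distribution of the observed regressors. The same computation for the primed representation therefore gives $\bs{F}\bs{\Gamma}^\top = \bs{F}'\bs{\Gamma}'^\top$.

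\textbf{Step 2: the rank equals the number of factors.} By \ref{C:factorssummable}, $\bs{F}^\top\bs{F}/T=\bs{I}_K$, so $\bs{F}$ has full column rank $K$; this uses $K<T$. By \ref{C:id2GAMMATGAMMAEV}, $\eig_{\min}(\bs{\Gamma}^\top\bs{\Gamma}/p)\ge c_{\min}>0$, so $\bs{\Gamma}\in\reals^{p\times K}$ also has full column rank $K$; this uses $K\le p$. It follows that $\textnormal{rank}(\bs{F}\bs{\Gamma}^\top)=\textnormal{rank}(\bs{\Gamma}^\top)=K$. Applying the same argument to the primed representation gives $\textnormal{rank}(\bs{F}'\bs{\Gamma}'^\top)=K'$, so Step 1 yields $K=K'$.

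\textbf{Step 3: the role of ``sufficiently large $n$''.} Since $p=p(n)$ and $T=T(n)$, the eigenvalue bounds in \ref{C:id2GAMMATGAMMAEV} and the conditions $K<T$, $K\le p$ must hold along the sequence. If the representation is instead specified through the sample quantity $\widehat{\bs{\Sigma}}=T^{-1}\overline{\bs{X}}^\top\overline{\bs{X}}$, as in \cite{Mruecker2025CCE}, I would follow their argument. In that version, $\overline{\bs{X}}=\bs{F}\overline{\bs{\Gamma}}^\top+\overline{\bs{Z}}$, and one uses the moment bounds and the growth condition \ref{C:nTp-large} (resp.\ \ref{C:nTp-small}) to show that $\|\overline{\bs{\Gamma}}-\bs{\Gamma}\|^2/p$ and $\|\overline{\bs{Z}}\|^2/(pT)$ are $o_p(1)$. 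Weyl's inequality then shows that exactly $K$ eigenvalues of $\widehat{\bs{\Sigma}}/p$ stay above $c_{\min}/2$ while the rest vanish, and this forces any admissible $K'$ to equal $K$ once $n$ is large.

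\textbf{Expected obstacle.} The main obstacle is this last, sample-based version. There one must control $\|\overline{\bs{Z}}\|$ in spectral norm with only $\theta>8$ moments, using a truncation argument of the kind suggested by the exponent $(npT)^{1/(\theta-\xi)}$. The population rank argument of Steps 1 and 2 is elementary.
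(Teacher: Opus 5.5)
Your Steps 1--2 are a complete and correct proof. They take a different and more elementary route than the one the paper relies on. The paper gives no argument of its own and defers to Theorem 3.1 of \cite{Mruecker2025CCE}. The way the lemma is set up suggests that the referenced argument works through the eigenvalue separation of the second-moment matrix of the cross-sectional averages, i.e.\ the mechanism behind Lemmas \ref{lemma1:eigenstructure}--\ref{lemma2:eigenstructure}. Two features point that way: the growth conditions \ref{C:nTp-large}/\ref{C:nTp-small} appear among the hypotheses, and uniqueness is asserted only for sufficiently large $n$. In that argument, $K$ is the number of eigenvalues of order $p$ of a data-determined matrix, and two admissible representations with different numbers of factors would give incompatible eigen-gaps once $n$ is large.

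Your first-moment argument bypasses all of this. Since $\ex[\bs{X}_i]=\bs{F}\bs{\Gamma}^\top$ is a functional of the distribution of the observables and has rank exactly $K$, you get uniqueness for every $n$. You use only \ref{C:loadings}, \ref{C:Z}, \ref{C:factorssummable}, \ref{C:id2GAMMATGAMMAEV} and the convention that the factors are non-random. That convention is stated in Section \ref{sec:model}, not in \ref{C:factorssummable} as you wrote. You do not even need \ref{C:K-large}/\ref{C:K-small}. Full column rank of $\bs{F}$ follows from $\bs{F}^\top\bs{F}=T\bs{I}_K$, and that of $\bs{\Gamma}$ follows from \ref{C:id2GAMMATGAMMAEV}; these already force $K\le T$ and $K\le p$.

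Your route also has a useful by-product. Because $\bs{\Gamma}^\top$ has full row rank, the column space of $\ex[\bs{X}_i]$ equals that of $\bs{F}$, so the oracle projection $\bs{\Pi}$ is itself representation-free. This is what one needs next when identifying $\beta$ in Lemma \ref{lemma:id-beta}. What the eigenvalue route buys in exchange is that identification of $K$ and consistency of $\widehat{K}$ rest on one and the same eigen-gap.

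Step 3 and your \emph{expected obstacle} can be dropped. They concern consistency of $\widehat{K}$, not uniqueness of $K$, so they bring estimation into an identification statement. The spectral control of $\overline{\bs{Z}}$ is not a real difficulty in any case. The chain $\|\overline{\bs{Z}}\|\le\|\overline{\bs{Z}}\|_F\le\sqrt{pT}\max_{t,j}|\overline{Z}_{t,j}|$, combined with Lemma \ref{lemma:generictruncationarguments}\ref{lemma:generictruncationarguments2}, gives the bound of Lemma \ref{lemmaA:barZ}. That is where the truncation and the growth condition enter.
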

\begin{lemma}\label{lemma:id-beta}
Let \ref{C:loadings}--\ref{C:id2GAMMATGAMMAEV} be fulfilled. Moreover, assume \ref{C:nTp-large}--\ref{C:K-large} in the large-$T$ case and \ref{C:nTp-small}--\ref{C:K-small} in the small-$T$ case. If the oracle design matrix $\bs{\Phi}^\perp$ from \eqref{eq:oracle-design-matrix} satisfies the RE($I,\varphi^2$) condition for some $\varphi > 0$ and all $|I|\leq 2s$ with probability tending to $1$, then the sparse parameter vector $\beta$ is unique for sufficiently large $n$.
\end{lemma}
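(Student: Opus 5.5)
Suppose that $\beta$ and $\beta^\ast$ are both sparse parameter vectors with at most $s$ nonzero entries each, and that both are compatible with model \eqref{eq:model-matrix}--\eqref{eq:model-reg-matrix}, possibly with different loadings $\gamma_i,\gamma_i^\ast$. The factor matrix $\bs{F}$ is the same in both representations, because $K$ is unique by Lemma \ref{lemma:id-K} and $\bs{F}$ is pinned down up to rotation by the regressor equation \eqref{eq:model-reg-matrix}. The error terms $\varepsilon_i,\varepsilon_i^\ast$ are also allowed to differ. The aim is to show that $\beta=\beta^\ast$ with probability tending to one.

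\textbf{Step 1.} Apply the oracle projection $\bs{\Pi}$ from \eqref{eq:oracle-Pi} to both representations. Since $\bs{\Pi}\bs{F}=0$, subtracting the two projected equations as in \eqref{eq:model-matrix-projected} gives
\[ \bs{\Pi}\bs{\Phi}_i(\beta-\beta^\ast) = \bs{\Pi}(\varepsilon_i^\ast-\varepsilon_i) \quad \text{for all } i. \]

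\textbf{Step 2.} Rule out the right-hand side. Identification is understood as in \cite{Mruecker2025CCE}: the error satisfies $\ex[\varepsilon_{it}]=0$ and is independent of the regressors and loadings by \ref{C:loadings}--\ref{C:eps}. Since $\bs{\Pi}$ is deterministic (the factors are non-random), taking conditional expectations given $\{\bs{X}_i,\gamma_i,\bs{\Gamma}_i\}$ yields $\bs{\Pi}\bs{\Phi}_i(\beta-\beta^\ast)=0$ for all $i$. Stacking over $i$ gives $\bs{\Phi}^\perp(\beta-\beta^\ast)=0$.

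\textbf{Step 3.} This step is the key one and is purely deterministic. Let $b=\beta-\beta^\ast$ and $I=\supp(\beta)\cup\supp(\beta^\ast)$, so that $|I|\le 2s$ and $b_{I^c}=0$. The cone constraint $\|b_{I^c}\|_1\le 3\|b_I\|_1$ then holds trivially. On the event that $\bs{\Phi}^\perp$ satisfies RE$(I,\varphi^2)$, which has probability tending to one by assumption, we get
\[ \|b\|_1^2\le \frac{|I|}{\varphi^2}\,\frac{\|\bs{\Phi}^\perp b\|^2}{nT}=0, \]
so $b=0$. Hence $\beta$ is unique for sufficiently large $n$, with probability approaching one.

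The main obstacle is making Step 2 rigorous. One has to specify precisely what alternative representations are admissible (the error must remain mean zero conditionally on the design, and the same $\bs{F}$ must be used) and show that the projected difference of errors cannot lie in the range of $\bs{\Phi}^\perp$ restricted to $2s$-sparse directions. I would handle this exactly as in the linear case of \cite{Mruecker2025CCE}: define identification through the population moment condition $\ex[\bs{\Phi}_i^\top\bs{\Pi}(Y_i-\bs{\Phi}_i b)]=0$, or equivalently through the minimizer of the projected least squares criterion. Uniqueness then reduces to the RE bound in Step 3, and the dimension conditions \ref{C:nTp-large}--\ref{C:K-large} (or their small-$T$ analogues \ref{C:nTp-small}--\ref{C:K-small}) are needed only through Lemma \ref{lemma:id-K}.
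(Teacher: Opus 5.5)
Your proposal is correct and follows the route the paper indicates, namely the linear-case argument of \cite{Mruecker2025CCE} with $\bs{X}_i$ replaced by $\bs{\Phi}_i$: project out $\bs{F}$ with the oracle $\bs{\Pi}$, reduce to $\bs{\Phi}^\perp(\beta-\beta^\ast)=0$, and apply RE$(I,\varphi^2)$ on $I=\supp(\beta)\cup\supp(\beta^\ast)$ with $|I|\le 2s$. Two points need tightening: in Step 2, condition on $\bs{X}_i$ alone rather than on $(\bs{X}_i,\gamma_i,\bs{\Gamma}_i)$, because an alternative error $\varepsilon_i^\ast$ is independent of the regressors but need not be independent of the original loading $\gamma_i$; and since $\beta-\beta^\ast$ is non-random, $\bs{\Phi}^\perp(\beta-\beta^\ast)=0$ almost surely together with $\pr(\textnormal{RE})>0$ for large $n$ gives $\beta=\beta^\ast$ exactly, not merely with probability approaching one, which is where ``for sufficiently large $n$'' comes from.
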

\noindent The assumption that $\bs{\Phi}^\perp$ satisfies the RE($I,\varphi^2$) condition for all $|I|\leq 2s$ with probability tending to $1$ can be verified by the same arguments as \ref{C:id:PhiCompatibilityCondition}. We refer to Section \ref{sec:analysisofcomp} for a thorough treatment of \ref{C:id:PhiCompatibilityCondition}. The proofs of Lemmas \ref{lemma:id-K} and \ref{lemma:id-beta} are omitted as they are virtually identical to the proofs of Theorems 3.1 and 3.2 from \cite{Mruecker2025CCE}.

\subsection{Convergence rates}\label{sec:convergenceresults}

We now derive the convergence rate of the HD-CCE estimator $\widehat{\beta}_\lambda$, assuming that the RE condition \ref{C:id:PhiCompatibilityCondition} on the design $\widehat{\bs{\Phi}}$ is satisfied. A thorough analysis of \ref{C:id:PhiCompatibilityCondition} is provided in the subsequent section. There, we in particular show that \ref{C:id:PhiCompatibilityCondition} is fulfilled under reasonable low-level assumptions. 

\begin{theorem}\label{theo:main} 
Let \ref{C:loadings}--\ref{C:id:PhiCompatibilityCondition} be fulfilled. In addition, assume \ref{C:nTp-large}--\ref{C:K-large} in the large-$T$ case and \ref{C:nTp-small}--\ref{C:K-small} in the small-$T$ case. Pick the threshold parameter $\tau$ such that $\tau = o(p)$ and $\{p \sqrt{\log(pT)}/\sqrt{n}\}  / \tau = o(1)$. Moreover, choose the penalty constant as $\lambda = h_n \nu_{n,T}$, where
$\nu_{n,T} = \sqrt{\log(pT)/n} \, (npT)^{2/\theta}$
and $\{h_n\}$ is a slowly diverging sequence, e.g., $h_n =\log \log(n)$. Then 
\[\|\widehat{\beta}_{\lambda} - \beta \|_1 =O_p\left(  s \lambda   \right). \] 
\end{theorem}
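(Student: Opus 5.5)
The argument follows the standard lasso oracle-inequality route, conditioned on the RE event. The only model-specific work is bounding the effective noise term after the estimated projection.

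\textbf{Step 1: basic inequality.} Substituting the centred model \eqref{eq:model-centred} into $\widehat{Y}_i = \widehat{\bs{\Pi}}(Y_i-\overline{Y})$ gives
\[
\widehat{Y}_i = \widehat{\bs{\Phi}}_i\beta + \widehat{\bs{\Pi}}\bs{F}(\gamma_i-\overline{\gamma}) + \widehat{\bs{\Pi}}(\varepsilon_i-\overline{\varepsilon}) =: \widehat{\bs{\Phi}}_i\beta + \widehat{e}_i .
\]
Comparing the lasso objective at $\widehat{\beta}_\lambda$ and at $\beta$ yields, with $\Delta = \widehat{\beta}_\lambda-\beta$,
\[
\frac{1}{nT}\|\widehat{\bs{\Phi}}\Delta\|^2 + \lambda\|\widehat{\beta}_\lambda\|_1 \le \frac{2}{nT}\sum_i \widehat{e}_i^\top\widehat{\bs{\Phi}}_i\Delta + \lambda\|\beta\|_1 .
\]
The key quantity is therefore $\mathcal{E} := \|\frac{1}{nT}\sum_i \widehat{\bs{\Phi}}_i^\top \widehat{e}_i\|_\infty$. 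Suppose $2\mathcal{E}\le\lambda/2$ holds with probability tending to one. Then the usual argument gives two things: the cone condition $\|\Delta_{S^c}\|_1\le 3\|\Delta_S\|_1$, and
\[
\frac{1}{nT}\|\widehat{\bs{\Phi}}\Delta\|^2 + \frac{\lambda}{2}\|\Delta\|_1 \le 2\lambda\|\Delta_S\|_1 .
\]
On the event of \ref{C:id:PhiCompatibilityCondition}, $\|\Delta\|_1^2\le (s/\varphi^2)\|\widehat{\bs{\Phi}}\Delta\|^2/(nT)$. Using $2\lambda\|\Delta_S\|_1\le 2\lambda\|\Delta\|_1$ and the elementary inequality $ab\le a^2/4+b^2$, we obtain $\|\Delta\|_1\le C s\lambda/\varphi^2$. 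It then remains to show $\mathcal{E}=O_p(\nu_{n,T})$. Since $h_n\to\infty$, this gives $\pr(2\mathcal{E}\le\lambda/2)\to1$.

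\textbf{Step 2: replace $\widehat{\bs{\Pi}}$ by $\bs{\Pi}$.} Using the choice of $\tau$, I would first show $\pr(\widehat{K}=K)\to1$. The ingredients are: the top $K$ eigenvalues of $\widehat{\bs{\Sigma}}$ are of order $p$ by \ref{C:id2GAMMATGAMMAEV}, since $\overline{\bs{X}} = \bs{F}\overline{\bs{\Gamma}}^\top+\overline{\bs{Z}}$ with $\overline{\bs{\Gamma}}\approx\bs{\Gamma}$; and the remaining eigenvalues are $O_p(p\sqrt{\log(pT)/n})$, controlled by $\|\overline{\bs{Z}}\|$ and $\|\overline{\bs{\Gamma}}-\bs{\Gamma}\|$ via the Weyl inequality. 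On $\{\widehat{K}=K\}$, a Davis--Kahan or perturbation argument then gives
\[
\|\widehat{\bs{\Pi}}-\bs{\Pi}\| = O_p\big(\sqrt{\log(pT)/n}\,(\cdot)\big),
\]
up to the moment factors $(npT)^{1/\theta}$ coming from maximal inequalities. These arguments transfer essentially verbatim from \cite{Mruecker2025CCE}, since they concern only $\overline{\bs{X}}$ and not $\bs{\Phi}_i$.

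\textbf{Step 3: bound the noise terms.} Split $\mathcal{E}$ into pieces.
\begin{itemize}
\item The leading piece is $\|(nT)^{-1}\sum_i(\bs{\Phi}_i-\overline{\bs{\Phi}})^\top\bs{\Pi}(\varepsilon_i-\overline{\varepsilon})\|_\infty$. Conditional on the loadings and $\mathcal{C}_Z$, the summands are independent across $i$ with mean zero, by \ref{C:loadings}--\ref{C:eps}. No time-series dependence is assumed, so I treat $T^{-1}\sum_t(\cdot)$ as a single summand per $i$. A truncation at level $(npT)^{1/\theta}$, Bernstein/Hoeffding bounds, and a union bound over the $dT$-type index set give the rate $\sqrt{\log(pT)/n}\,(npT)^{2/\theta}$. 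The truncation remainder is handled by Markov's inequality using \ref{C:phi}, \ref{C:eps} and \ref{C:nTp-large}.
\item The factor piece $\widehat{\bs{\Pi}}\bs{F}(\gamma_i-\overline{\gamma})$ equals $(\widehat{\bs{\Pi}}-\bs{\Pi})\bs{F}(\gamma_i-\overline{\gamma})$, because $\bs{\Pi}\bs{F}=0$.
\item The remaining error piece is $(\widehat{\bs{\Pi}}-\bs{\Pi})(\varepsilon_i-\overline{\varepsilon})$.
\end{itemize}
The last two pieces are bounded by $\max_{j\ell}\|(nT)^{-1/2}\cdots\|$-type Cauchy--Schwarz estimates times $\|\widehat{\bs{\Pi}}-\bs{\Pi}\|$. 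Here the nonlinearity enters only through moment bounds on $\phi_{j\ell}(X_{it,j})$ from \ref{C:phi}, not through any structure.

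\textbf{Main obstacle.} I expect the hard part to be the factor term $\frac{1}{nT}\sum_i\widehat{\bs{\Phi}}_i^\top(\widehat{\bs{\Pi}}-\bs{\Pi})\bs{F}(\gamma_i-\overline{\gamma})$. Unlike the linear case, $\bs{\Phi}_i$ is correlated with $\gamma_i$ through $\bs{\Gamma}_i$, and it is not annihilated by projection, so there is no mean-zero structure to exploit. My first attempt would be to expand $\widehat{\bs{\Pi}}-\bs{\Pi}$ to first order as a term linear in the averages $\overline{\bs{Z}}$ and $\overline{\bs{\Gamma}}-\bs{\Gamma}$. These are $O_p(n^{-1/2})$ quantities depending on all units. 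I would then bound the product with $n^{-1}\sum_i\bs{\Phi}_i^\top\bs{F}(\gamma_i-\overline{\gamma})$, which is $O_p(1)$ entrywise with maximal-inequality factors $(npT)^{1/\theta}$. The resulting order $\sqrt{\log(pT)/n}\,(npT)^{2/\theta}$ matches $\nu_{n,T}$, and the second-order remainder of the expansion is negligible by \ref{C:nTp-large}.
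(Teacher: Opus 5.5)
Your proposal is correct and is essentially the paper's proof. It uses the basic lasso inequality on $\mathcal{T}_\lambda\cap\mathcal{T}_{\textnormal{RE}}$ (Proposition~\ref{proposition:ACCEBound}), then $\widehat{\bs{\Pi}}=\bs{\Pi}-\widehat{\bs{R}}$ with $\|\widehat{\bs{R}}\|=O_p(\sqrt{\log(pT)/n})$ (your Davis--Kahan route is an equivalent substitute for Lemmas~\ref{lemma:projectionmatrixddecomposable}--\ref{lemma:NormOfRemainderR}), and finally splits the noise into a $\bs{\Pi}$-part controlled by maximal inequalities and $\widehat{\bs{R}}$-parts controlled by Cauchy--Schwarz, as in Proposition~\ref{prop:convergenceofsetSl}.

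The ``main obstacle'' you flag is not one. As in the paper, the crude bound $\|\widehat{\bs{\Pi}}\bs{F}\|\,T^{-1/2}\max_{i,t,j,\ell}|\widetilde{\phi}_{j\ell}(X_{it,j})|\max_i\|\widetilde{\gamma}_i\|=O_p\big(\sqrt{\log(pT)/n}\,(npT)^{1/\theta}n^{1/\theta}\big)$ already lies within $O_p(\nu_{n,T})$. So the correlation between $\bs{\Phi}_i$ and $\gamma_i$ is irrelevant, and no first-order expansion is needed.

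What does matter is dropping your hedge: $\|\widehat{\bs{\Pi}}-\bs{\Pi}\|$ carries no extra $(npT)^{1/\theta}$ factor, because it is driven by $\|\overline{\bs{Z}}\|$ and $\|\overline{\bs{\Gamma}}-\bs{\Gamma}\|$, which are bounded via Frobenius norms and Lemma~\ref{lemma:generictruncationarguments}\ref{lemma:generictruncationarguments2}. If such a factor were present, it would push the $\widehat{\bs{R}}$-terms above $\nu_{n,T}$.
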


\noindent The proof of Theorem \ref{theo:main} is given in the Appendix. 

\begin{remark}
Replacing the RE$(I,\varphi^2)$ condition in \ref{C:id:PhiCompatibilityCondition} by the slightly stronger RE condition
\[\|b\|_2^2 \leq \frac{1}{\varphi^2}\frac{\|\bs{A}b\|^2}{nT} \quad \textnormal{for all}\ b \ \textnormal{with}\ \|b_{I^c}\|_1 \leq 3 \|b_I\|_1, \] 
it is straightforward to modify the proof of Theorem \ref{theo:main} and to derive the $\ell_2$-rate
\[\|\widehat{\beta}_{\lambda} - \beta \|_2 =O_p\left(  \sqrt{s}\lambda \right).\] 
\end{remark}

\begin{remark}\label{remark:largeT-Rate}
The rate derived in Theorem \ref{theo:main} can be written as
\[ \|\widehat{\beta}_{\lambda} - \beta \|_1 =O_p\left(  q_{n,T} \, \frac{s}{\sqrt{n}} \right) \quad \text{with} \quad q_{n,T} = h_n \sqrt{\log(pT)} (npT)^{2/\theta}. \]
Hence, the rate equals $s/\sqrt{n}$ times a factor $q_{n,T}$ which grows with the sample size. The larger $\theta$, i.e., the more moments the model variables have, the more slowly the growth of $q_{n,T}$. In the extreme case where all moments exist, i.e, where $\theta$ can be chosen arbitrarily large, $q_{n,T}$ becomes a product of log-factors and thus diverges rather slowly. Consequently, as long as the model variables have sufficiently many moments, the rate comes close to $s/\sqrt{n}$. \\
Interestingly, in the large-$T$ case, this rate does not improve as $T$ increases. 
The reason is as follows: our technical assumptions in Section \ref{sec:assumptions} do not impose any constraints on the time series dependencies of the model variables, specifically, of the variables $Z_{it}$ and $\varepsilon_{it}$. Hence, the time series dependencies are potentially very strong, implying that a larger time series length $T$ does not yield additional information. As a result, the rate does not improve with larger $T$. 
It may be possible to speed up the rate by imposing weak dependence conditions (such as strong mixing conditions) on the model variables $Z_{it}$ and $\varepsilon_{it}$. However, we conjecture that this is extremely difficult (if possible at all) for the following reason: the projected design matrix $\widehat{\bs{\Pi}} \bs{\Phi}_i$ contains the factors and their loadings in a complicated nonlinear way. Therefore, the transformed regressors (i.e., the rows of the matrix $\widehat{\bs{\Pi}} \bs{\Phi}_i$) will in general exhibit complex time series dependencies even if the model variables $Z_{it}$ are weakly dependent (or entirely independent).

\end{remark}



\subsection{Analysis of the RE condition}\label{sec:analysisofcomp}

We now have a closer look at condition \ref{C:id:PhiCompatibilityCondition}, according to which the projected design matrix $\widehat{\bs{\Phi}}$ satisfies the RE$(S,\varphi^2)$ condition from Definition \ref{def:RE-condition} with probability tending to $1$. 
The overall strategy of our analysis is as follows: we first relate the RE condition to a more tractable population version and then verify this population version under suitable assumptions. 
We begin by relating the RE$(S,\varphi^2)$ condition on $\widehat{\bs{\Phi}}$ to the following population level condition: there exists a constant $c^{(1)} > 0$ such that for all $T$ and $p$, 
\begin{equation}\label{eq:CC-theoretical}
\frac{1}{T} \ex\left[\|\bs{\Pi} (\bs{\Phi}_1 -  \ex[\bs{\Phi}_1])v\|^2\right] \geq c^{(1)} \sum_{j=1}^p\sum_{\ell=1}^{L_j} |v_{j\ell}|^2
\quad \text{for all } v\in \reals^{d},
\end{equation}
where $\bs{\Pi}$ is the oracle projection matrix defined in \eqref{eq:oracle-Pi}. 
\begin{theorem}\label{prop:compatibility} 
Let \ref{C:loadings}--\ref{C:id2GAMMATGAMMAEV} be fulfilled. Moreover, assume \ref{C:nTp-large}--\ref{C:K-large} in the large-$T$ case and \ref{C:nTp-small}--\ref{C:K-small} in the small-$T$ case.  If condition \eqref{eq:CC-theoretical} holds, then the $RE(S,\varphi^2)$ condition on $\widehat{\bs{\Phi}}$ is satisfied with probability tending to $1$, that is,  \ref{C:id:PhiCompatibilityCondition} is fulfilled. 
\end{theorem}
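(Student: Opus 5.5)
Starting from the population condition \eqref{eq:CC-theoretical}, the argument moves to the empirical RE condition on $\widehat{\bs{\Phi}}$. Write $\widehat{\bs{\Psi}} = \widehat{\bs{\Phi}}^\top\widehat{\bs{\Phi}}/(nT)$ and $\bs{\Psi} = T^{-1}\ex[(\bs{\Phi}_1-\ex[\bs{\Phi}_1])^\top \bs{\Pi} (\bs{\Phi}_1-\ex[\bs{\Phi}_1])]$. The standard perturbation argument (as in \cite{BuehlmannvandeGeer2011}) gives, for every $b$ in the cone $\|b_{S^c}\|_1\le 3\|b_S\|_1$,
\[ \frac{\|\widehat{\bs{\Phi}} b\|^2}{nT} = b^\top \widehat{\bs{\Psi}} b \ge b^\top\bs{\Psi} b - \|\widehat{\bs{\Psi}}-\bs{\Psi}\|_{\max}\|b\|_1^2 . \]
By \eqref{eq:CC-theoretical} and Cauchy--Schwarz on the cone,
\[ b^\top\bs{\Psi}b \ge c^{(1)}\|b\|^2 \ge c^{(1)}\|b_S\|^2 \ge \frac{c^{(1)}}{s}\|b_S\|_1^2 \ge \frac{c^{(1)}}{16 s}\|b\|_1^2 . \]
Hence RE$(S,\varphi^2)$ holds with $\varphi^2 = c^{(1)}/32$ as soon as $s\,\|\widehat{\bs{\Psi}}-\bs{\Psi}\|_{\max} \le c^{(1)}/32$. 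The whole proof therefore reduces to showing
\[ \|\widehat{\bs{\Psi}}-\bs{\Psi}\|_{\max} = o_p(1/s). \]
By \ref{C:s-large} (resp.\ \ref{C:s-small}) it suffices to prove the bound $O_p(\sqrt{\log(pT)/n}\,(npT)^{2/\theta})$.

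The deviation splits into two parts. Let $\widetilde{\bs{\Psi}} = (nT)^{-1}\sum_i (\bs{\Phi}_i-\overline{\bs{\Phi}})^\top \bs{\Pi}(\bs{\Phi}_i-\overline{\bs{\Phi}})$ be the oracle version with the true projection.

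\textbf{(a) Projection error.} On the event $\{\widehat K = K\}$, whose probability tends to one by the eigenvalue analysis behind the choice of $\tau$ (the same as in \cite{Mruecker2025CCE}), we have $\|\widehat{\bs{\Pi}}-\bs{\Pi}\| = O_p(\sqrt{\log(pT)/n}\,(npT)^{1/\theta})$ or better. This follows because $\overline{\bs{X}} = \bs{F}\overline{\bs{\Gamma}}^\top + \overline{\bs{Z}}$, and $\overline{\bs{\Gamma}}^\top\widehat{\bs{U}}$ is invertible with eigenvalues of order $\sqrt{p}$ by \ref{C:id2GAMMATGAMMAEV}, while $\|\overline{\bs{Z}}\|/\sqrt{pT}$ is of order $n^{-1/2}$ up to logs. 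Then
\[ \big|\widehat{\bs{\Psi}}_{jk}-\widetilde{\bs{\Psi}}_{jk}\big| \le \|\widehat{\bs{\Pi}}-\bs{\Pi}\|\,\max_{j}\frac{1}{nT}\sum_i\|(\bs{\Phi}_i-\overline{\bs{\Phi}})_{(j)}\|^2 , \]
and the maximum is $O_p((npT)^{2/\theta})$ by \ref{C:phi} and a union bound. Note that $\widehat{\bs{\Pi}}$ need only be close to $\bs{\Pi}$ in operator norm, since it acts on $T$-vectors; no structure of $\bs{\Phi}$ relative to $\bs{F}$ is needed here.

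\textbf{(b) Sampling error.} Write
\[ \widetilde{\bs{\Psi}} = \frac{1}{nT}\sum_i (\bs{\Phi}_i-\ex\bs{\Phi}_i)^\top\bs{\Pi}(\bs{\Phi}_i-\ex\bs{\Phi}_i) + R , \]
where the remainder $R$ involves $\overline{\bs{\Phi}}-\ex\bs{\Phi}_1$ and is of smaller order. Conditional on the factors, the matrices $\bs{\Phi}_i$ are i.i.d.\ across $i$, since $X_{it} = \bs{\Gamma}_iF_t + Z_{it}$ with $(\bs{\Gamma}_i,\mathcal{C}_Z^{(i)})$ i.i.d.\ by \ref{C:loadings} and \ref{C:Z}. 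Hence each entry is an average of $n$ i.i.d.\ variables $T^{-1}\phi_{(j)}^\top \bs{\Pi}\phi_{(k)}$. These have $\theta/2$ moments uniformly, because $|T^{-1}a^\top\bs{\Pi}b| \le T^{-1}\|a\|\|b\|$ and \ref{C:phi} holds. Their mean is exactly $\bs{\Psi}_{jk}$, as all units share the same distribution.

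To control all $d^2$ entries at once, use truncation at level $(npT)^{2/\theta}$ combined with Bernstein/Hoeffding for the truncated part, and Markov for the tails using $\theta>8$ and \ref{C:nTp-large}. This gives the max-norm rate $\sqrt{\log(pT)/n}\,(npT)^{2/\theta}$.

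The main obstacle is step (b) together with the uniformity over $d^2$ entries under only polynomial moments and arbitrary time dependence. The dependence is harmless only because the averaging is over $i$; the time index is treated as fixed inside each summand, which is also why no rate improvement in $T$ appears. Combining (a) and (b) with the reduction in the first paragraph yields \ref{C:id:PhiCompatibilityCondition}.
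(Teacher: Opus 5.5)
Your proposal is correct and follows essentially the same route as the paper. Both reduce everything to entrywise deviations times $\|v\|_1^2 \le 16\|v_S\|_1^2$ on the cone, handle the projection error via $\widehat{\bs{\Pi}} = \bs{\Pi} - \widehat{\bs{R}}$, handle the sampling error by cross-sectional concentration (your truncation-plus-Hoeffding on each whole quadratic form is a slightly cleaner substitute for the paper's $Q_A/Q_B$ split), and close with \ref{C:s-large}/\ref{C:s-small}. The one point to tighten is step (a): you need the sharp rate $\|\widehat{\bs{\Pi}} - \bs{\Pi}\| = \|\widehat{\bs{R}}\| = O_p(\sqrt{\log(pT)/n})$, which your own $\|\overline{\bs{Z}}\|$ sketch delivers, because with the extra factor $(npT)^{1/\theta}$ you stated, the projection term would be of order $\sqrt{\log(pT)/n}\,(npT)^{3/\theta}$, and the sparsity conditions do not make that $o(1/s)$.
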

\noindent We next verify the population version \eqref{eq:CC-theoretical} of the RE condition for the case of Gaussian regressors and polynomial functions $\phi_{j\ell}$. 
\begin{theorem}\label{prop:compatibility-parametric}
Assume \ref{C:loadings}--\ref{C:id2GAMMATGAMMAEV} and the following:
\begin{enumerate}[label=(\alph*),leftmargin=0.75cm, itemsep=0pt]
\item \label{cond:GammaGaussian} 
For each $i$, the vectorized version of $\bs{\Gamma}_i$ (which results from stacking the columns of $\bs{\Gamma}_i$) is Gaussian. 
\item \label{cond:ZGaussian} The vectors $Z_{it}$ are independent and identically distributed across $i$ and $t$. For each $i$ and $t$, $Z_{it}$ is a centred Gaussian random vector with covariance matrix $\bs{\Sigma}_Z$, where $\lambda_{\textnormal{min}}(\bs{\Sigma}_Z)\geq c_Z$ for some positive constant $c_Z$.
\item \label{cond:factorsbounded} The factors $F_t$ are uniformly bounded, i.e., $|F_{t,k}|\leq C  < \infty$ for some $C>0$.
\item \label{cond:polynomoialdictionary} The functions $\phi_{j\ell}$ are polynomial, in particular, $\phi_{j\ell}(x) = x^\ell$ for $1 \le \ell \le L_j$ and all $j$. 
\end{enumerate} 
Under these conditions, \eqref{eq:CC-theoretical} is fulfilled.
\end{theorem}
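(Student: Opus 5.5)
The plan is to condition on the loadings $\bs{\Gamma}_1$, use the independence of the $Z_{1t}$ across $t$ to turn $\|\bs{\Pi}\cdot\|^2$ into a diagonal sum, and then bound from below the conditional variance of a one-dimensional polynomial of a shifted Gaussian, uniformly over shifts in a compact set.

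\textbf{Step 1 (reduction to a conditional-variance sum).} Fix $v \in \reals^d$ and write $v_j = (v_{j1},\ldots,v_{jL_j})^\top$. Put $h_j(x) = \sum_{\ell} v_{j\ell} x^\ell$ and $g(x) = \sum_j h_j(x_j)$. Then $(\bs{\Phi}_1 - \ex[\bs{\Phi}_1])v = w$ with $w_t = g(X_{1t}) - \ex[g(X_{1t})]$. Decompose $w = u + r$, where
\[ u_t = g(X_{1t}) - \ex[g(X_{1t}) \mid \bs{\Gamma}_1], \qquad r_t = \ex[g(X_{1t}) \mid \bs{\Gamma}_1] - \ex[g(X_{1t})]. \]
Since $\bs{\Pi}$ is deterministic (the factors are non-random) and $r$ is $\bs{\Gamma}_1$-measurable while $\ex[u \mid \bs{\Gamma}_1] = 0$, the cross term vanishes. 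Hence $\ex\|\bs{\Pi} w\|^2 \ge \ex\|\bs{\Pi} u\|^2$.

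Conditionally on $\bs{\Gamma}_1$, the $u_t$ are independent across $t$ and centred, because the $Z_{1t}$ are i.i.d.\ by (b) and independent of $\bs{\Gamma}_1$ by \ref{C:loadings}. Therefore
\[ \ex\big[\|\bs{\Pi} u\|^2 \mid \bs{\Gamma}_1\big] = \sum_{t=1}^T \Pi_{tt} \var\big(g(X_{1t}) \mid \bs{\Gamma}_1\big), \qquad \Pi_{tt} = 1 - \|F_t\|^2/T \ge 0. \]
The target is then the uniform bound
\[ \ex\big[\var(g(X_{1t}) \mid \bs{\Gamma}_1)\big] \ge c\|v\|^2 \quad \text{for all } t. \]
Given it, the diagonal sum is at least $c\|v\|^2 \sum_t \Pi_{tt}$. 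By \ref{C:factorssummable} the trace of $\bs{\Pi}$ is $T-K$, so the bound becomes $c(T-K)\|v\|^2 \ge c\,T\|v\|^2/(K+1)$ using $K<T$, which gives \eqref{eq:CC-theoretical}.

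\textbf{Step 2 (decoupling the coordinates).} By (b), $\bs{\Sigma}_Z - c_Z \bs{I}$ is positive semidefinite. So I can write $Z_{1t} \overset{d}{=} Z' + W$ with $Z' \sim N(0, \bs{\Sigma}_Z - c_Z\bs{I})$ and $W \sim N(0, c_Z \bs{I})$ independent. By the law of total variance, conditionally on $(\bs{\Gamma}_1, Z')$ and with $a = \bs{\Gamma}_1 F_t + Z'$,
\[ \var(g(X_{1t}) \mid \bs{\Gamma}_1) \ge \ex\big[\var(g(a + W) \mid a) \mid \bs{\Gamma}_1\big]. \]
Because the $W_j$ are independent and $g$ is additive, $\var(g(a+W) \mid a) = \sum_j v_j^\top \bs{M}(a_j) v_j$. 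Here $\bs{M}(a)$ is the $L_j \times L_j$ covariance matrix of $((a+W_1)^\ell)_{\ell=1}^{L_j}$. Taking expectations, it suffices to show $\ex[\bs{M}(a_j)] \succeq c\bs{I}$ uniformly in $j$ and $t$. The coordinates $j$ then add up to $c\|v\|^2$.

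\textbf{Step 3 (uniform lower bound; the main obstacle).} The difficulty is that $\lambda_{\min}(\bs{M}(a))$ is not bounded away from $0$ uniformly in $a \in \reals$. For example, $x^2 - 2ax$ evaluated at $a + W$ has variance $2c_Z^2$ while its coefficient norm grows like $|a|$. So one genuinely needs the shift $a_j$ to be stochastically bounded.

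First, for each fixed $a$, $\bs{M}(a)$ is positive definite: the functions $(a+w)^\ell$, $\ell = 1,\ldots,L_j$, are linearly independent modulo constants, and $W_1$ has a density. The entries of $\bs{M}(a)$ are polynomials in $a$, so $a \mapsto \lambda_{\min}(\bs{M}(a))$ is continuous. Hence for every $R$ there is $c_R > 0$ with $\min_{|a|\le R} \lambda_{\min}(\bs{M}(a)) \ge c_R$, and $c_R$ depends only on $R$, $c_Z$ and $L_{\max}$.

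Second, $\ex[a_j^2]$ is bounded uniformly in $j$ and $t$. This uses (c), the moment bounds on $\Gamma_{i,jk}$ in \ref{C:loadings}, fixed $K$, and $\var(Z_{1t,j}) \le C$ from \ref{C:Z}. By Chebyshev one can then pick $R$ with $\pr(|a_j| \le R) \ge 1/2$. Since $\bs{M}(a) \succeq 0$ everywhere, this yields $\ex[\bs{M}(a_j)] \succeq \tfrac12 c_R \bs{I}$.

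In this argument the Gaussianity of $\bs{\Gamma}_1$ in (a) is used only through the moment bounds. Gaussianity of $Z$ is what delivers the independent component $W$ and the density argument, so that is where I expect care to be needed.
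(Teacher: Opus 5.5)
Your argument is correct and takes a genuinely different route from the paper.

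The paper never conditions on the loadings. It uses (a) and (b) to make the whole $Tp$-vector of standardized regressors $W_{t'j}=(X_{1t',j}-\mu_{t'j})/\sigma_{t'j}$ jointly Gaussian. It then expands each summand $\Pi_{tt'}(\phi_j(X_{1t',j})-\ex[\phi_j(X_{1t',j})])^\top v_j$ in Hermite polynomials. A maximal-correlation lemma of Guo et al.\ (in effect a lower bound on the smallest eigenvalue of Hadamard powers of $\ex[WW^\top]$) decouples across $t'$ and $j$ simultaneously, at the cost of the factor $\psi_{\min}(\ex[WW^\top])$, which is bounded below via $\bs{\Sigma}_Z$. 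The marginal terms are handled by a determinant/trace argument: the covariance of $(X,\ldots,X^{L_j})$ with $X\sim N(\mu_{tj},\sigma_{tj}^2)$ is well conditioned uniformly over the compact range of $(\mu_{tj},\sigma^2_{tj})$. The $t$-sum is closed with $\|\bs{\Pi}\|_F^2=T-K$.

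You instead use three different devices:
\begin{itemize}
\item You decouple across $t$ by conditioning on $\bs{\Gamma}_1$ and discarding the loading-induced part $r$.
\item You decouple across $j$ by splitting an isotropic $N(0,c_Z\bs{I})$ component off $Z_{1t}$.
\item You replace the compactness of the unconditional means by a Chebyshev truncation of the random shift $a_j$. Your $W_1^2-a^2$ example correctly shows that some such control is needed.
\end{itemize}
Your bound $(T-K)/T\ge 1/(K+1)$ also makes the uniformity in $T$ explicit, which the paper leaves implicit.

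Each route has its own advantage.
\begin{itemize}
\item Yours is more elementary, with no Hermite expansion or maximal-correlation theory. It is also strictly more general in the loadings: (a) is never used, only second moments of $\bs{\Gamma}_1$ and bounded factors. It extends to non-Gaussian $Z_{1t}$ that contain an independent Gaussian component.
\item The paper's route uses the time structure of $Z$ only through $\psi_{\min}(\ex[WW^\top])$. It therefore adapts more readily to Gaussian $Z_{it}$ that are dependent over $t$ with a well-conditioned joint covariance. In that setting your conditional-independence step across $t$ would fail.
\end{itemize}
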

\noindent It is worth noting that the above result is not restricted to polynomial functions $\phi_{j\ell}$. The proof goes through for any set of functions $\phi_{j\ell}$ with the following property: there exists a fixed constant $c$ (which is in particular independent of $j$ and $t$) such that  
\[ \psi_{\min}(\bs{\Omega}_{jt}) \ge c > 0, \]
where $\bs{\Omega}_{jt}$ is the covariance matrix of the random vector $\phi_j(X_{1t,j}) = (\phi_{j1}(X_{1t,j}),\ldots$ $\ldots,\phi_{jL_j}(X_{1t,j}))^\top$ and $\psi_{\min}(\bs{\Omega}_{jt})$ is the minimal eigenvalue of $\bs{\Omega}_{jt}$. In other words, the eigenvalues of the $L_j\times L_j$ covariance matrices $\bs{\Omega}_{jt}$ have to be uniformly lower bounded.

The general strategy of our analysis to first replace the RE condition by a population version and then to verify this population version under certain constraints is not uncommon in the literature; see \cite{Meier2009}, \cite{2013ITIT...59.3434R} and \cite{scheidegger2023spectral} among others. 
In order to prove Theorem \ref{prop:compatibility-parametric}, we exploit recent achievements on maximum correlation theory by \cite{GUO20221037}. The same theory is used in \cite{scheidegger2023spectral} who consider an additive model with hidden confounders that can be regarded as a cross-sectional analogue of our panel framework with interactive fixed effects. Our technical arguments, however, differ substantially from theirs. The main reason is this: Similar to our projection approach, \cite{scheidegger2023spectral} use certain data transformations to eliminate the hidden confounders, which correspond to the unobserved factors in our setting. Their transformation device, however, has quite different stochastic properties than our projection device $\widehat{\bs{\Pi}}$ and can thus be decomposed in a way which our device $\widehat{\bs{\Pi}}$ does not allow for. Therefore, their arguments are not applicable to our modeling framework. Notably, our arguments rely on Gaussianity of the covariates but allow for a general covariance structure. If we restrict the covariates to have a quite specific covariance structure, in particular, if the covariates can be grouped into independent blocks of fixed size, it is possible to drop the Gaussianity assumption and to build on more classical proof techniques from \cite{Stone1985} which are employed, e.g., in \cite{HuandHorrowitzAdditive2006}.


%% file: ms_extensions.tex
\section{Extensions}\label{sec:extensions}

\subsection{The nonparametric additive case}

So far, we have restricted attention to the parametric additive case, where the component functions $m_j$ in model \eqref{eq:model-add} can be represented as a linear combination of finitely many (known) functions $\phi_{j1},\ldots,\phi_{jL_j}$. We now discuss how to extend our methods and theory to the more general case where $m_j$ are unknown nonparametric functions that belong to some smoothness class. 

A natural strategy to extend our methodology is to make use of nonparametric series methods: Assuming that the functions $m_j$ belong to a suitable smoothness class, they can be expanded in terms of some basis $\mathcal{B}_j = \{ \phi_{j\ell}: \ell =1,2,\ldots \}$, i.e., they can be represented as $m_j(x) = \sum_{\ell=1}^\infty \beta_{j\ell} \phi_{j\ell}(x)$ with certain coefficients $\beta_{j\ell}$. Truncating this series expansion at order $L_j$ yields
\[ m_j(x) = \phi_j^\top(x)\beta_j + \delta_j(x), \]
where $\phi_j(x) := ( \phi_{j1}(x),\ldots,\phi_{jL_j}(x))^\top$, $\beta_j := (\beta_{j1},\ldots,\beta_{jL_j})^\top$ and the term $\delta_j(x) := \sum_{\ell=L_j+1}^\infty \beta_{j\ell} \phi_{j\ell}(x)$  captures the approximation error due to truncation. Substituting this truncated series expansion back into model equation \eqref{eq:model-add} gives
\[ Y_{it} = \sum_{j=1}^p \beta_j^\top \phi_j(X_{it,j}) + \Delta_{it} + \gamma_i^\top F_t + \varepsilon_{it}, \]
where $\Delta_{it} = \sum_{j=1}^p \delta_j(X_{it,j})$.
This model differs from the parametric version analyzed in the previous sections in two respects: (i) it comprises an additional error term $\Delta_{it}$ and (ii) the numbers $L_j$ are not treated as fixed any more but grow with the sample size, which allows the approximation error $\Delta_{it}$ to vanish asymptotically. Despite these differences, we can run exactly the same estimation strategy as in the parametric case. Hence, our methodology carries over to the nonparametric case without any issues.

The theory, in contrast, does not carry over as straightforwardly. The main difficulty is that our covariates have the form $X_{it} = \bs{\Gamma}_i F_t + Z_{it}$, where the factors $F_t$ are unobserved deterministic quantities that vary over time. Consequently, the distribution of $X_{it}$ is itself time-dependent. In particular, if $X_{it}$ has compact support, this support will generally vary with $t$.
This poses substantial technical challenges for the analysis of the nonparametric additive case. Existing theory for nonparametric series estimators is typically developed under the assumption that the regressors are supported on a fixed, known compact set; see e.g.\ \cite{Stone1982}, \cite{NEWEY1997147}, and \cite{HuandHorrowitzAdditive2006} for spline-based series methods. Such an assumption is incompatible with our framework, where the support of the regressors changes over time through the latent factors.
One possible remedy is to transform the covariates $X_{it}$ to live on a known compact domain which is the same for each $t$. This, though, would require a different (data-driven) transformation for each $t$, which is extremely difficult to handle from a theoretical point of view. 
Another possible approach is to work with a Hermite series expansion \citep[see e.g.][]{dongHermiteexpansion2019,Eftekhari2021INFERENCEHERMITE}. Hermite polynomials are defined on the whole real line and form an orthogonal basis in a Gaussian-weighted \(L^2(\mathbb R)\) space. Hence, they are a natural candidate for unbounded covariates $X_{it}$, especially for Gaussian designs. However, even if we restricted attention to a Gaussian design \(X_{it,j} \sim N(\mu_{tj},\sigma_{tj}^2)\), deriving theory for Hermite polynomials would still be quite challenging. In order to exploit the orthogonality of the Hermite basis, we would have to standardize the regressors $X_{it,j}$ differently for each period $t$ because of their time-varying mean $\mu_{tj}$ and variance $\sigma_{tj}^2$. This in turn would produce Hermite series expansions with time-varying coefficients, which would significantly increase the dimensionality of the estimation problem. Besides, we would have to deal with measurement/estimation error in the standardized covariates as the time-varying mean $\mu_{tj}$ and variance $\sigma_{tj}^2$ are not observed.

The upshot is this: while we conjecture that it is possible to extend our theory to the nonparametric additive case, this is far from trivial, with many complex technical issues arising along the way. 

\subsection{Interactions}\label{sec:extensionInteractions}

In applied econometrics, it is common practice to include interaction terms between the covariates in the model. As an example, we may extend our parametric additive model to include all possible pairwise interactions. The resulting model has the form 
\[ Y_{it} = \sum_{j=1}^p \beta_j^\top \phi_j(X_{it,j}) + \sum_{1 \le j < j' \le p} \vartheta_{jj'} X_{it,j} X_{it,j'} + \gamma_i^\top F_t+ \varepsilon_{it}, \]
where $\vartheta_{jj'}$ are unknown parameters that capture the interaction effects. Stacking the parameters $\vartheta_{jj'}$ in a vector $\vartheta = (\vartheta_{jj'}: 1 \le j < j' \le p)$ and the interactions $X_{it,j} X_{it,j'}$ in a vector $D_{it} = (X_{it,j} X_{it,j'}: 1 \le j < j' \le p)$, we can write the model more compactly as 
\begin{equation}\label{eq:model-interaction} 
Y_{it} = \sum_{j=1}^p \beta_j^\top \phi_j(X_{it,j}) + \vartheta^\top D_{it} + \gamma_i^\top F_t+ \varepsilon_{it}. 
\end{equation}
The estimation algorithm from Section \ref{sec:estimationalgo} extends straightforwardly to model \eqref{eq:model-interaction}: Steps 1 and 2 of the algorithm remain unchanged and the lasso of the parameter vector $(\beta_1^\top,\ldots,\beta_p^\top,\vartheta^\top)^\top$ in Step 3 is defined in the same way as before with the design matrix  
\begin{equation}\label{eq:design-matrix-interactions}
\begin{pmatrix} \widehat{\bs{\Phi}}_1 & \widehat{\bs{D}}_1 \\ \vdots & \vdots \\ \widehat{\bs{\Phi}}_n & \widehat{\bs{D}}_n \end{pmatrix}, 
\end{equation}
where $\widehat{\bs{\Phi}}_i = \widehat{\bs{\Pi}}(\bs{\Phi}_i - \overline{\bs{\Phi}})$ and $\widehat{\bs{D}}_i = \widehat{\bs{\Pi}}(\bs{D}_i - \overline{\bs{D}})$ with $\bs{D}_i= (D_{i1} \ldots D_{iT})^\top$. 
The convergence theory from Section \ref{sec:convergenceresults} that assumes the design matrix to satisfy the RE condition can be adapted without any issues as well. The theory for verifying the RE condition in Section \ref{sec:analysisofcomp}, in contrast, does not carry over to the model with interactions. The reason is that this theory heavily relies on the Gaussianity of the covariates $X_{it,j}$ and thus does not apply to the new covariates $X_{it,j} X_{it,j'}$. Nevertheless, our methods appear to work well in the presence of interactions, as demonstrated by a series of simulation exercises in Section \ref{sec:simulations}.

\subsection{Inference}
\label{sec:inference}

We now return to the parametric additive model \eqref{eq:model-add}--\eqref{eq:model-reg} analyzed in the main body of the paper. According to this model, 
\[
Y_{it} 
= \sum_{j=1}^p m_j(X_{it,j}) + \gamma_i^\top F_t + \varepsilon_{it},
\]
where each component function has the form $m_j(x) = \phi_j(x)^\top \beta_j$ with $\phi_{j} = (\phi_{j1},\ldots$ $\ldots, \phi_{jL_j})^\top$ a vector of fixed length $L_j$. We are interested in the effect of the $j$-th covariate on the response variable, which is captured by the function $m_j$. In particular, we want to test whether the $j$-th covariate has any effect at all. Formally speaking, we consider the test problem 
\[
H_0: m_j = 0 
\quad \text{versus} \quad 
H_1: m_j \ne 0
\]
for a fixed $j$. We now describe a possible approach to tackle this inference problem. We first explain the heuristic idea and then give a precise definition of the test procedure.

\subsubsection*{Heuristic idea}

Consider the oracle case where the projection matrix $\bs{\Pi}$ from \eqref{eq:oracle-Pi} is known and suppose that the $j$-th covariate satisfies the nodewise regression equation 
\begin{equation}\label{eq:model-nodewise}
X_{it,j} = X_{it,-j}\theta + F_t^\top \nu_i + u_{it}, 
\end{equation}
where $X_{it,-j} = (X_{it,1},\ldots,X_{it,j-1}, X_{it,j+1},\ldots, X_{it,p})^\top$, $\theta \in \reals^{p-1}$ is a sparse para\-meter vector, $\nu_i \in \mathbb{R}^K$ are factor loadings, and $u_{it}$ is a zero-mean error term that is independent of $X_{it,-j}$. The projected errors $\bs{\Pi} u_i$ with $u_i = (u_{i1},\ldots,u_{iT})^\top$ can be estimated by the methods from Section \ref{sec:estimation-method}. For simplicity, we neglect the resulting estimation error and suppose that the random vectors $\bs{\Pi} u_i$ are observed. Let $\widehat{\beta}_\lambda$ be the HD-CCE estimator of $\beta$ and 
\[ R_i = \bs{\Pi} (Y_i - \overline{Y}) - \bs{\Pi} (\bs{\Phi}_{i(-j)} - \overline{\bs{\Phi}}_{(-j)}) \widehat{\beta}_{\lambda,-j} \]
the residual vector under the null in the empirically centred version of our parametric additive model, which can be formulated as $Y_i = \bs{\Phi}_i \beta + \bs{F} \gamma_i + \varepsilon_i$ according to \eqref{eq:model-matrix}. Moreover, let $\tau: \reals \to \reals$ be any smooth, non-negative function which has compact support $[-1,1]$ and integrates up to $1$. Examples are standard kernel functions such as the Epanechnikov and biweight kernel. We consider the normalized version $\tau_{w,h}(u) = \tau( \frac{u-w}{h}) / \sqrt{h}$ for fixed $w \in \reals$ and $h > 0$, which is a small bump around $w$ supported on the interval $[w-h,w+h]$. Letting $\tau_{w,h}(\bs{\Pi} u_i)$ be the component-wise application of $\tau_{w,h}$ to the vector $\bs{\Pi} u_i$, we now have a closer look at the statistic
\begin{align*}
\Psi_{w,h}  
 & := \frac{\sum_{i=1}^n R_i^\top \tau_{w,h}(\bs{\Pi} u_i)}{ \{\sum_{i=1}^n \| \bs{\Pi} \tau_{w,h}(\bs{\Pi} u_i) \|^2\}^{1/2} }. 
\end{align*}
Since $R_i$ has empirically mean zero, the statistic $\Psi_{w,h}$ measures (up to a multiplicative constant) the empirical correlation between $R_i$ and $\tau_{w,h}(\bs{\Pi} u_i)$. Plugging the definition of $R_i$ into the formula for $\Psi_{w,h}$ yields 
\[ \Psi_{w,h} = \Psi_{w,h}^A + \Psi_{w,h}^B + \Psi_{w,h}^C \]
with 
\begin{align*}
\Psi_{w,h}^A & = \frac{\sum_{i=1}^n \{\bs{\Pi} (\bs{\Phi}_{i(j)} - \overline{\bs{\Phi}}_{(j)})\beta_j\}^\top \tau_{w,h}(\bs{\Pi} u_i)}{ \{\sum_{i=1}^n \| \bs{\Pi} \tau_{w,h}(\bs{\Pi} u_i) \|^2\}^{1/2} } \\
\Psi_{w,h}^B & = \frac{(\beta_{-j} - \widehat{\beta}_{\lambda,-j})^\top \sum_{i=1}^n \{ \bs{\Pi} (\bs{\Phi}_{i(-j)} - \overline{\bs{\Phi}}_{(-j)}) \}^\top  \tau_{w,h}(\bs{\Pi} u_i)}{ \{\sum_{i=1}^n \| \bs{\Pi} \tau_{w,h}(\bs{\Pi} u_i) \|^2\}^{1/2} } \\
\Psi_{w,h}^C & = \frac{\sum_{i=1}^n (\bs{\Pi}(\varepsilon_i - \overline{\varepsilon}))^\top \tau_{w,h}(\bs{\Pi} u_i)}{ \{\sum_{i=1}^n \| \bs{\Pi} \tau_{w,h}(\bs{\Pi} u_i) \|^2\}^{1/2} }. 
\end{align*}
Assuming that $u_{i}$ is independent from $\varepsilon_{i}$, it should be possible to show that the third component $\Psi_{w,h}^C$ is asymptotically normal. If $u_{i}$ is assumed to be independent from $\bs{X}_{i(-j)}$ as well, it follows that $\tau_{w,h}(\bs{\Pi} u_i)$ is independent from $\bs{\Pi} (\bs{\Phi}_{i(-j)} - \overline{\bs{\Phi}}_{(-j)})$. Therefore, the sum $\sum_{i=1}^n \{ \bs{\Pi} (\bs{\Phi}_{i(-j)} - \overline{\bs{\Phi}}_{(-j)}) \}^\top  \tau_{w,h}(\bs{\Pi} u_i)$ in the second component  $\Psi_{w,h}^B$ has mean zero. Using this together with standard arguments, one should be able to show that $\Psi_{w,h}^B$ converges to zero in probability, thus being asymptotically negligible. 
Finally, the first component $\Psi_{w,h}^A$ is exactly equal to zero under $H_0$, because $\bs{\Phi}_{i(j)} \beta_j = (m_j(X_{i1,j}),\ldots,m_j(X_{iT,j}))^\top$ and $m_j = 0$ under the null. Taken together, these considerations suggest that the statistic $\Psi_{w,h}$ is asymptotically normal under $H_0$. Under the alternative $H_1$, in contrast, the first component $\Psi_{w,h}^A$ can be expected to be nonzero for certain values of $w$ at least: for different choices of $w$, the vectors $\tau_{w,h}(\bs{\Pi} u_i) \in \reals^T$ can be regarded as different directions in $\reals^T$. By computing the inner products $\sum_{i=1}^n \{\bs{\Pi} (\bs{\Phi}_{i(j)} - \overline{\bs{\Phi}}_{(j)})\beta_j\}^\top \tau_{w,h}(\bs{\Pi} u_i)$ for different points $w$, we try to find a direction in $\reals^T$ that aligns with the vector $\bs{\Pi} (\bs{\Phi}_{i(j)} - \overline{\bs{\Phi}}_{(j)})\beta_j$. Put differently, since $\bs{\Phi}_{i(j)} \beta_j = (m_j(X_{i1,j}),\ldots,m_j(X_{iT,j}))^\top$, we try to find a deviation of the function $m_j$ from $0$. Taken together, these heuristic considerations suggest to test $H_0$ by means of the  statistic 
\[ \Psi := \max_{w \in \mathcal{W}} |\Psi_{w,h}|, \]
where $\mathcal{W}$ is a given set of locations $w$.

\subsubsection*{Test procedure}

We now turn the above heuristics into a feasible test procedure. 
\medskip

\noindent \textit{Construction of projection matrix.} We construct the proxy of the projection matrix $\bs{\Pi}$ slightly differently than for estimation purposes. In particular, we replace the matrix $\overline{\bs{X}} = n^{-1} \sum_{i=1}^n \bs{X}_i$ by $\overline{\bs{X}}_{(-j)}$ which results from eliminating the $j$-th column of $\overline{\bs{X}}$. This helps control certain asymptotic bias terms. Once this replacement is done, the construction proceeds as before: (i) Compute the $(p-1) \times (p-1)$ matrix $\widecheck{\bs{\Sigma}} = \overline{\bs{X}}_{(-j)}^\top \overline{\bs{X}}_{(-j)}/T $ with eigenvalues $\widecheck{\eig}_1 \geq \ldots \geq  \widecheck{\eig}_{p-1} \geq 0$ and corresponding eigenvectors $\widecheck{U}_{(1)}, \ldots, \widecheck{U}_{(p-1)}$. 
(ii) Estimate $\bs{\Pi}$ by
$\widecheck{\bs{\Pi}} = \bs{I} - \widecheck{\bs{W}}(\widecheck{\bs{W}}^\top \widecheck{\bs{W}})^{-}\widecheck{\bs{W}}^\top$,
where $\widecheck{\bs{W}} = \overline{\bs{X}}_{(-j)} \widecheck{\bs{U}}$ and $\widecheck{\bs{U}} = (\widecheck{U}_{1} \ldots \widecheck{U}_{\widehat{K}})$ with $\widehat{K}$ as defined before.
\medskip

\noindent \textit{Construction of nodewise residuals.}
We impose the following nodewise regression equation on the $j$-th covariate: 
\begin{equation}\label{eq:model-nodewise-nonlinear}
X_{it,j} = \sum_{j'\neq j} \phi_{j'}^{\textnormal{nw}} (X_{it,j'})^\top \theta_{j'} + F_t^\top \nu_i + u_{it},
\end{equation}
where $\phi_{j'}^{\textnormal{nw}} = (\phi_{j'1}^{\textnormal{nw}}, \ldots, \phi_{j'L_{j'}^{\textnormal{nw}}}^{\textnormal{nw}})^\top$ is a vector of known functions, $L_{j'}^{\textnormal{nw}}$ is a fixed natural number that does not grow with the sample size, $\theta_{j'} \in \reals^{L_{j'}^{\textnormal{nw}}}$ are unknown parameter vectors, $\nu_i \in \mathbb{R}^K$ are factor loadings, and $u_{it}$ is a zero-mean error term that is independent of $X_{it,-j}$. We assume that the parameter vector $\theta = (\theta_1^\top, \ldots, \theta_{j-1}^\top, \theta_{j+1}^\top, \ldots, \theta_p^\top)^\top$ is sparse in the sense that many of its components $\theta_{j'}$ are exactly equal to zero. Note that in the heuristic discussion above, we have considered the special case with $L_{j'}^{\textnormal{nw}} =1$ and $\phi_{j'1}^{\textnormal{nw}}(x) = x$ for simplicity. The nodewise model \eqref{eq:model-nodewise-nonlinear} can be reformulated in matrix notation as
\begin{equation}\label{eq:model-nodewise-mat}
X_{i(j)} = \bs{\Phi}_i^{\textnormal{nw}} \theta + \bs{F} \nu_i + u_i,
\end{equation}
where $\bs{\Phi}_i^{\textnormal{nw}} = (\Phi^{\textnormal{nw}}(X_{i1}) \ldots \Phi^\textnormal{nw}(X_{iT}))^\top \in \reals^{T\times \sum_{j'\neq j} L_{j'}^{\textnormal{nw}}}$ and $\Phi^\textnormal{nw}(X_{it})$ is the vector with entries $\phi_{j'l}^\textnormal{nw}(X_{it,j'})$ for $j'\neq j$ and $l=1,\ldots,L_{j'}^{\textnormal{nw}}$. We estimate $\theta$ by 
\begin{align*}
\widecheck{\theta}_\kappa \in \argmin_{\vartheta \in\reals^{\sum_{j'\neq j}L_{j'}^{\textnormal{nw}}}}\left\{\frac{1}{nT}\sum_{i=1}^n \big\|\widecheck{X}_{i(j)} -  \widecheck{\bs{\Phi}}_i^{\textnormal{nw}}  \vartheta \big\|^2 + \kappa\|\vartheta \|_1\right\},
\end{align*}
where $\widecheck{X}_{i(j)} = \widecheck{\bs{\Pi}} ({X}_{i(j)} - \overline{X}_{(j)})$ and $\widecheck{\bs{\Phi}}_i^{\textnormal{nw}} = \widecheck{\bs{\Pi}} (\bs{\Phi}_{i}^{\textnormal{nw}}- \overline{\bs{\Phi}}^{\textnormal{nw}})$ are the projected and empirically centered versions of $X_{i(j)}$ and $\bs{\Phi}_{i}^{\textnormal{nw}}$, respectively, and $\kappa$ is the penalty constant of the lasso. With this notation in place, we define
\[\widecheck{u}_i = \widecheck{X}_{i(j)}   -\widecheck{\bs{\Phi}}_i^{\textnormal{nw}}   \widecheck{\theta}_{\kappa}\] 
to be the vector of nodewise residuals.
\medskip

\noindent \textit{Construction of residuals.} We estimate $\beta$ as described in Section \ref{sec:estimation-method}   with $\widehat{\bs{\Pi}}$ replaced by $\widecheck{\bs{\Pi}}$:
\[ \widecheck{\beta}_\pen \in \underset{b \in \reals^{d}}{\text{argmin}} \bigg\{ \frac{1}{nT} \sum_{i=1}^n \big\|\widecheck{Y}_{i} - \widecheck{\bs{\Phi}}_i b \big\|^2 + \pen \|b\|_1 \bigg\}, \]
where $\widecheck{Y}_{i} = \widecheck{\bs{\Pi}} (Y_i - \overline{Y})$ and $\widecheck{\bs{\Phi}}_i= \widecheck{\bs{\Pi}} (\bs{\Phi}_{i}- \overline{\bs{\Phi}})$ are the projected and empirically centered versions of $Y_i$ and $\bs{\Phi}_{i}$, respectively. The vector of residuals can then be expressed as
\[ R_i = \widecheck{Y}_i  - \widecheck{\bs{\Phi}}_{i (-j)} \widecheck{\beta}_{\pen,-j}. \]

\noindent \textit{Definition of test statistic.} 
The test statistic is defined as 
\[ \Psi := \max_{w\in \mathcal{W}} |\Psi_{w,h}| \quad \text{with} \quad \Psi_{w,h} = \frac{\sum_{i=1}^n R_i^\top \tau_{w,h}(\widecheck{u}_i)}{\{ \sum_{i=1}^n \|\widecheck{\bs{\Pi}} \tau_{w,h}(\widecheck{u}_i)\|^2 \}^{1/2}} \]
for each $w \in \mathcal{W}$, where $\mathcal{W} = \{ w \in [-C,C]: w = -C + (2\ell-1) h$ for some positive integer $ \ell \}$ with some pre-specified constant $C$. 
\medskip

\noindent \textit{Test procedure.} 
The test rejects $H_0$ at level $\alpha \in (0,1)$ if $\Psi \ge c_{1-\alpha}$, where $c_{1-\alpha}$ is the $(1-\alpha)$-quantile of $\Psi$ under $H_0$. Since the critical value $c_{1-\alpha}$ is not known in practice, we approximate it by a Gaussian coupling $\widehat{c}_{1-\alpha}$ \citep[see e.g.][]{Chernozhukov2013gaussianApproximation}. By the heuristic arguments from above, under $H_0$, the test statistic $\Psi$ should approximately behave like
\[ \Psi^0 = \max_{w\in \mathcal{W}} |\Psi_{w,h}^0|, \]
where 
\[ \Psi_{w,h}^0 = \frac{\sum_{i=1}^n (\widecheck{\bs{\Pi}}(\varepsilon_i - \overline{\varepsilon}))^\top \tau_{w,h}(\widecheck{u}_i)}{ \{\sum_{i=1}^n \| \widecheck{\bs{\Pi}} \tau_{w,h}(\widecheck{u}_i) \|^2\}^{1/2}}. \]
As the sample averages $\overline{\varepsilon}$ are approximately equal to $0$, we neglect them in what follows and rewrite $\Psi_{w,h}^0$ as
\[ \Psi_{w,h}^0 = \frac{1}{\sqrt{nT}} \sum_{i=1}^n \sum_{t=1}^T d_{it,w} \varepsilon_{it} \]
with
\[ d_{i,w} = (d_{i1,w},\ldots,d_{iT,w})^\top = \frac{\widecheck{\bs{\Pi}} \tau_{w,h}(\widecheck{u}_i)}{ \{(nT)^{-1} \sum_{i=1}^n \| \widecheck{\bs{\Pi}} \tau_{w,h}(\widecheck{u}_i) \|^2\}^{1/2}}. \]
Now let $\{G_{it}: 1 \le i \le n, \, 1 \le t \le T \}$ be a collection of Gaussian random variables with the same mean and covariance structure as the error terms $\{\varepsilon_{it}: 1 \le i \le n, \, 1 \le t \le T \}$. For simplicity, we assume the errors $\varepsilon_{it}$ to be i.i.d.\ both across $i$ and $t$ with mean $0$ and variance $\sigma^2$. Under this assumption, the Gaussian variables $G_{it}$ are i.i.d.\ with mean $0$ and variance $\sigma^2$ as well. In practice, we replace the unknown error variance $\sigma^2$ by an estimator, in particular, by $\widehat{\sigma}^2 = \{n(T-\widehat{K})\}^{-1} \sum_{i=1}^n \| \widecheck{Y}_i - \widecheck{\bs{\Phi}}_i \widecheck{\beta}_{\lambda} \|^2$, which was already proposed in \cite{Mruecker2025CCE}. Using the variables $G_{it}$, we can construct a Gaussian analogue of the statistic $\Psi^0$ as
\[ \Psi^{\textnormal{Gauss}} = \max_{w \in \mathcal{W}} \Big| \frac{1}{\sqrt{nT}} \sum_{i=1}^n \sum_{t=1}^T d_{it,w} G_{it} \Big| \]
and define 
\[ \widehat{c}_{1-\alpha}
= \inf\bigg\{ c>0: \pr\big( \Psi^{\textnormal{Gauss}} \le c \big) \ge 1-\alpha \bigg\}, \]
which can be computed (approximately) by Monte Carlo simulations.

%% file: ms_sim.tex
\section{Simulation study}\label{sec:simulations}

\subsection{Simulation design}\label{sec:sim-design}

We simulate data from the parametric additive model \eqref{eq:model-add} with $K=3$ latent factors. Building on the simulation design  in \cite{Mruecker2025CCE}, we generate the components of the model as follows:
\begin{itemize}[leftmargin=0.45cm]

\item The error terms $\varepsilon _{it}$ are standard normal draws independent across $i$ and $t$.

\item The factors $F_t = (F_{t,1},F_{t,2},F_{t,3})^\top$ are generated as stationary AR(1) processes with zero means and unit variances. Specifically, for each $k\in \{1,2,3\}$, we let $F_{t,k}=0.5F_{t-1,k}+w_{t,k}$, where the innovations $w_{t,k}$ are $N(0,0.75)$-distributed and independent across $t$ and $k$. By construction, the factors are orthonormal in the sense that $\ex[F_tF_t^\top]=\boldsymbol{I}_{K}$. 

\item The $p=1+3g$ covariates $X_{it}$ are constructed as follows: The first covariate is generated according to the nodewise regression structure 
\begin{equation}\label{eq:nodewise-sim}
X_{it,1} = X_{it,-1} \theta + F_t^\top \nu_i + u_{it}, 
\end{equation}
where $\theta = (\theta^{\textnormal{nw}},0,\ldots,0)^\top$ is a sparse parameter vector whose only non-zero entry is the first element $\theta^{\textnormal{nw}}$ (the value of which is chosen below). Moreover, the variables $u_{it}$ are standard normal draws independent across $i$ and $t$, and we set $\nu_i = 0$ for all $i$ for simplicity. The remaining $3g$ regressors are generated using the factor structure 
\[ X_{it,-1} = \bs{\Gamma}_{i,-1}F_{t}+Z_{it,-1}, \]
where the random vectors $Z_{it,-1}\in \reals^{3g}$ are drawn independently across $i$ and $t$ from a multivariate standard normal distribution $\normal(0,\bs{I})$ and 
\[ \bs{\Gamma}_{i,-1}=\begin{pmatrix} \Gamma^{(1)}_i & 0 & 0 \\ 0 & \Gamma^{(2)}_i & 0 \\ 0 & 0 & \Gamma^{(3)}_i \end{pmatrix} \in \reals ^{3g\times 3}\]
with random vectors $\Gamma _{i}^{(1)}=(\Gamma _{i,1},\dots ,\Gamma _{i,g})^{\top }$, $\Gamma_{i}^{(2)}=(\Gamma _{i,g+1},\dots ,\Gamma _{i,2g})^{\top }$ and $\Gamma_{i}^{(3)}=(\Gamma _{i,2g+1},\dots ,\Gamma _{i,3g})^{\top }$ of length $g$ (which are specified below). 

\item We collect the factor loadings from the outcome and the regressor equations in a large vector $G_{i}=(\gamma _{i}^{\top}$, $\{\Gamma _{i}^{(1)}\}^{\top }$, $\{\Gamma
_{i}^{(2)}\}^{\top }$, $\{\Gamma _{i}^{(3)}\}^{\top })^{\top }$ and draw the random vectors $G_{i}$ independently from a multivariate normal distribution $\normal(\mu ,\bs{\Omega})$. Here, $\mu=(\mu_{\gamma}^\top, 1,\ldots, 1)^\top\in \reals^{K+3g}$ with $\mu_{\gamma}=(3,\ldots,3)^\top\in\reals^K$ and 
\[
\boldsymbol{\Omega }=
\begin{pmatrix}
1 & \rho & \cdots & \rho\\
\rho & \ddots & \ddots & \vdots\\
\vdots & \ddots & \ddots & \rho\\
\rho & \cdots & \rho & 1
\end{pmatrix},
\]
so that $\rho$ governs the pairwise correlation between the factor loadings. We set $\rho = 0.75$ throughout. 

\item We pick $\theta^{\textnormal{nw}}$ such that the covariates $X_{it,j}$ have the same mean and variance for all $j$. The resulting choice is $\theta^{\textnormal{nw}} = \sqrt{2/3}$. (Obviously, $\ex[X_{it,j}]=0$ for all $j$. Moreover, straightforward calculations yield that  $\ex[X_{it,j}^2]=3$ for all $j > 1$ and $\ex[X_{it,1}^2] = 3 (\theta^{\textnormal{nw}})^2 + 1$. Setting $\theta^{\textnormal{nw}} = \sqrt{2/3}$, we thus get that $\ex[X_{it,1}^2] = 3$.)

\item The component functions $m_j$ are chosen as follows: 
\[ \begin{aligned}
m_1(x)&=0.2x+0.1x^2+0.01x^3 \\ 
m_3(x) &= 0.1x^2+0.01x^3 \\
m_5(x) & =0.2x+0.01x^3 
\end{aligned} \qquad 
\begin{aligned}
m_2(x)&= 0.2x \\
m_4(x) &= 0.1x^2 \\
m_6(x) & =0.2x+0.1x^2
\end{aligned} \]
and $m_j = 0$ for $j > 6$. For estimation purposes, we let $\phi_j = (\phi_{j1},\ldots,\phi_{j5})^\top$ with $\phi_{j\ell}(x) = x^\ell$ for $1 \le \ell \le L_j = 5$ and all $j$. We thus consider monomials up to order $5$ for each component $j$. 
Given our choice of the component functions $m_j$, the total number of non-zero coefficients $\beta_{j\ell}$ is $|S| = |\{(j,\ell): \beta_{j\ell} \neq 0 \}| = 11$.

\item We set the cross-section dimension to $n=50$ throughout and consider two time series lengths $T \in \{15,50\}$. The choice of $p$ varies across simulation experiments as detailed below. 

\item All Monte Carlo experiments are based on $1000$ simulation runs.

\end{itemize}

\begin{table}[t]
\caption{Overview of estimators}\label{table:estimators}
\centering
{\small\begin{tabular}{@{\extracolsep{5pt}} llll} 
\toprule\\[-0.5cm]
estimator & label & description \\[0.1cm]
\hline\\[-0.4cm]
$\widehat{\beta}_\lambda$ & L &  centering  + $\bs{\widehat{\Pi}}$ + lasso (see Section \ref{sec:estimationalgo})  \\[0.05cm]
$\widehat{\beta}_{\lambda}^{\text{oracle}}$ & L-O &  centering + $\bs{\Pi}$ + lasso  \\[0.05cm]
$\widehat{\beta}_{\textnormal{LS}}^{\text{double-oracle}}$ & LS-O$^2$  & centering +  $\bs{\Pi}$ + least squares on support $S$ \\[0.05cm]
$\widehat{\beta}_{\lambda}^{\text{naive}}$ & N &  no centering + no projection + lasso
\\[0.1cm]
\bottomrule
\end{tabular}}
\end{table}

\noindent To implement our HD-CCE estimator $\widehat{\beta}_\lambda$, we choose the tuning parameters $\lambda$ and $\tau$ as described in Section \ref{sec:est:tuning:tau}: $\lambda$ is chosen by 10-fold cross-validation, where the folds partition the cross-sectional direction, and $\tau$ is set to $\tau = \alpha \widehat{\eig}_1$ with $\alpha = 0.01$, where $\widehat{\eig}_1$ is the largest eigenvalue of $\widehat{\bs{\Sigma}}$. In what follows, we evaluate the finite sample performance of the HD-CCE estimator $\widehat{\beta}_\lambda$. To do so, we compare it to three benchmarks: 
\begin{itemize}[leftmargin=0.45cm]
\item the ``oracle'' estimator $\widehat{\beta}_\lambda^{\text{oracle}}$ which is computed in exactly the same way as $\widehat{\beta}_\lambda$ except that the proxy $\widehat{\bs{\Pi}}$ is replaced by the oracle matrix $\bs{\Pi}$;  
\item the ``double oracle'' estimator $\widehat{\beta}_{\text{LS}}^{\text{double-oracle}}$ which is obtained by applying least squares to the data projected by the oracle matrix $\bs{\Pi}$, using only the true support $S = \{(j,\ell): \beta_{j\ell} \ne 0\}$;
\item the naive estimator $\widehat{\beta}_\lambda^{\textnormal{naive}}$ which is computed by performing (cross-validated) lasso directly on the unprojected data $\{(Y_i,\bs{\Phi}_i): 1 \le i \le n\}$. 
\end{itemize} 
A brief description of these estimators can be found in Table \ref{table:estimators}. In what follows, we often write $\widehat{\beta}^{(a)}$ to denote the estimator based on the algorithm $a\in \{\textnormal{L, L-O}, \textnormal{LS-O}^2, \textnormal{N}\}$.

\subsection{Parameter estimation}\label{sec:SIM-parameter-estimation}

We consider the model setting described above with $p=601$ covariates. The main model equation has the form 
\[ Y_{it} = m_1(X_{it,1})+\ldots+m_6(X_{it,6}) + F_t^\top\gamma_i + \varepsilon_{it}. \]
Notably, the signal-to-noise ratio in this model is approximately equal to $1$, i.e., 
\[ \textnormal{SNR} :=
\frac{\var\big(\sum_{j=1}^6 m_j(X_{it,j})\big)}
{\var(F_t^\top\gamma_i+\varepsilon_{it})}
\approx 1.
\]
The estimators $\widehat{\beta}^{(a)}$ with $a \in  \{$L, L-O, N$\}$ are computed from the sample
$\{(Y_{it},\Phi_{it}): 1 \le i \le n,\ 1 \le t \le T\}$,
where the design vector $\Phi_{it}\in\mathbb{R}^{pL}$ contains the transformations $\phi_{j\ell}(X_{it,j}) = X_{it,j}^\ell$ ($1 \le \ell \le 5$) for each of the $p=601$ covariates $j$, resulting in an overall dimension of $pL=3005$. The double-oracle estimator $\widehat{\beta}^{(\textnormal{LS-O}^2)}$, in contrast, is computed from the sample points $(Y_{it},\Phi_{it}^*)$, where $\Phi_{it}^* \in \mathbb{R}^{|S|}$ only contains the transformations $\phi_{j\ell}(X_{it,j})$ with non-zero coefficients $\beta_{j\ell}$.

\begin{figure}[p]
\centering
\begin{subfigure}[b]{0.65\textwidth}
\centering
\includegraphics[width = \textwidth]{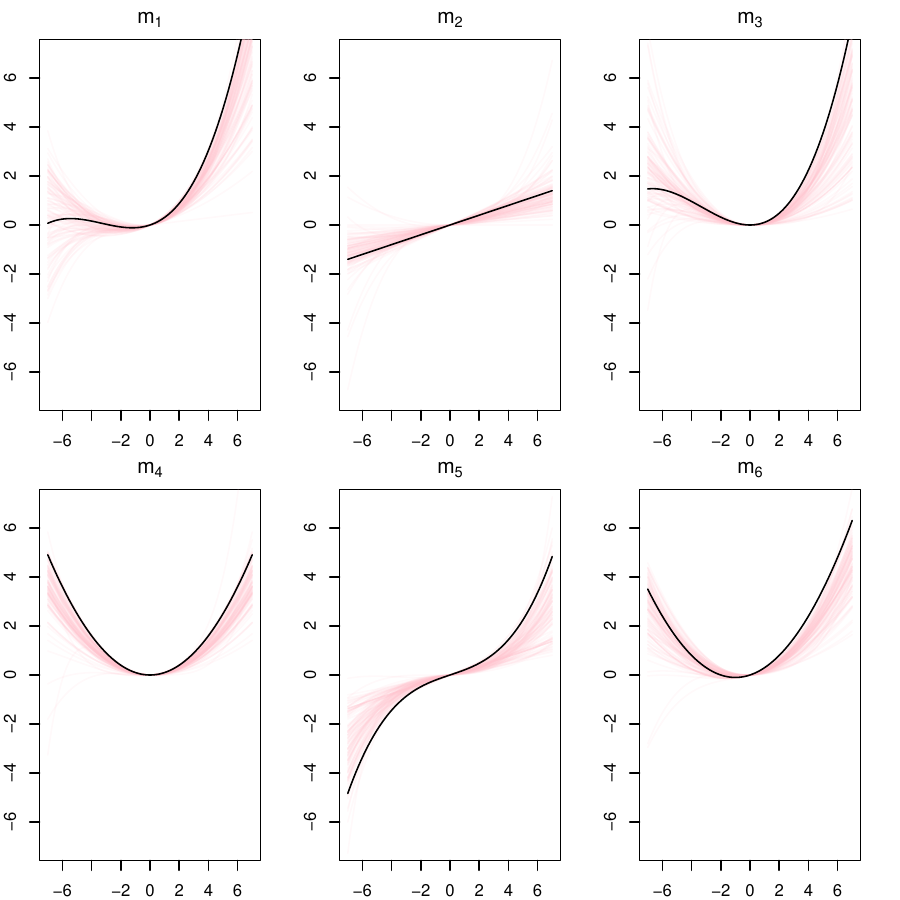}
\vspace{-0.3cm}

\caption{$T = 15$}
\end{subfigure}
\vspace{0.2cm}

\begin{subfigure}[b]{0.65\textwidth}
\centering
\includegraphics[width = \textwidth]{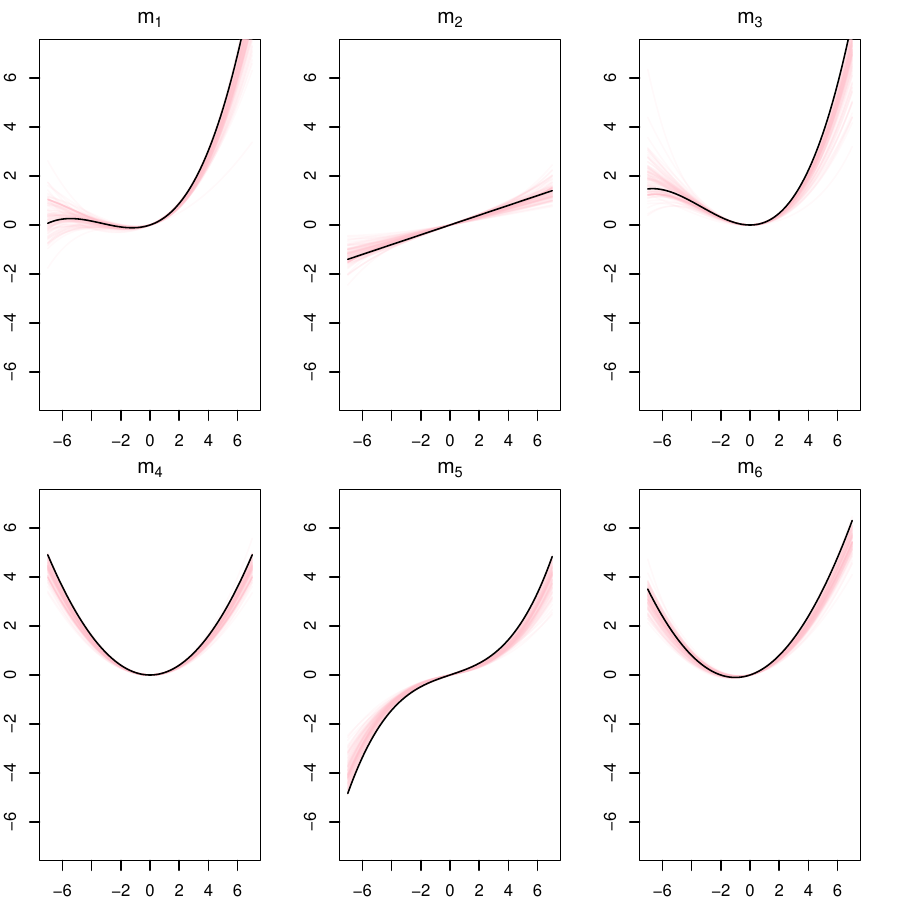}
\vspace{-0.3cm}

\caption{$T=50$}
\end{subfigure}

\caption{HD-CCE estimates of the component functions $m_1, \ldots, m_6$. Each panel shows the estimates $\widehat{m}_j^{(L)} = \widehat{\beta}_{j1}^{(L)} x+ \ldots + \widehat{\beta}_{j5}^{(L)} x^5$ from $100$ simulation runs (pink lines), together with the true regression function $m_j$ (black line), for some $j \in \{1,\ldots,6\}$.}
\label{fig:EstAccuracy3}
\end{figure}

\begin{figure}[p]
\centering
\begin{subfigure}[p]{0.475\textwidth}   
\centering 
\includegraphics[width=\textwidth]{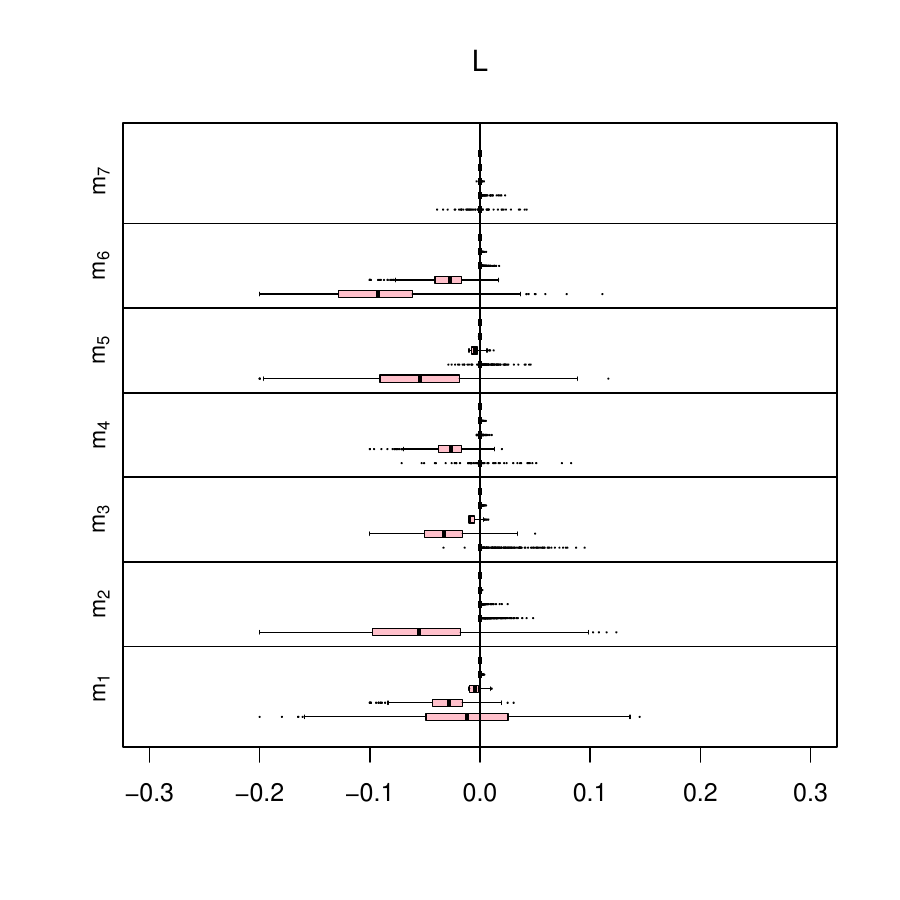} \\[-0.3cm]
\includegraphics[width=\textwidth]{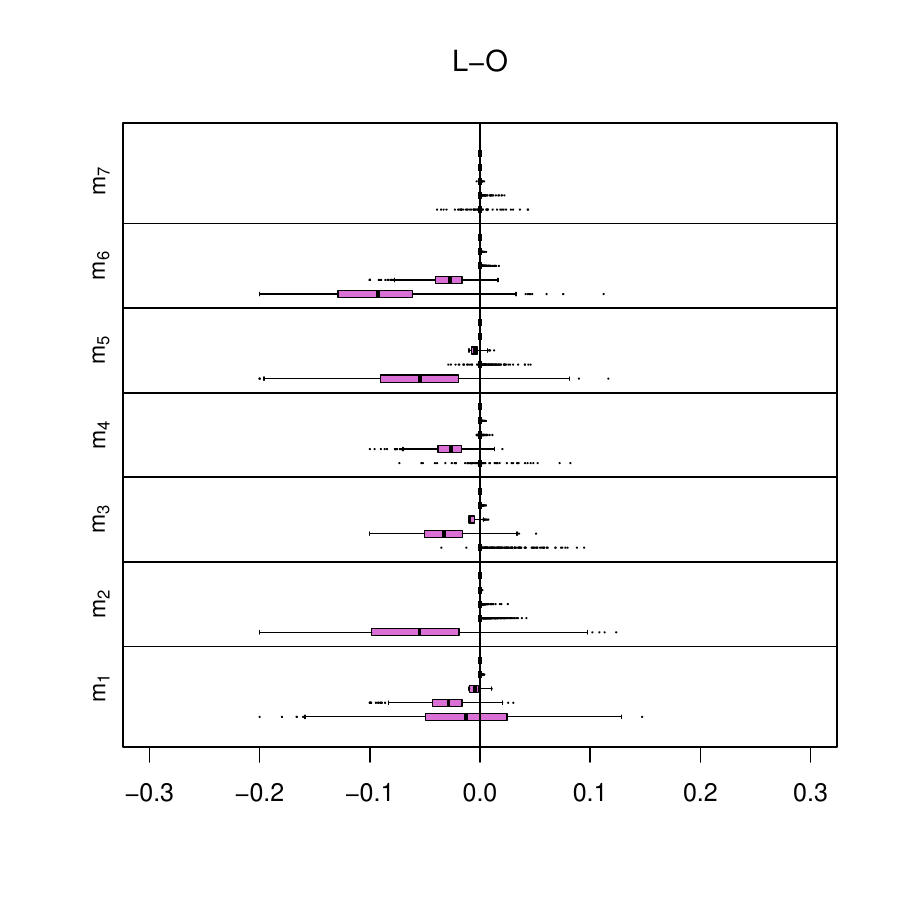} \\[-0.3cm]
\includegraphics[width=\textwidth]{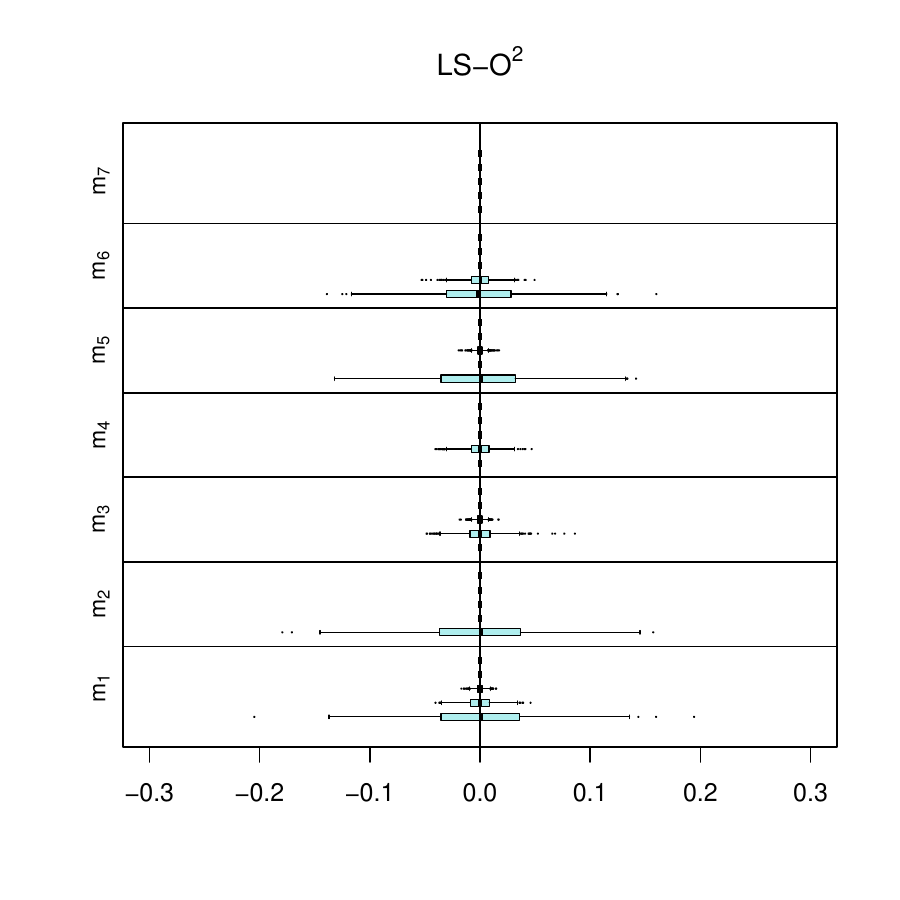} 
\subcaption{$T=15$}
\end{subfigure}
\hspace{0.2cm}
\begin{subfigure}[p]{0.475\textwidth}   
\centering 
\includegraphics[width=\textwidth]{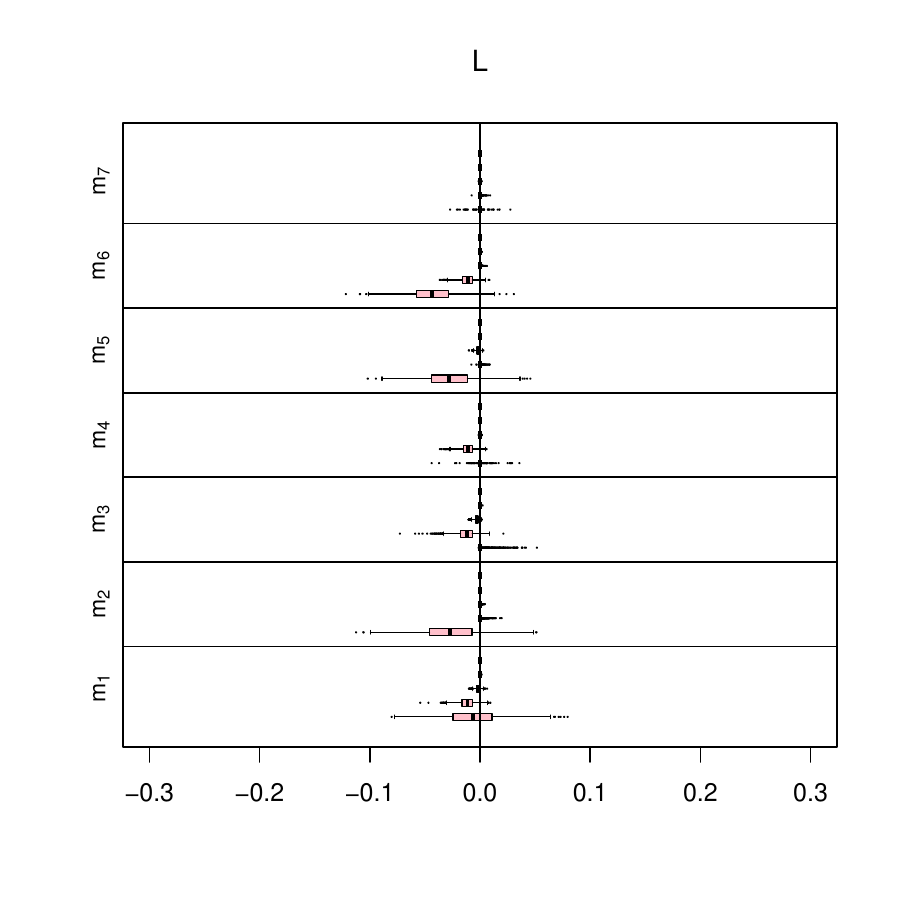} \\[-0.3cm]
\includegraphics[width=\textwidth]{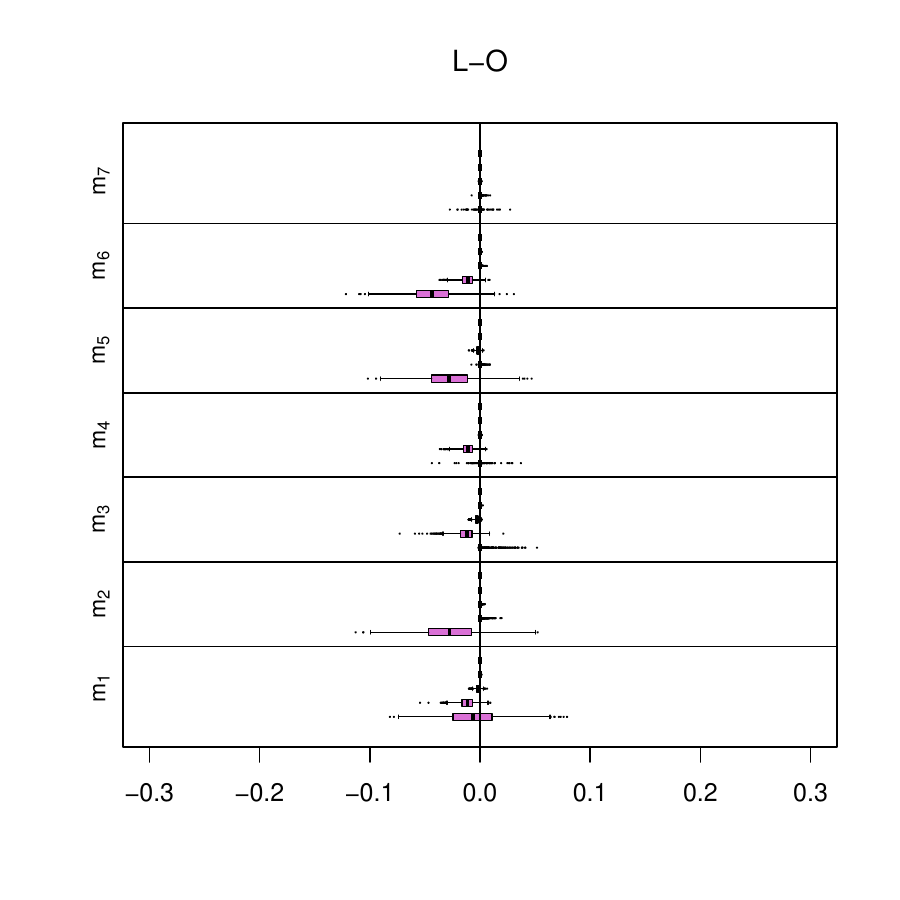} \\[-0.3cm]
\includegraphics[width=\textwidth]{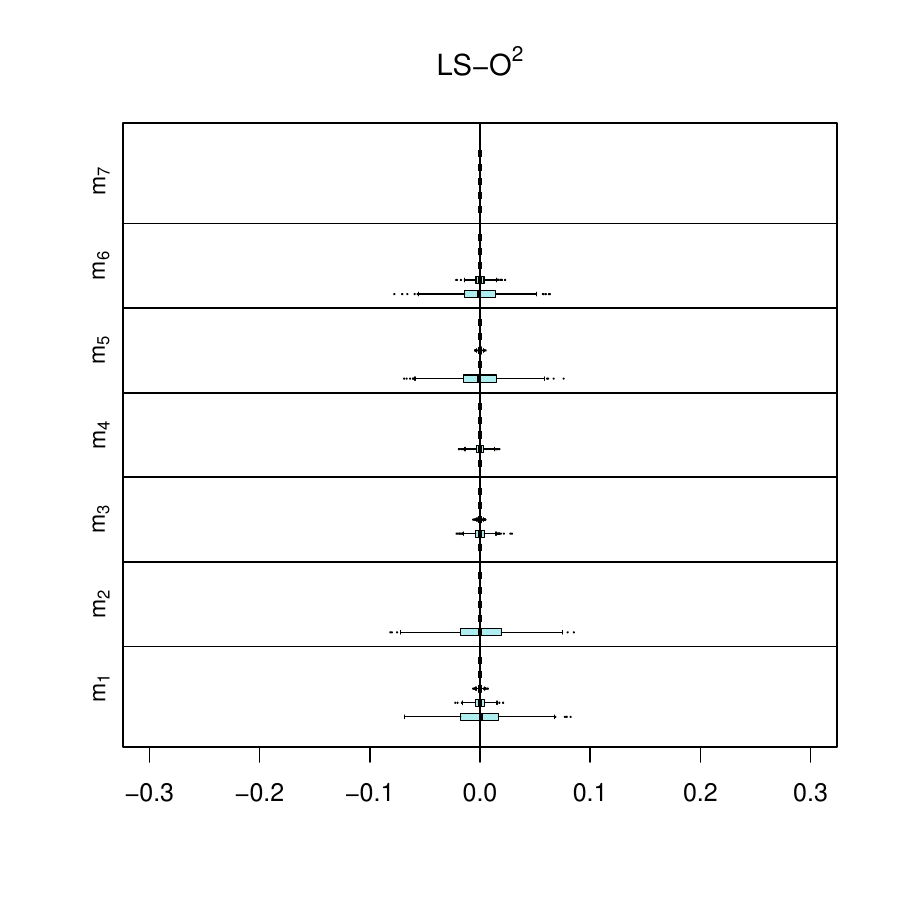}
\subcaption{$T=50$}
\end{subfigure}
\caption{Box plots of the estimation error $\widehat{\beta}_{j\ell}^{(a)}-\beta_{j\ell}$ for $j \in \{1,\ldots, 7\}$, $\ell \in \{1,\ldots, 5\}$ and $a \in  \{$L, L-O, LS-O$^2\}$. (Note: For the double-oracle estimator LS-O$^2$, there are box plots only for index pairs $(j,\ell)$ with $\beta_{j\ell} \ne 0$. To facilitate comparison with the other methods, empty rows corresponding to index pairs $(j,\ell)$ with $\beta_{j\ell} = 0$ have been retained in the two double-oracle panels.)}\label{fig:EstAccuracy1}
\end{figure}

\begin{figure}[t]
\centering
\begin{subfigure}[t]{0.475\textwidth}
\centering
\includegraphics[width = \textwidth]{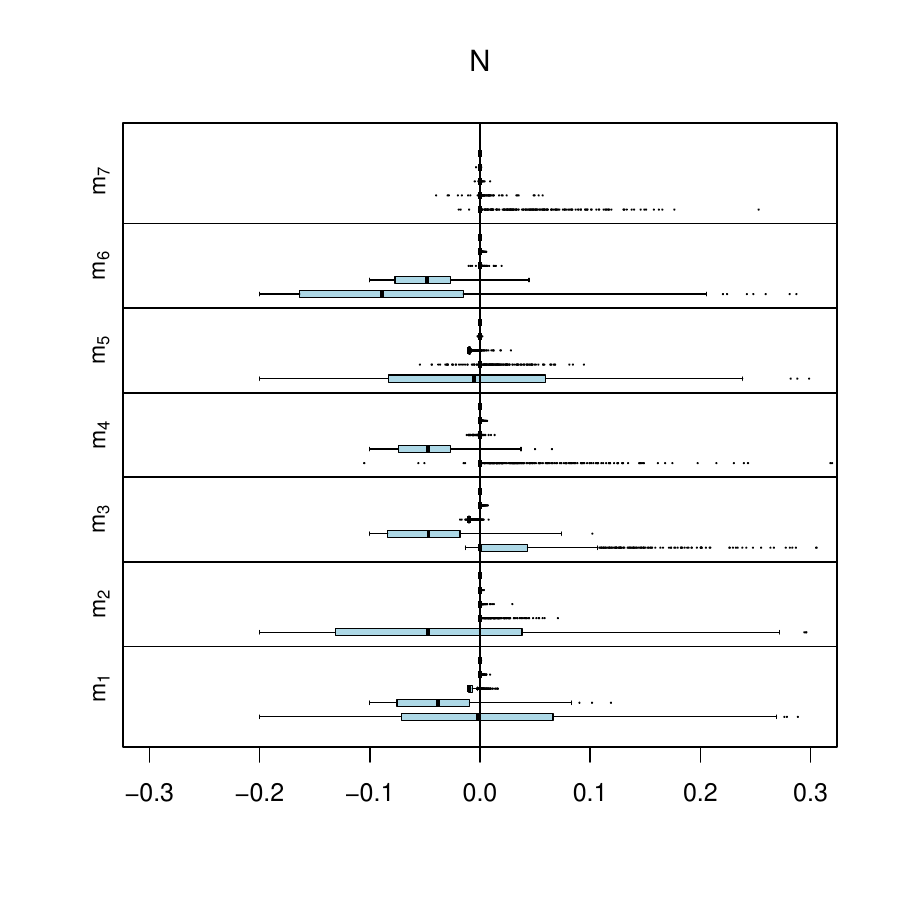}
\vspace{-0.75cm}

\subcaption{$T=15$}
\end{subfigure}
\hspace{0.2cm}
\begin{subfigure}[t]{0.475\textwidth}
\centering
\includegraphics[width = \textwidth]{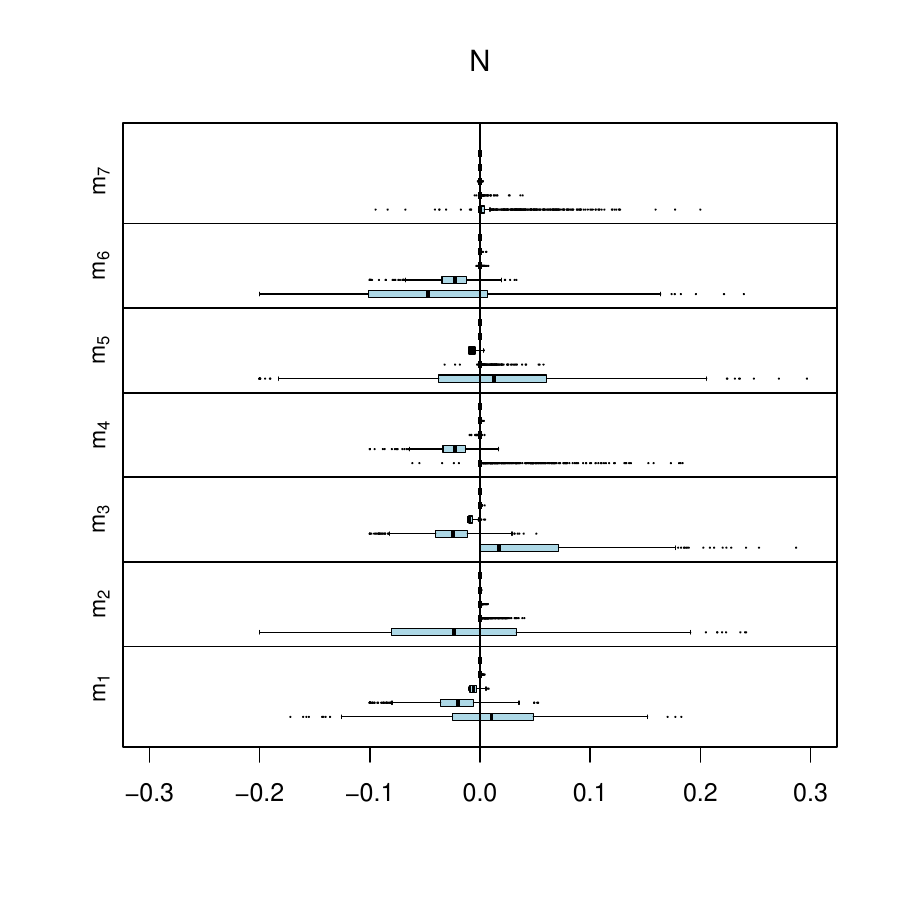}
\vspace{-0.75cm}

\subcaption{$T=50$}
\end{subfigure}

\caption{Box plots of the estimation error $\widehat{\beta}_{j\ell}^{(N)}-\beta_{j\ell}$ for $j \in \{1,\ldots, 7\}$ and $\ell \in \{1,\ldots, 5\}$.}
\label{fig:EstAccuracy1-fourth}
\end{figure}

\begin{figure}[!ht]
\hspace{-0.5cm}
\begin{subfigure}[p]{0.35\textwidth}   
\centering 
\includegraphics[width=\textwidth]{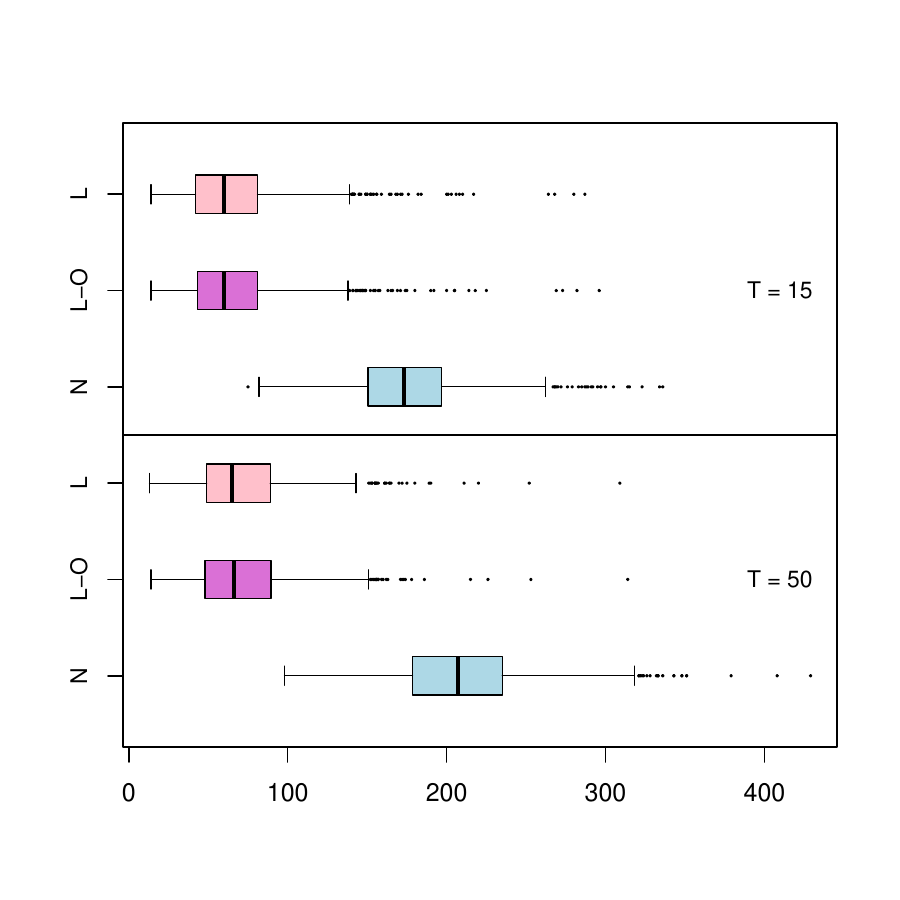} 
\vspace{-0.75cm}

\subcaption{}
\end{subfigure}%
\begin{subfigure}[p]{0.35\textwidth}   
\centering 
\includegraphics[width=\textwidth]{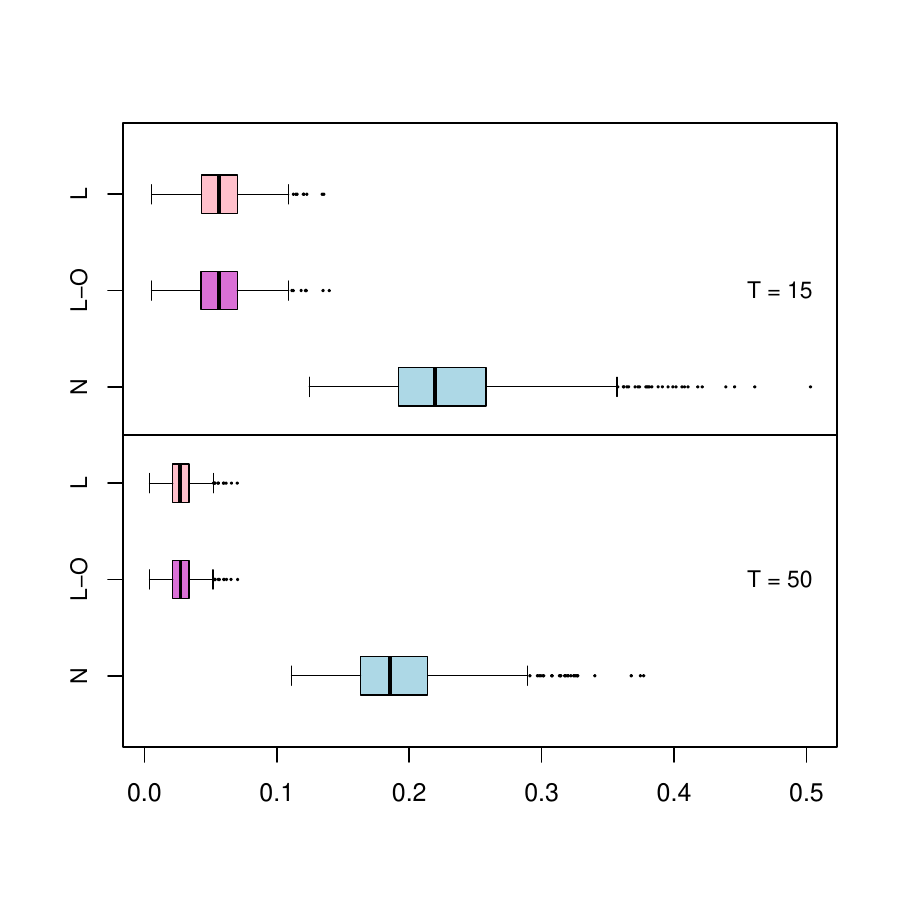} 
\vspace{-0.75cm}

\subcaption{}
\end{subfigure}%
\begin{subfigure}[p]{0.35\textwidth}   
\centering 
\includegraphics[width=\textwidth]{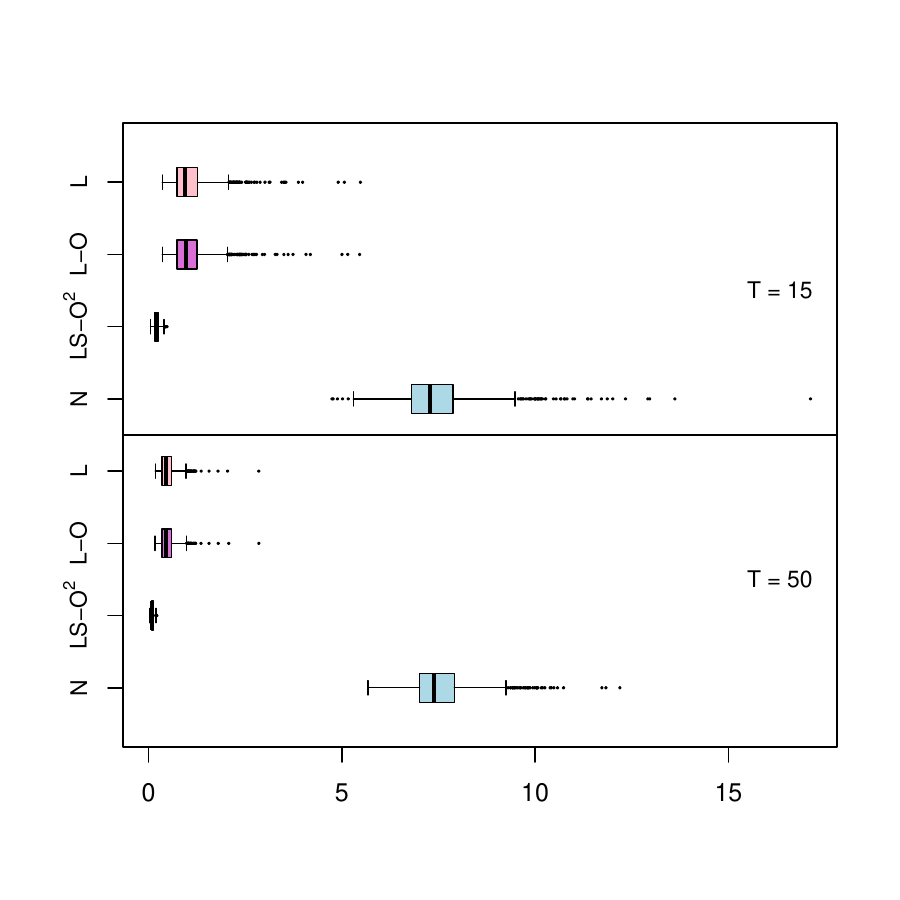}
\vspace{-0.75cm}

\subcaption{}
\end{subfigure}%
\caption{Box plots of (a) the number of false positives $|\textnormal{FP}^{(a)}|$, where $\textnormal{FP}^{(a)}  = \{(j,\ell): \widehat{\beta}_{j\ell}^{(a)}  \neq 0 \} \cap \{(j,\ell): \beta_{j\ell} = 0\}$ is the set of false positives, (b) the maximal size $\max_{(j,\ell) \in \textnormal{FP}^{(a)}}|\widehat{\beta}_{j\ell}^{(a)}|$ of a false positive, and (c) the $\ell_1$-error $\|\widehat{\beta}^{(a)}-\beta\|_1$ for $a \in  \{$L, L-O, LS-O$^2$, N$\}$. The top half of each panel shows the results for $T=15$, the  bottom half those for $T=50$.}\label{fig:EstAccuracy2}
\end{figure}

Figure \ref{fig:EstAccuracy3} presents the estimated regression functions produced by our HD-CCE method for $j \in \{1,\ldots,6\}$. More precisely, each panel displays the HD-CCE estimates $\widehat{m}_j^{(L)}(x)$ $= \widehat{\beta}_{j1}^{(L)}x + \ldots + \widehat{\beta}_{j5}^{(L)}x^5$ from the first $100$ (out of $1000$) simulation runs (pink lines), together with the true regression function $m_j$ (black line), for a different $j \in \{1,\ldots,6\}$. As can be seen, the estimates recover the underlying curves $m_j$ quite well, the estimation accuracy improving noticeably as $T$ increases. 

Figure \ref{fig:EstAccuracy1} depicts box plots of the estimated coefficients $\widehat{\beta}_{j\ell}^{(a)}$ for the methods $a \in  \{\textnormal{L, L-O}, \textnormal{LS-O}^2\}$ over $1000$ simulation runs. Specifically, each panel shows the estimation error $\widehat{\beta}_{j\ell}^{(a)} - \beta_{j\ell}$ for the coefficients of the first seven component functions $m_j$ (from bottom to top). We thus take into account the six component functions $m_1,\ldots,m_6$ with non-zero coefficients as well as $m_7 \equiv 0$ as a representative inactive function. (Note that for the double-oracle estimator, there are box plots only for the non-zero coefficients $\beta_{j\ell}$.)
Most notably, the box plots of our HD-CCE estimator (L) are very similar to those of its oracle counterpart (L-O). This demonstrates that our method performs nearly as well as the oracle, and in particular suggests that the estimated projection matrix $\widehat{\bs{\Pi}}$ is an accurate proxy of $\bs{\Pi}$. Unsurprisingly, the double-oracle estimator (LS-O$^2$) performs slightly better, as evidenced by the tighter box plots. However, our HD-CCE estimator and its oracle version are not far off, exhibiting quite similar performance. The main difference between the least-squares-based double-oracle (LS-O$^2$) and the lasso-based estimators (L and L-O) lies in the bias. Whereas the double-oracle is approximately unbiased, our HD-CCE estimator (L) and its oracle version (L-O) exhibit a substantial shrinkage bias, as is typical for lasso-type estimators. In particular, for zero coefficients $\beta_{j\ell}=0$, the box plots produced by our HD-CCE estimator (L) and its oracle version (L-O) often collapse to zero because the lasso correctly identifies these coefficients as zero in many simulation runs; only a small fraction of replications yield non-zero estimates, which appear as outliers in the plots.

Figure \ref{fig:EstAccuracy1-fourth} presents additional box plots for the naive estimator (N) which show that it performs much worse than our HD-CCE method. In particular, the box plots are much wider, evidencing a substantial loss in accuracy.  
Figure \ref{fig:EstAccuracy2} further reveals that compared to our HD-CCE method, the naive estimator performs poorly in terms of variable selection and $\ell_1$-error. Taken together, Figures \ref{fig:EstAccuracy1-fourth} and \ref{fig:EstAccuracy2} demonstrate that naively ignoring the factor structure and the resulting endogeneity produces flawed estimation results. 

As a final remark, we observe that the performance of our HD-CCE estimator improves in all of the above simulation exercises as we increase the time series length from $T=15$ to $T=50$. This improvement is not reflected in the convergence rate of Theorem \ref{theo:main}, which does essentially not depend on $T$ at all. This suggests that the rate of Theorem \ref{theo:main} is not sharp in the simulation setting at hand.

\subsection{Parameter estimation with interactions}
\label{sec:SIM-parameter-estimationwithinteractions}

\begin{figure}[p]
\centering
\begin{subfigure}[p]{0.475\textwidth}   
\centering 
\includegraphics[width=\textwidth]{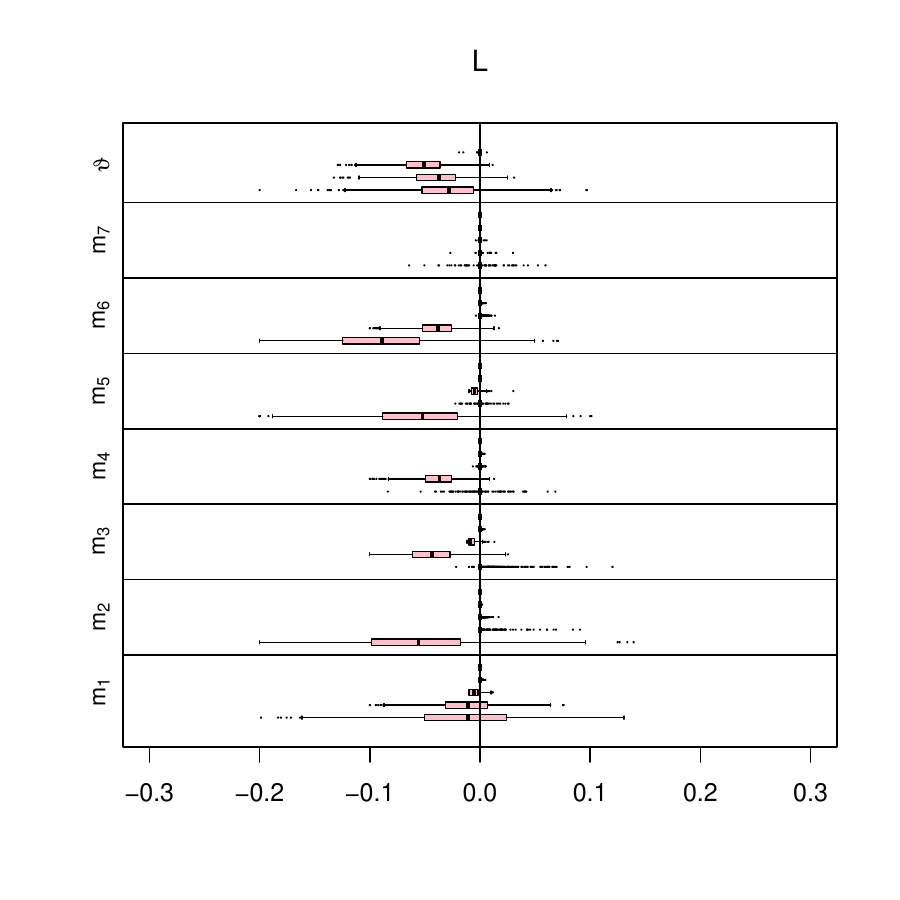} \\[-0.3cm]
\includegraphics[width=\textwidth]{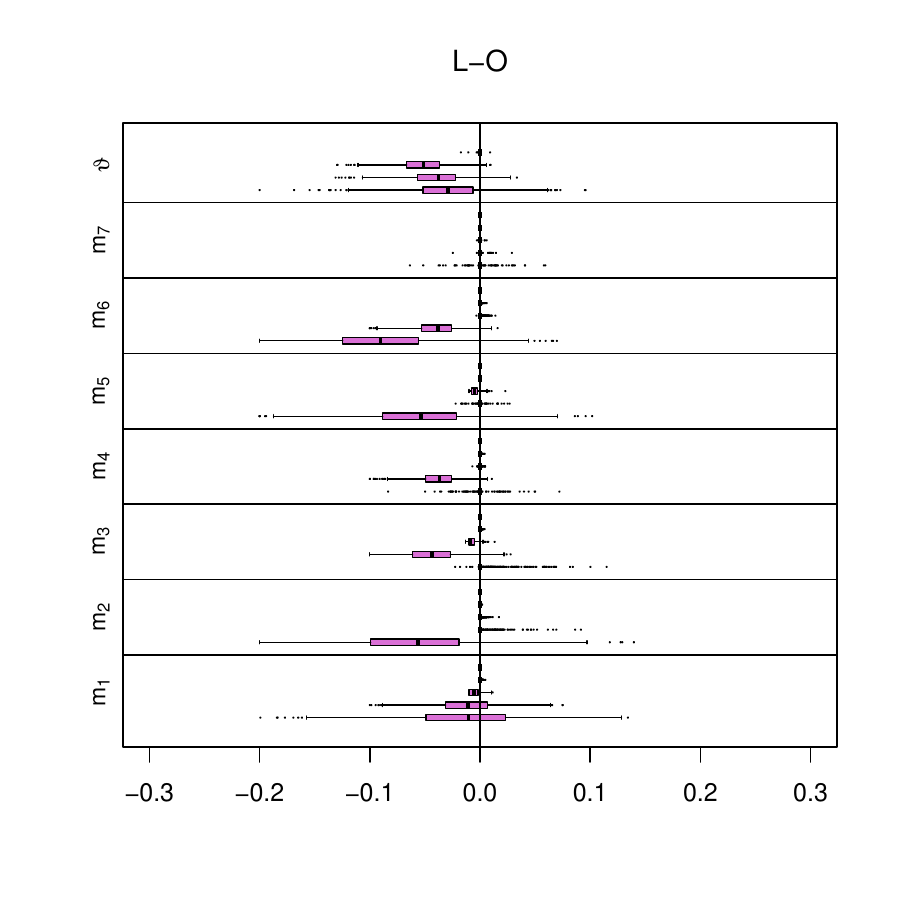} \\[-0.3cm]
\includegraphics[width=\textwidth]{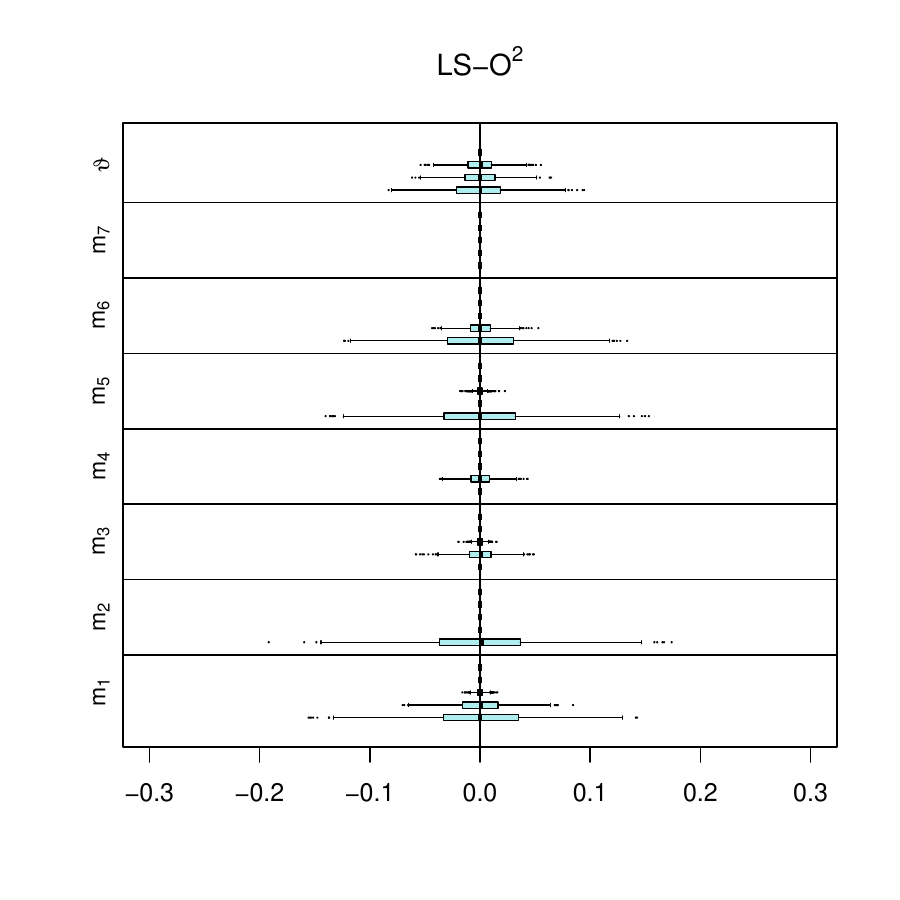}
\subcaption{$T=15$}
\end{subfigure}
\hspace{0.2cm}
\begin{subfigure}[p]{0.475\textwidth}   
\centering 
\includegraphics[width=\textwidth]{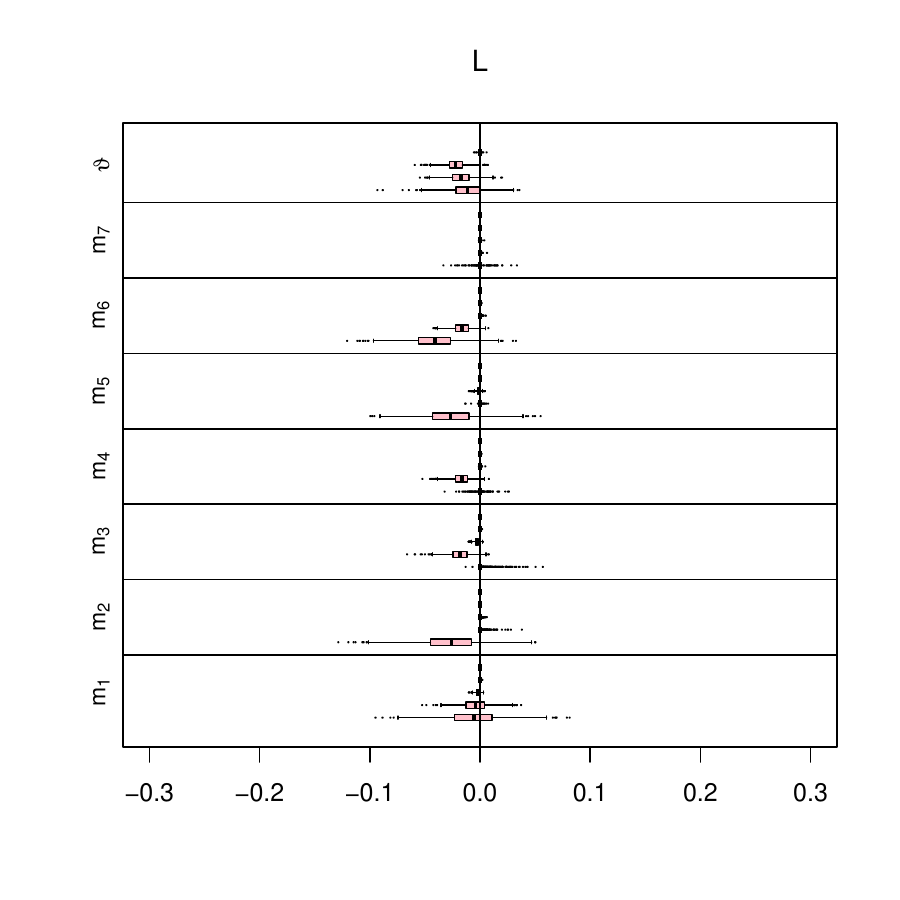} \\[-0.3cm]
\includegraphics[width=\textwidth]{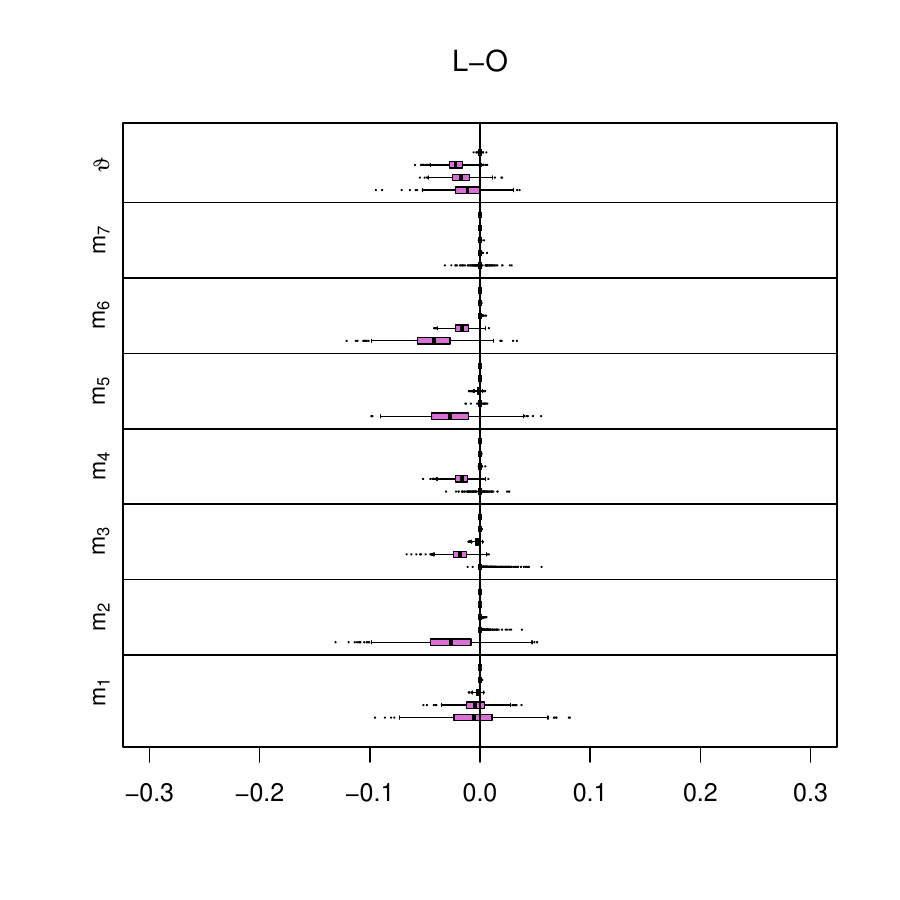} \\[-0.3cm]
\includegraphics[width=\textwidth]{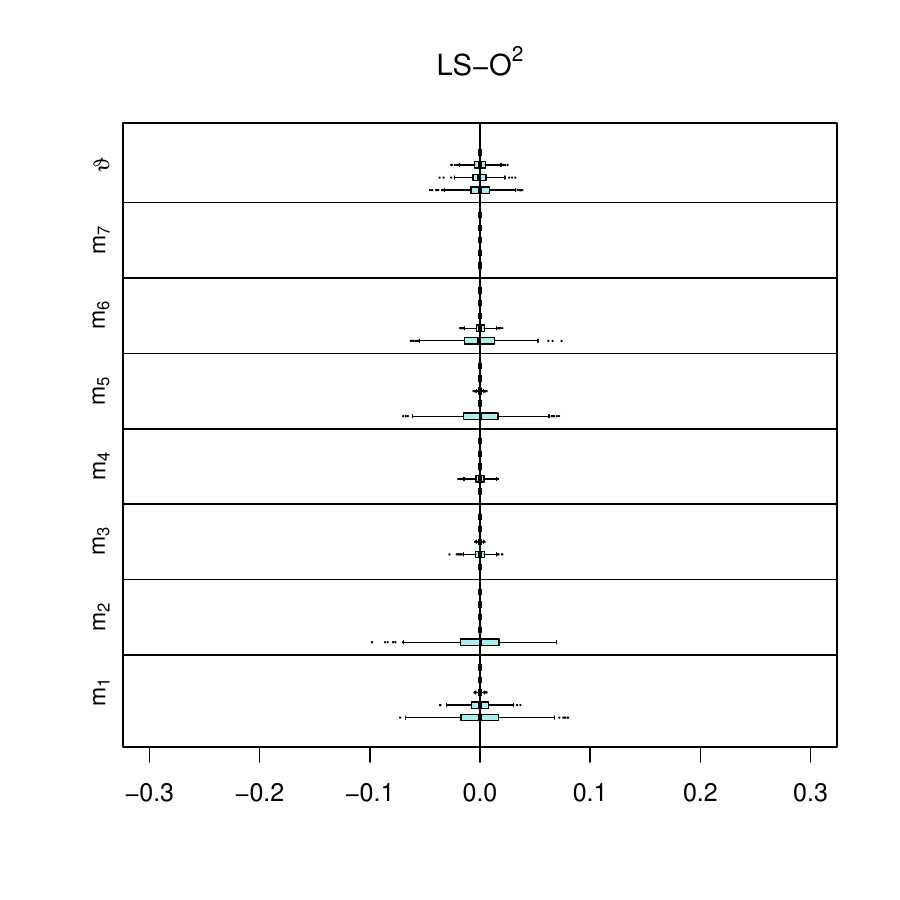}
\subcaption{$T=50$}
\end{subfigure}
\caption{Box plots of the estimation errors $\widehat{\beta}_{j\ell}^{(a)}-\beta_{j\ell}$ and $\widehat{\vartheta}_{k}^{(a)}-\vartheta_{k}$ for $j \in \{1,\ldots, 7\}$,  $\ell \in \{1,\ldots, 5\}$ and $k \in \{\supp{(\vartheta)}, 4095\}$, where $a$ runs over the methods L, L-O and LS-O$^2$. By $\supp(\vartheta)$, we denote the set of indices where $\vartheta$ has a non-zero entry, and $k=4095$ is chosen as a representative index where $\vartheta$ has a zero entry. }
\label{fig:INT1.1}
\end{figure}

\begin{figure}[t]
\centering
\begin{subfigure}[t]{0.475\textwidth}
\centering
\includegraphics[width = \textwidth]{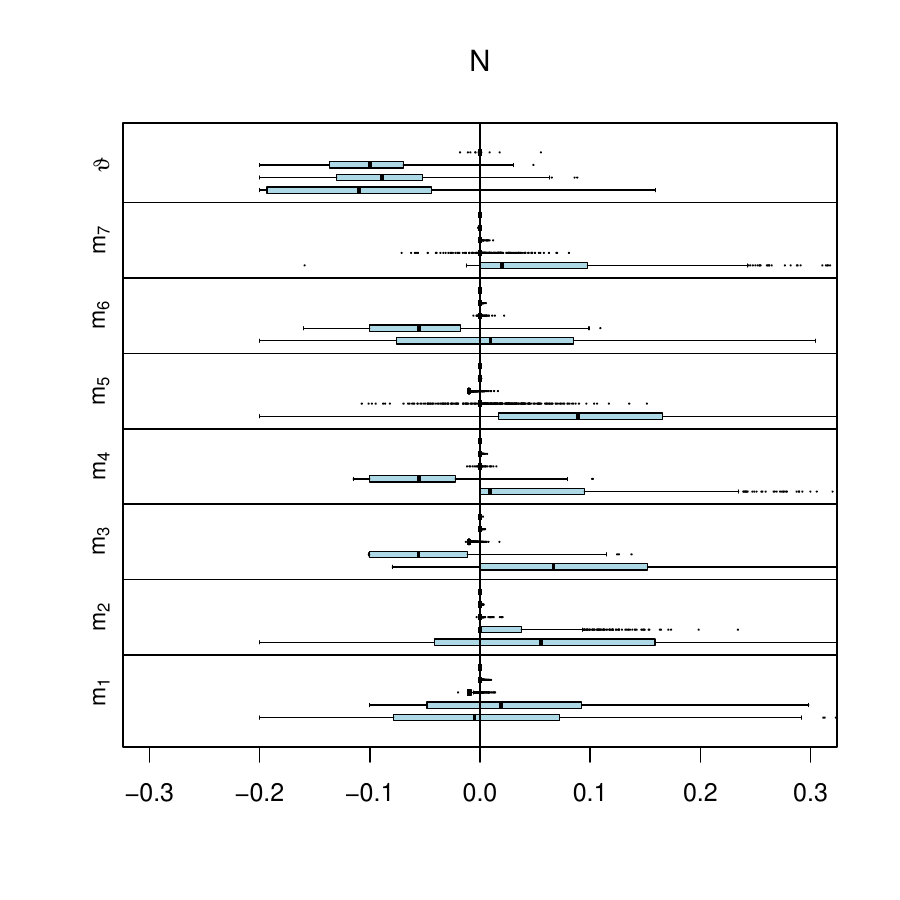}
\vspace{-0.75cm}

\subcaption{$T=15$}
\end{subfigure}
\hspace{0.2cm}
\begin{subfigure}[t]{0.475\textwidth}
\centering
\includegraphics[width = \textwidth]{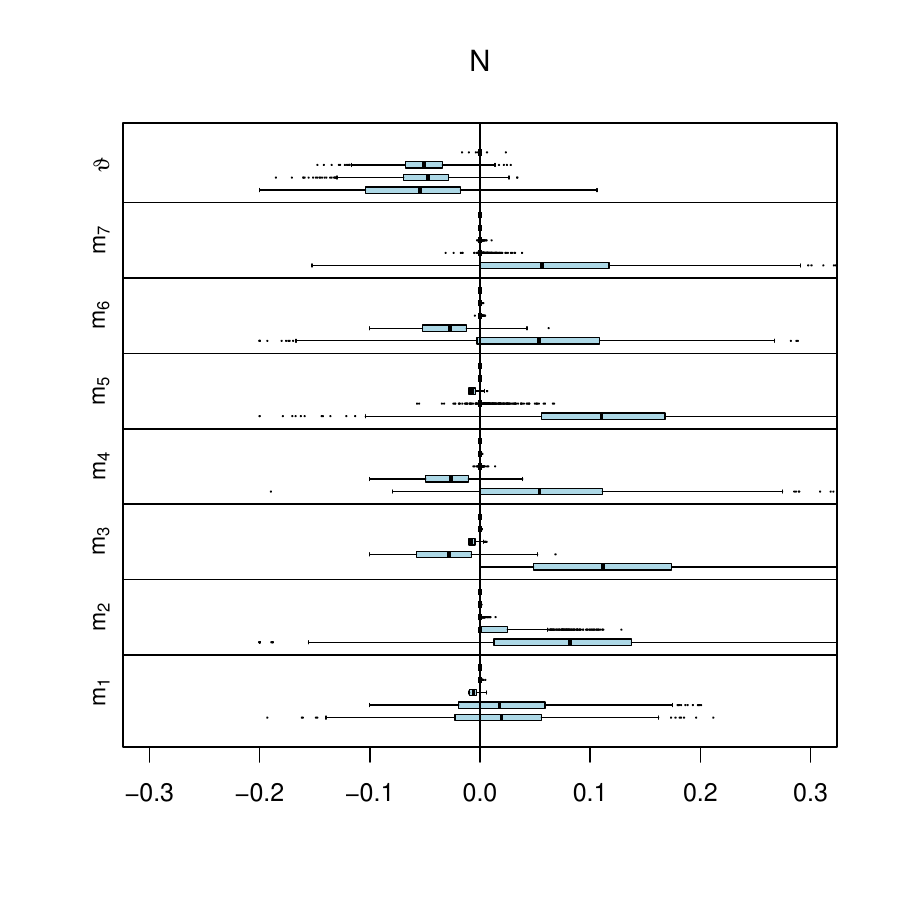}
\vspace{-0.75cm}

\subcaption{$T=50$}
\end{subfigure}

\caption{Box plots of the estimation errors $\widehat{\beta}_{j\ell}^{(N)}-\beta_{j\ell}$ and $\widehat{\vartheta}_{k}^{(N)}-\vartheta_{k}$ for $j \in \{1,\ldots, 7\}$, $\ell \in \{1,\ldots, 5\}$ and $k \in \{\supp{(\vartheta)}, 4095\}$.}
\label{fig:INT1.1-fourth}
\end{figure}

\begin{figure}[!ht]
\hspace{-0.5cm}
\begin{subfigure}[p]{0.35\textwidth}   
\centering 
\includegraphics[width=\textwidth]{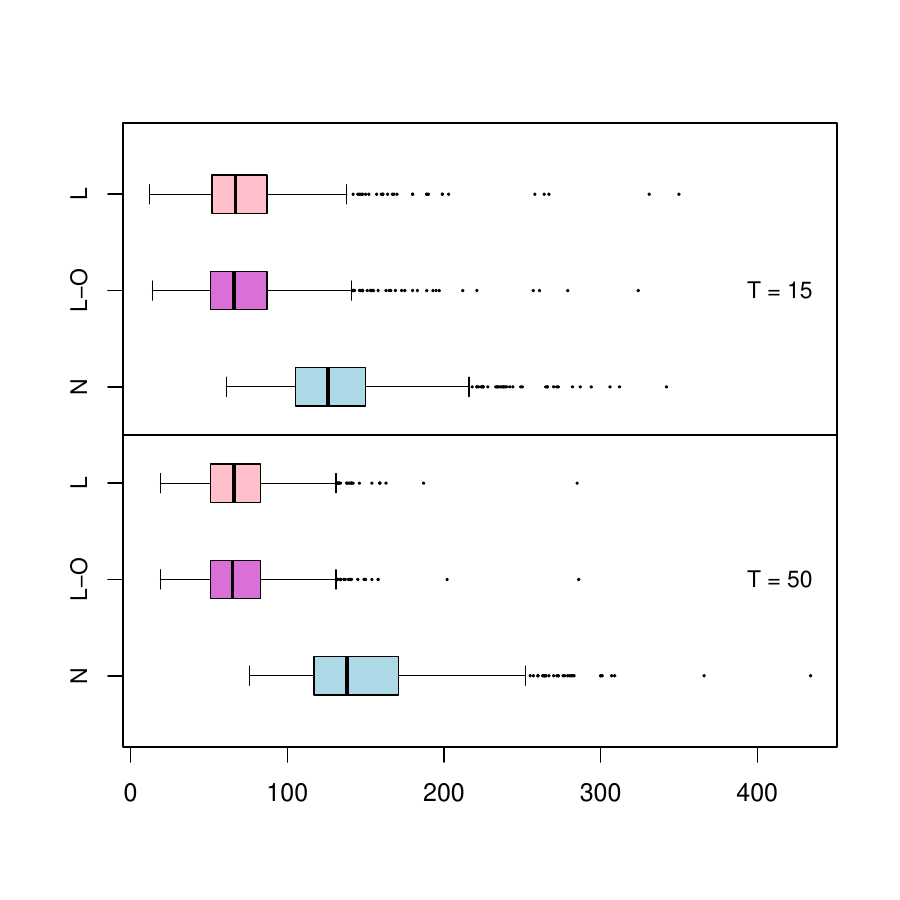} 
\vspace{-0.75cm}

\subcaption{}
\end{subfigure}%
\begin{subfigure}[p]{0.35\textwidth}   
\centering 
\includegraphics[width=\textwidth]{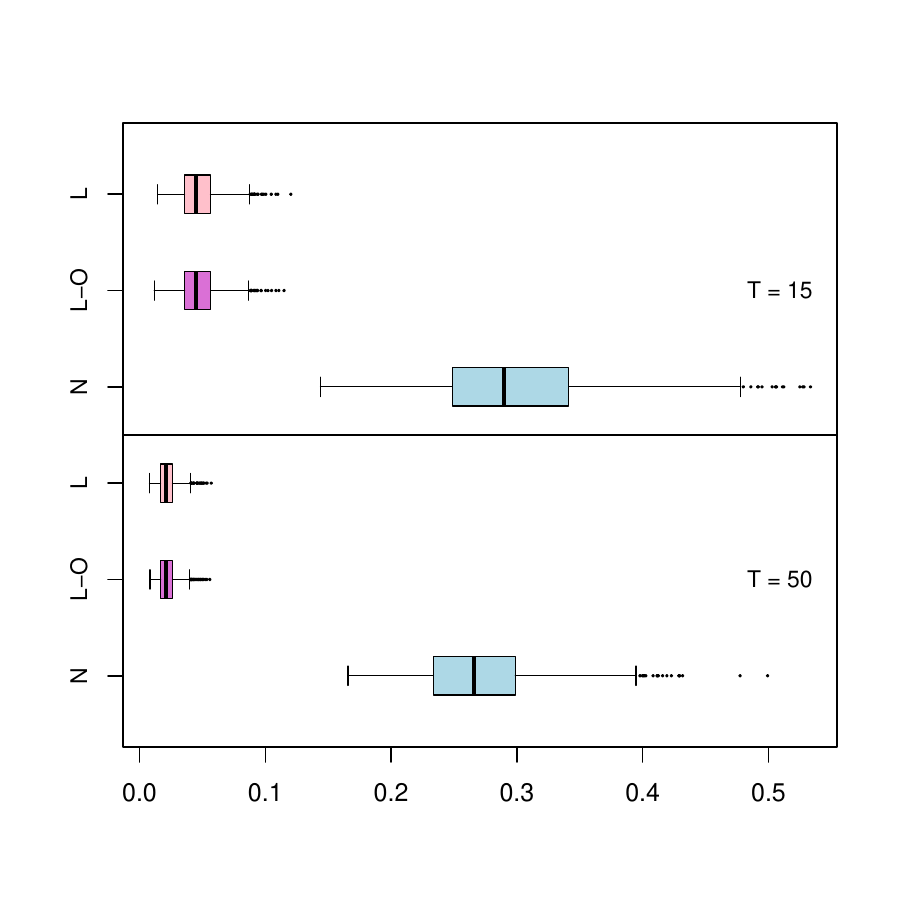} 
\vspace{-0.75cm}

\subcaption{}
\end{subfigure}%
\begin{subfigure}[p]{0.35\textwidth}   
\centering 
\includegraphics[width=\textwidth]{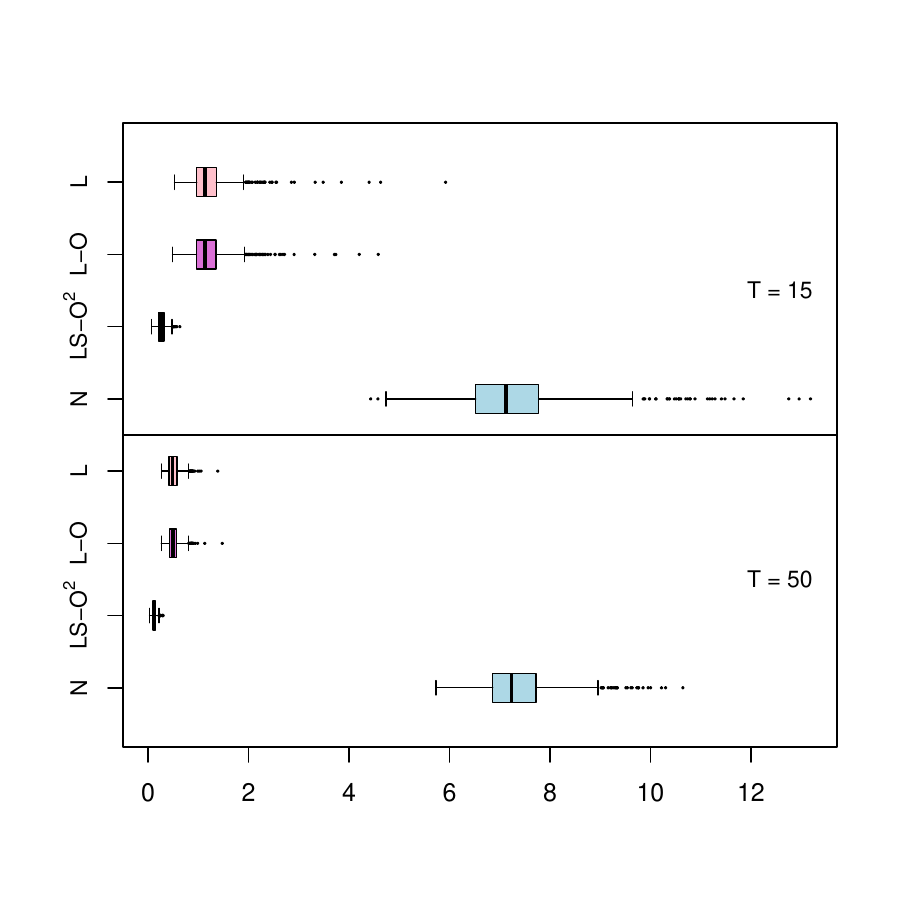}
\vspace{-0.75cm}

\subcaption{}
\end{subfigure}%
\caption{Box plots of (a) the number of false positives, (b) the maximal size of a false positive, and (c) the $\ell_1$-error for the different methods L, L-O, LS-O$^2$ and N. The top half of each panel shows the results for $T=15$, the  bottom half those for $T=50$.}
\label{fig:INT1.2}
\end{figure}

We now investigate how our HD-CCE method performs in the presence of interaction effects. To do so, we include pairwise interaction terms in our simulation setting. In particular, we consider the model  
\[ Y_{it} = m_1(X_{it,1}) + \ldots + m_6(X_{it,6}) + D_{it}^\top \vartheta + F_t^\top \gamma_i + \varepsilon_{it}, \]
where $D_{it}$ contains all pairwise interactions of the covariates $X_{it,j}$, i.e., $D_{it}$ is a vector of length $p(p-1)/2$ with the entries $X_{it,j}X_{it,j'}$ for $1\le j<j'\le p$. 
All other model components are chosen as before. As adding interaction terms increases the dimensionality of the model, we reduce the number of baseline covariates to $p=91$, corresponding to $p=3g+1$ with $g=30$.
The sparse parameter vector $\vartheta$ is chosen as follows: we set the three entries of $\vartheta$ corresponding to the interactions 
$X_{it,1}X_{it,2}$, $X_{it,1}X_{it,g+1}$ and $X_{it,1}X_{it,2g+1}$ to $0.2$, while all other entries of $\vartheta$ are set to zero. The signal-to-noise ratio in the resulting model is approximately equal to 2: 
\[ \textnormal{SNR} =
\frac{\var\big(\sum_{j=1}^6 m_j(X_{it,j}) + D_{it}^\top \vartheta\big)}
{\var(F_t^\top \gamma_i + \varepsilon_{it})}
\approx 2. \]
The full parameter vector to be estimated is $(\beta^\top,\vartheta^\top)^\top$, where $\beta$ is a vector of length $pL = 455$ and $\vartheta$ has length $p(p-1)/2=4095$. The overall dimensionality thus amounts to $4550$. The estimators analyzed in this section are the same as in Section \ref{sec:SIM-parameter-estimation}, where we let $\widehat{\beta}^{(a)}$ and $\widehat{\vartheta}^{(a)}$ denote the estimator of $\beta$ and $\vartheta$, respectively, for $a \in  \{$L, L-O, LS-O$^2$, N$\}$.

We carry out the same simulation exercises as in the previous section. The results are reported in Figures \ref{fig:INT1.1}--\ref{fig:INT1.2} and suggest that our HD-CCE method works well also in the presence of interactions. The main findings are analogous to those from before: our HD-CCE estimator (L) performs very similar to its oracle version (L-O), the performance is not far off from the double-oracle (LS-O$^2$), and the naive estimator (N) is clearly outperformed in all metrics considered.


\subsection{Inference} \label{sec:SIMInference}

\begin{table}[t]
\centering
\caption{Empirical size and power of the test for $T=15.$}
\label{tab:size_power_h_T15}
\resizebox{0.95\textwidth}{!}{%
\begin{tabular}{@{\extracolsep{5pt}} lccccccccc}
\\[-1.8ex]\hline
\hline \\[-1.8ex]
& \multicolumn{3}{c}{\(h=0.2\)} & \multicolumn{3}{c}{\(h=0.3\)} & \multicolumn{3}{c}{\(h=0.4\)} \\[0.3ex]
\cline{2-4}\cline{5-7}\cline{8-10} \\[-1.8ex]
& \(\alpha=0.01\) & \(\alpha=0.05\) & \(\alpha=0.1\) 
& \(\alpha=0.01\) & \(\alpha=0.05\) & \(\alpha=0.1\) 
& \(\alpha=0.01\) & \(\alpha=0.05\) & \(\alpha=0.1\) \\[0.3ex]
\hline \\[-1.8ex]
\(H_0\) &$0.009$ & $0.056$ & $0.108$ & $0.015$ & $0.045$ & $0.092$ & $0.011$ & $0.049$ & $0.097$ \\ 
\(H_1^{(1)}\) & $0.596$ & $0.811$ & $0.872$ & $0.633$ & $0.813$ & $0.876$ & $0.785$ & $0.898$ & $0.943$ \\ 
\(H_1^{(2)}\) & $0.830$ & $0.939$ & $0.969$ & $0.840$ & $0.928$ & $0.970$ & $0.934$ & $0.985$ & $0.992$ \\ 
\hline \\[-1.8ex]
\end{tabular}
}
\vspace{0.4cm}

\caption{Empirical size and power of the test for $T=50.$}
\label{tab:size_power_h_T50}
\resizebox{0.95\textwidth}{!}{%
\begin{tabular}{@{\extracolsep{5pt}} lccccccccc}
\\[-1.8ex]\hline
\hline \\[-1.8ex]
& \multicolumn{3}{c}{\(h=0.2\)} & \multicolumn{3}{c}{\(h=0.3\)} & \multicolumn{3}{c}{\(h=0.4\)} \\[0.3ex]
\cline{2-4}\cline{5-7}\cline{8-10} \\[-1.8ex]
& \(\alpha=0.01\) & \(\alpha=0.05\) & \(\alpha=0.1\) 
& \(\alpha=0.01\) & \(\alpha=0.05\) & \(\alpha=0.1\) 
& \(\alpha=0.01\) & \(\alpha=0.05\) & \(\alpha=0.1\) \\[0.3ex]
\hline \\[-1.8ex]
\(H_0\) & $0.010$ & $0.054$ & $0.106$ & $0.007$ & $0.044$ & $0.078$ & $0.010$ & $0.034$ & $0.082$ \\
\(H_1^{(1)}\) &$0.999$ & $1$ & $1$ & $0.999$ & $1$ & $1$ & $1$ & $1$ & $1$ \\  
\(H_1^{(2)}\) & $1$ & $1$ & $1$ & $1$ & $1$ & $1$ & $1$ & $1$ & $1$ \\
\hline \\[-1.8ex]
\end{tabular}
}
\end{table}

We finally examine the finite sample performance of the test procedure proposed in Section \ref{sec:inference}. We consider the test problem 
\[H_0: m_1 = 0 \quad  \textnormal{versus} \quad H_1: m_1\neq 0 \]
within the simulation design from Section \ref{sec:SIM-parameter-estimation}. To analyze the performance of the test under the null, we set $m_1 = 0$ (but do not change anything else in the design). To investigate the performance under the alternative, we consider two different specifications of $m_1$: $m_1(x) = 0.75 \cdot m(x)$ and $m_1(x)=1 \cdot m(x)$ with $m(x) =  0.2x+0.1x^2+0.01x^3$ (again leaving everything else in the design unchanged). We refer to the first specification as $H_1^{(1)}$ and to the second as $H_1^{(2)}$. 
The test statistic  
\[ \Psi := \max_{w\in \mathcal{W}} |\Psi_{w,h}| \quad \text{with} \quad 
\Psi_{w,h} = \frac{\sum_{i=1}^n R_i^\top \tau_{w,h}(\widecheck{u}_i)}{\{ \sum_{i=1}^n \|\widecheck{\bs{\Pi}} \tau_{w,h}(\widecheck{u}_i)\|^2 \}^{1/2}} \]
is computed with the following choices:
\begin{itemize}[leftmargin=0.45cm]
\item $\mathcal{W} = \{ w \in [-2,2]: w = -2 + (2\ell-1) h$ for some positive integer $ \ell \}$ with three different choices of $h$, in particular, $h \in \{0.2, 0.3, 0.4\}$.
\item $\tau$ is the Epanechnikov kernel: $\tau(x) = \frac{3}{4}(1-x^2)\mathbf{1}\{|x|\leq 1\}$.
\item The regularization parameters $\lambda$ and $\kappa$ of the estimators $\widecheck{\beta}_\lambda$ and estimator $\widecheck{\vartheta}_\kappa$ are computed by 10-fold cross-validation as detailed in Section \ref{sec:est:tuning:tau} combined with the one-standard-error rule (which is implemented as option \texttt{lambda.1se} in the \texttt{R} package \texttt{glmnet}). The threshold parameter $\tau$ is chosen as before, i.e., as described in Section \ref{sec:sim-design}.
\end{itemize}
The test rejects $H_0$ at level $\alpha \in (0,1)$ if $\Psi>\widehat{c}_{1-\alpha}$, where $\widehat{c}_{1-\alpha}$ is computed by Monte Carlo simulations as described in Section \ref{sec:inference}. We set the significance level to $\alpha \in \{0.01,0.05,0.1\}$ throughout.

Tables \ref{tab:size_power_h_T15} and \ref{tab:size_power_h_T50} report the empirical size and power of the test for $T=15$ and $T=50$, respectively. Both empirical size and power are defined as the proportion of simulation runs in which the test rejects the null, with empirical size being computed under the null and empirical power under the alternative.
Tables \ref{tab:size_power_h_T15} and \ref{tab:size_power_h_T50} indicate that the test provides satisfactory size control. Notably, the size numbers are fairly stable across the different choices of the bandwidth $h$. Moreover, the test exhibits substantial power against both alternatives under consideration, the power numbers increasing markedly as we move further away from the null, i.e., from $H_1^{(1)}$ to $H_1^{(2)}$.

%% file: ms_app.tex
\section{Empirical study}
\label{sec:empiricalstudy}

\begin{table}[b!]
\caption{Ticker symbols and company names of stocks used in the application.}
\label{tab:Stock_list_Application} 
\hspace{0.5cm}
\scriptsize{
\begin{tabular}{p{1.2cm}p{5.25cm}p{1.2cm}p{5.25cm}}
\textbf{Ticker}  &  \textbf{Company} & \textbf{Ticker}  &  \textbf{Company} \\[0.1cm]
\textit{AAPL}  & Apple Inc.  & \textit{INTC}  & Intel Corporation  \\
\textit{AMGN}  & Amgen Inc.  & \textit{JNJ}  & Johnson \& Johnson  \\
\textit{AMZN}  & Amazon.com, Inc. & \textit{JPM}   & JPMorgan Chase \& Co.  \\
\textit{AXP}   & American Express Company & \textit{KO}  & The Coca-Cola Company    \\
\textit{BA}  & The Boeing Company  & \textit{MCD}  & McDonald's Corporation  \\
\textit{CAT}  & Caterpillar Inc.  & \textit{MMM}  & 3M Company  \\
\textit{CRM}   & Salesforce, Inc. & \textit{MSFT}  & Microsoft Corporation   \\
\textit{CSCO}  & Cisco Systems, Inc. & \textit{NKE}   & Nike, Inc.  \\
\textit{CVX}  & Chevron Corporation  & \textit{PFE}  & Pfizer Inc.  \\
\textit{DIS}  & The Walt Disney Company  & \textit{PG}   & The Procter \& Gamble Company  \\
\textit{GE}  & General Electric Company & \textit{TRV}   & The Travelers Companies, Inc.  \\
\textit{GS}   & The Goldman Sachs Group, Inc. &  \textit{UNH} & UnitedHealth Group Incorporated 
\\ \textit{HD}  & The Home Depot, Inc. & \textit{VZ}  & Verizon Communications Inc.  \\
\textit{HON}  & Honeywell International Inc. & \textit{WMT}  & Walmart Inc.  \\	  
\textit{IBM}   & Internat.\ Business Machines Corp. & & \\[0.1cm]
\end{tabular}}
\end{table}

We revisit the empirical application to financial data from \cite{Mruecker2025CCE}. The main objective is to determine which firm characteristics help explain stock returns. We suppose that the monthly excess stock return $R_{it}$ of firm $i$ at time $t$ satisfies the model equation 
\begin{equation}\label{eq:model-app-prelim}
R_{it} = \mu_i + \sum_{j=1}^p \beta_j^\top \phi_j(C_{i,t-1,j}) + \vartheta^\top D_{i,t-1} + \sum_{k=1}^K \gamma_{i,k} F_{t,k} + \varepsilon_{it}
\end{equation}
for $1 \le i \le n$ and $1 \le t \le T$, where the latent factors $F_{t,k}$ capture the comovement of returns, $\mu_i$ is a firm-specific mean, $C_{i,t-1} \in \reals^p$ is a vector of firm characteristics observed at time $t-1$ and  $D_{i,t-1} \in \mathbb{R}^{p(p-1)/2}$ collects all pairwise interactions $C_{i,t-1,j} C_{i,t-1,j'}$ between firm characteristics $j$ and $j'$. We assume each characteristic $j$ to enter the model via a cubic polynomial, i.e., $\phi_j(x) = (\phi_{j1}(x),\phi_{j2}(x),\phi_{j3}(x))^\top$ with $\phi_{j\ell}(x) = x^\ell$ for each $j$. Using the notation $\gamma_{i,0} := \mu_i$ and $F_{t,0} = 1$ for all $t$, model \eqref{eq:model-app-prelim} can be reformulated as
\begin{equation}\label{eq:model-app}
R_{it} = \sum_{j=1}^p \beta_j^\top \phi_j(C_{i,t-1,j}) + \vartheta^\top D_{i,t-1} + \sum_{k=0}^K \gamma_{i,k} F_{t,k} + \varepsilon_{it},
\end{equation}
which is an additive parametric model of the form \eqref{eq:model-add} that has been extended to include interactions. Model \eqref{eq:model-app} generalizes the linear framework from Section 8 in \cite{Mruecker2025CCE} by allowing for (i) quadratic and cubic covariate effects as well as (ii) interaction effects.

\begin{table}[b!]
\scriptsize
\caption{Firm characteristics used in the application. The table contains the acronym and a brief description of the considered characteristics. More details on the variable definitions and constructions can be found in \cite{Green2017}.}
\label{tab:Var_list_Application} 
\begin{tabular}{@{} p{2.1cm} p{4.95cm}}
\textbf{Acronym} & \textbf{Description} \\[0.1cm]
\textit{absacc}  & absolute accruals \\ 
\textit{acc}   & working capital accruals \\ 
\textit{aeavol}   & abnormal earnings announcement vol.\ \\  
\textit{age}  & years since first Compustat coverage \\
\textit{agr}   & asset growth \\
\textit{baspread}   & bid-ask spread \\ 
\textit{beta}   & beta \\
\textit{betasq}   & beta squared \\
\textit{bm}   & book-to-market \\ 
\textit{bm\_ia}  & book-to-market (ind adj) \\ 
\textit{cash}  & cash to assets \\ 
\textit{cashdebt}   & cash flow to debt \\ 
\textit{cashpr}  & cash productivity \\ 
\textit{cfp}  & operating cash flow to price \\  
\textit{cfp\_ia}   & operating cash flow to price (ind adj) \\ 
\textit{chatoia}   & change in asset turnover (ind adj) \\  
\textit{chcsho}  & change in outstanding shares \\  
\textit{chempia}  & change in employees (ind adj) \\ 
\textit{chfeps}  & change in EPS forecast \\ 
\textit{chinv}   & change in inventory \\  
\textit{chnanalyst}  & change in analyst coverage \\ 
\textit{chpmia} & change in profit margin (ind adj) \\ 
\textit{chtx}   & change in taxes \\
\textit{cinvest}   & corporate investment \\ 
\textit{convind}   & convertible debt indicator \\  
\textit{currat}   & current ratio \\ 
\textit{depr}  & depreciation to PP\&{}E \\ 
\textit{disp}   & EPS forecast dispersion \\
\textit{divo}  & dividend omission \\
\textit{dolvol}  & dollar volume \\
\textit{dy}   & dividend yield \\
\textit{egr}   & growth in book value of equity \\
\textit{ep}  & earning to price \\ 
\textit{fgr5yr}  & 5yr EPS growth forecast \\
\textit{gma}  & gross profits to assets \\
\textit{grcapx}   & growth in capital expenditure \\ 
\textit{grltnoa}  & growth in long term net-op.\ assets \\
\textit{herf}  & industry sales concentration \\
\textit{hire}  & employment growth \\
\textit{idiovol}    & idiosyncratic return volatility \\
\textit{indmom}  & industry momentum \\ 
\textit{invest}  & capital expenditures and inventory \\ 
\textit{lev}  & leverage \\
\textit{lgr}  & growth in liabilities \\
\textit{ms} & financial performance score \\  
\textit{mve}  & size  \\
\end{tabular}
\begin{tabular}{p{2.1cm} p{4.95cm}}
\textbf{Acronym} & \textbf{Description} \\[0.1cm]
\textit{mve\_ia}  & size (ind adj)  \\
\textit{nanalyst}   & analyst coverage  \\
\textit{nincr}  & length of earnings run   \\
\textit{operprof}   & operating profits / book equity   \\
\textit{orgcap}  & organizational capital  \\
\textit{pchcapx\_ia}   & \% change in capital exp. (ind adj)   \\
\textit{pchcurrat}  & \% change in current ratio  \\
\textit{pchdepr}   & \% change in depreciation to PP\&{}E   \\
\textit{pchgm\_pchsale}  & \% change in sales \\
\textit{pchquick} & \% change in quick ratio \\
\textit{pchsale\_pchinvt}  & \% change in inventory   \\
\textit{pchsale\_pchrect} & \% change in receivables    \\
\textit{pchsale\_pchxsga}  & \% change in sales growth \\ 
& - \% change in overheads \\
\textit{pchsaleinv} & \% change in sales-to-inventory   \\
\textit{pctacc}  & percent accruals  \\
\textit{pricedelay}  & price delay \\
\textit{ps}   & financial health score  \\
\textit{quick}   & quick ratio  \\
\textit{rd}  & large R\&{}D increase  \\
\textit{rd\_mve}  & R\&{}D to size   \\
\textit{rd\_sale}  & R\&{}D to sales  \\
\textit{realestate}   & real estate holdings  \\
\textit{roaq}   & return on assets  \\
\textit{roavol}  & earnings volatility \\
\textit{roeq}  & return on equity  \\
\textit{roic}   & return on invested capital  \\
\textit{rsup}  & revenue surprise  \\
\textit{salecash}   & sales to cash  \\
\textit{saleinv}  & sales to inventory  \\
\textit{salerec}   & sales to receivables  \\
\textit{secured}   & secured debt  \\
\textit{securedind}  & secured debt indicator \\
\textit{sfe}  & scaled earnings forecast  \\
\textit{sgr}   & sales growth  \\
\textit{sp}  & sales to price  \\
\textit{std\_dolvol} & volume volatility  \\
\textit{std\_turn}  & turnover volatility  \\
\textit{stdacc}  & accrual volatility \\
\textit{stdcf}   & cash flow volatility  \\
\textit{sue}  & surprise earnings  \\
\textit{tang}   & asset tangibility  \\
\textit{tb} & tax to book income  \\
\textit{turn}  & turnover  \\
\textit{zerotrade}  & (weighted) days with zero trades  \\
 & \\
\end{tabular} 
\end{table}

We apply model \eqref{eq:model-app} to a sample of large-cap stocks {($n=29$)} from April 2017 to March 2022 {($T=60$)} that are or recently were constituent stocks of the Dow Jones Industrial Average. The names and ticker symbols of the stocks are given in Table \ref{tab:Stock_list_Application}. The firm characteristics ($p=90$) used in the application including a brief description are listed in Table \ref{tab:Var_list_Application}. They comprise a subset of the characteristics collected by \cite{Green2017} for the period 1980--2014, which have been extended to 2022 by Shaoran Li, who kindly shared the data with us. We refer to Section 8 in \cite{Mruecker2025CCE} for more details on the data.

As our model incorporates quadratic and cubic effects for all firm characteristics as well as all possible pairwise interactions, we have a setting with $3p + p(p-1)/2 = 4275$ unknown parameters to be estimated, while the total sample size amounts to $nT = 1740$. In what follows, we estimate the unknown parameters $\beta_{jl}$ and $\vartheta_h$ by our methods. The tuning parameter $\lambda$ is chosen by a leave-one-firm-out version of the cross-validation procedure explained in Section \ref{sec:est:tuning:lambda}. The thresholding parameter $\tau$ is set to $\tau = \alpha \widehat{\eig}_1$ with $\alpha = 0.01$, where $\widehat{\eig}_1$ is the largest eigenvalue of $\widehat{\bs{\Sigma}}$ (see Section \ref{sec:est:tuning:tau}). This results in the estimate $\widehat{K}=2$. The parameter estimates are denoted by $\widehat{\beta}_{jl}$ and $\widehat{\vartheta}_{h}$. Notably, the firm characteristics $C_{i,t-1}$ are lagged by one time period, which requires some slight modifications of the projection matrix $\widehat{\bs{\Pi}}$ as explained in Section 8 of \cite{Mruecker2025CCE}. With these modifications, we can run our methods exactly as laid out in the previous sections. 
For better interpretability, we report the scaled coefficient estimates $\widehat{\beta}_{jl}^{\text{sc}} := \widehat{\beta}_{jl} \cdot \widehat{s}_{jl}$ and $\widehat{\vartheta}_{h}^{\textnormal{sc}}:= \widehat{\vartheta}_{h} \cdot  \widehat{s}_{h} $, where $\widehat{s}_{jl}^2$ and $\widehat{s}_{h}^2$ are the empirical variances of the $(j,l)$-th regressor $\phi_{jl}(C_{i,t-1,j})$ and $h$-th interaction $D_{i,t-1}$. We thus normalize the design to have empirical variance $1$ and report the coefficient estimates on the resulting scale.

\begin{figure}[t!]
    \centering
    \includegraphics[width=0.95\linewidth]{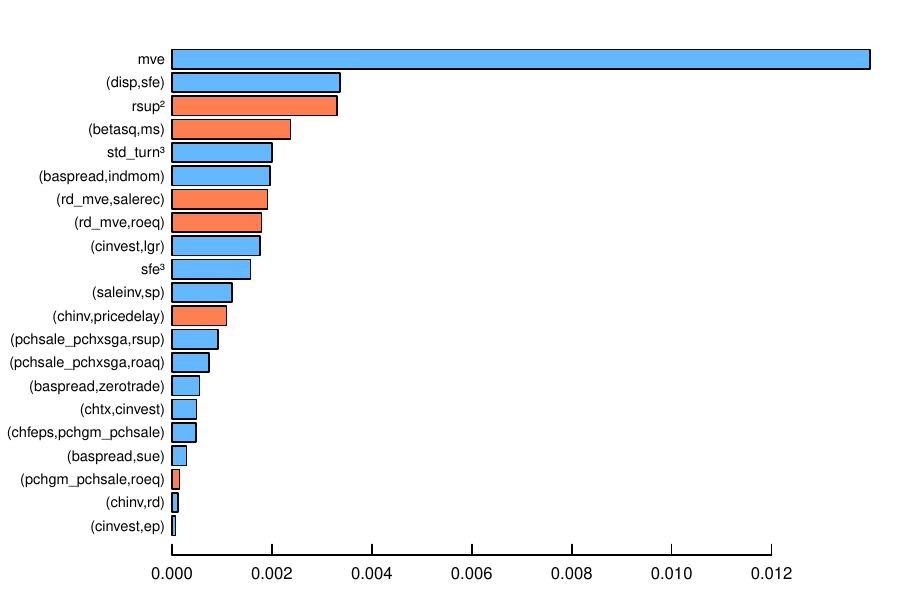}
    \caption{Scaled coefficient estimates $\widehat{\beta}_{jl}^{\textnormal{sc}}$ and $\widehat{\vartheta}_{h}^{\textnormal{sc}}$ ordered by absolute size. The colour of the bars indicates the sign (red = positive sign, blue = negative sign).}
    \label{fig:empApp1}
\end{figure}

\begin{table}[t!]
{\small
\centering
\begin{tabular}{cccc}
\hline \hline \\[-0.4cm]
effect type & total number & selected (non-zero) &\% selected \\[0.1cm]
\hline \\[-0.4cm]
main effect $\beta_{jl}$ & $ pL= 270$ & $4$ & $1.48\%$ \\
interaction $\vartheta_{h}$ &$p(p-1)/2= 4005$ & $17$ & $0.42\%$ \\[0.1cm]
\hline
\end{tabular}
\caption{Summary of estimated sparsity pattern.}\label{table:empAPP}}
\end{table}

Figure \ref{fig:empApp1} displays all nonzero coefficient estimates $\widehat{\beta}_{jl}^{\text{sc}}$ and $\widehat{\vartheta}_{h}^{\textnormal{sc}}$ ordered by absolute magnitude. The estimated coefficient vector is highly sparse: only $21$ out of $4275$ coefficients are nonzero. Table \ref{table:empAPP} summarizes the resulting sparsity pattern. The parameter estimates suggest that only a small subset of firm characteristics contributes to explaining stock returns.
This finding is consistent with recent evidence from the empirical asset-pricing literature. \cite{Green2017} show that, although numerous firm characteristics have been proposed as return predictors in the literature, only a relatively small subset provides independent information about average monthly stock returns once the characteristics are considered jointly. Furthermore, the selection of nonlinear transformations and interaction terms accords with the machine-learning evidence in \cite{Gu2020}, who document that nonlinearities and interactions play an important role in return prediction. Closely related to our empirical application, \cite{Freyberger2020} model expected returns as a nonparametric additive function of firm characteristics, estimated using the adaptive group lasso. Out of $62$ candidate characteristics, only about a dozen are selected, and the estimated effects are strongly nonlinear.
Our analysis differs from the aforementioned studies by explicitly accounting for unobserved heterogeneity through latent interactive fixed effects while simultaneously allowing for polynomial transformations and interaction terms in a high-dimensional panel-data framework.


We find that firm size \textit{mve} is by far the most influential predictor, its scaled estimated coefficient ($-0.0140$) being more than four times as large in absolute value as the second largest one. This is in line with the findings for the linear specification considered in \cite{Mruecker2025CCE}, where \textit{mve} likewise has the largest scaled coefficient in absolute value, and strengthens this result considerably: even in a very flexible specification that allows for quadratic and cubic effects of every characteristic as well as for all $4005$ pairwise interactions, \textit{mve} survives as a purely linear and additive effect. Moreover, its negative sign is consistent with the classical size premium.
The next largest scaled coefficients are the interaction
$\textit{disp} \times \textit{sfe}$ ($-0.00337$), the quadratic term $\textit{rsup}^2$ ($0.00330$), the interaction
$\textit{betasq} \times \textit{ms}$ ($0.00237$), the cubic term
$\textit{std\_turn}^3$ ($-0.00200$) and the interaction
$\textit{baspread} \times \textit{indmom}$ ($-0.00197$). These five terms illustrate the two channels through which our specification departs from the linear model: genuine curvature in a single characteristic ($\textit{rsup}^2$, $\textit{std\_turn}^3$) and effect heterogeneity via interactions ($\textit{disp} \times \textit{sfe}$,
$\textit{betasq} \times \textit{ms}$, $\textit{baspread} \times \textit{indmom}$).


%% file: ms_appendixA.tex
\begin{center}
{\LARGE \textbf{Technical Appendix}}
\end{center}
\def\theequation{A.\arabic{equation}}
\setcounter{equation}{0}

\noindent Throughout the appendix, we let $c$, $C$, $c'$ and $C'$ denote generic positive constants that may take a different value on each occurrence. The symbols $c_j$ and $C_j$ with subscript $j$ (which may be either a natural number or a letter) are specific constants that are defined in the course of the appendix. Unless stated differently, the constants $c$, $C$, $c'$, $C'$, $c_j$ and $C_j$ depend neither on the dimensions $n$, $T$, $p$ nor on the sparsity index $s$. To emphasize that they do not depend on any of these parameters, we sometimes refer to them as absolute constants.

\section*{Auxiliary lemmas}\label{appendixA: auxilliary lemmas}

To begin with, we collect several auxiliary lemmas that will be used throughout the appendix. These lemmas are adaptations of results established in \cite{Mruecker2025CCE}. We provide proofs only where the technical arguments differ from those in \cite{Mruecker2025CCE}. For all other results whose proofs carry over essentially unchanged, we refer to the relevant parts of \cite{Mruecker2025CCE}.
We assume throughout that \ref{C:loadings}--\ref{C:id2GAMMATGAMMAEV} hold, together with \ref{C:nTp-large}--\ref{C:K-large} in the large-$T$ case and \ref{C:nTp-small}--\ref{C:K-small} in the small-$T$ case. Moreover, we pick the threshold parameter $\tau$ of the estimator $\widehat{K}$ as specified in Theorem \ref{theo:main}, i.e., such that $\tau = o(p)$ and $\{p \sqrt{\log(pT)}/\sqrt{n}\}  / \tau = o(1)$. 
All lemmas are formulated such that they hold for both the large-$T$ and the small-$T$ case.

\begin{lemmaA}\label{lemma:generictruncationarguments}
Let $\{V_{i,tjkl}\}$ be a collection of random variables with $1\leq i \leq n$, $1\leq t\leq T$, $1\leq j \leq p$, $1\leq k \leq K$ and $1\leq l\leq L$, where $p$ and $T$ are allowed to depend on $n$ but $K$ and $L$ are fixed. Assume there exists a constant $C>0$ with $\ex[|V_{i,tjkl}|^\nu]\leq C$ for some $\nu > 8$. 
\begin{enumerate}[label = (\roman*), leftmargin=0.975cm]

\item \label{lemma:generictruncationarguments1} It holds that 
\[\max_{i,t,j,k,l}| V_{i,tjkl} | = O_p\left( (npT)^{1/\nu}\right).\]
    
\item \label{lemma:generictruncationarguments2} Let the random variables $V_{i,tjkl}$ be independent across $i$ and have mean zero. Moreover, suppose that $(npT)^{\frac{1}{\nu - \xi}} = o(\sqrt{n} / \sqrt{\log(pT)} )$ for some small $\xi>0$. Then there exists a constant $C>0$ such that 
\[\pr\left(\max_{t,j,k,l}\Big|\frac{1}{n}\sum_{i=1}^n V_{i,tjkl} \Big| \geq C\sqrt{ \frac{\log(pT)}{n}}\right) = o(1).\]

\end{enumerate}
\end{lemmaA}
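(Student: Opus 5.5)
Part \ref{lemma:generictruncationarguments1} follows from a union bound combined with Markov's inequality. The index set has cardinality $nTpKL$, and $KL$ is fixed, so the cardinality is at most $C\,npT$. For any $M>0$,
\[ \pr\Big(\max_{i,t,j,k,l}|V_{i,tjkl}| > M (npT)^{1/\nu}\Big) \le \sum_{i,t,j,k,l} \frac{\ex[|V_{i,tjkl}|^\nu]}{M^\nu\, npT} \le \frac{C KL}{M^\nu}. \]
The right-hand side can be made arbitrarily small by choosing $M$ large, which is exactly the statement $O_p((npT)^{1/\nu})$.

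For part \ref{lemma:generictruncationarguments2} I will use a truncation argument followed by Bernstein's inequality and a union bound over the $pTKL$ index tuples $(t,j,k,l)$. Set the truncation level $\kappa_n = (npT)^{1/(\nu-\xi)}$ and decompose
\[ V_{i,tjkl} = V^{\le}_{i,tjkl} + V^{>}_{i,tjkl}, \qquad V^{\le}_{i,tjkl} = V_{i,tjkl}\ind(|V_{i,tjkl}|\le \kappa_n) - \ex\big[V_{i,tjkl}\ind(|V_{i,tjkl}|\le \kappa_n)\big]. \]
Because $\ex[V_{i,tjkl}]=0$, the remainder $V^{>}_{i,tjkl}$ equals the tail part $V_{i,tjkl}\ind(|V_{i,tjkl}|>\kappa_n)$ minus its mean.

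\emph{Tail part.} Since $\kappa_n \le (npT)^{1/\nu}\cdot(npT)^{\xi/(\nu(\nu-\xi))}$ grows faster than $(npT)^{1/\nu}$, the same union-bound computation as in part \ref{lemma:generictruncationarguments1} gives
\[ \pr\big(\max|V_{i,tjkl}|>\kappa_n\big) \le C\,npT\,\kappa_n^{-\nu} = C (npT)^{1-\nu/(\nu-\xi)} \to 0. \]
So with probability tending to one the uncentred tail part vanishes identically. Its mean is also small: $|\ex[V\ind(|V|>\kappa_n)]| \le C\kappa_n^{1-\nu}$. This is $o(\sqrt{\log(pT)/n})$ because $\kappa_n \to\infty$ and, by the rate assumption, $\kappa_n^{-1}=o(1)$ while $\sqrt{\log(pT)/n}\gtrsim \kappa_n^{-1}$; the relevant exponent $\nu-1>1$ makes this comfortable.

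\emph{Bounded part.} The variables $V^{\le}_{i,tjkl}$ are independent across $i$, centred, bounded by $2\kappa_n$, and have variance at most $C$, since $\nu > 2$. Bernstein's inequality gives, for each fixed $(t,j,k,l)$,
\[ \pr\Big(\Big|\frac1n\sum_i V^{\le}_{i,tjkl}\Big|\ge x\Big) \le 2\exp\Big(-\frac{n x^2}{2C + \tfrac{4}{3}\kappa_n x}\Big). \]
Take $x=C'\sqrt{\log(pT)/n}$. The assumption $\kappa_n = o(\sqrt{n/\log(pT)})$ gives $\kappa_n x = o(1)$, so the exponent is at least $c\,C'^2\log(pT)$. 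A union bound over the at most $KL\,pT$ tuples then yields a probability of order $(pT)^{1-cC'^2}$. This is $o(1)$ for $C'$ large when $pT\to\infty$. If $pT$ stays bounded, as in the small-$T$, fixed-$p$ case, I would instead take $C$ slightly larger, or replace $\log(pT)$ by $\log(pT)\vee 1$, which I will note as a harmless convention.

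The main point of care is calibrating $\kappa_n$ so that two requirements hold simultaneously: the tail probability $npT\,\kappa_n^{-\nu}\to0$ and the Bernstein linear term $\kappa_n\sqrt{\log(pT)/n}\to0$. The hypothesis $(npT)^{1/(\nu-\xi)}=o(\sqrt{n/\log(pT)})$ is precisely what makes the choice $\kappa_n=(npT)^{1/(\nu-\xi)}$ work. Combining the tail part and the bounded part then proves \ref{lemma:generictruncationarguments2}.
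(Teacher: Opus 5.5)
Your part (i) is the paper's argument (union bound plus Markov). For part (ii) the paper gives no details and simply defers to Lemma A.1 of \cite{Mruecker2025CCE}. Your truncation at $\kappa_n=(npT)^{1/(\nu-\xi)}$, followed by Bernstein's inequality and a union bound over the $KL\,pT$ tuples, is the standard argument this hypothesis is designed for. Three of its steps are correct:
\begin{itemize}
\item the tail probability $npT\,\kappa_n^{-\nu}=(npT)^{-\xi/(\nu-\xi)}\to0$;
\item the variance bound $\ex[(V^{\le})^2]\le C^{2/\nu}$;
\item the Bernstein exponent, where $\kappa_n x=o(1)$ is exactly the rate assumption.
\end{itemize}

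One step has a backwards justification: the bound on the centring term $\ex[V\ind(|V|>\kappa_n)]$. You say the rate assumption gives $\sqrt{\log(pT)/n}\gtrsim\kappa_n^{-1}$. But $\kappa_n=o(\sqrt{n/\log(pT)})$ says the opposite, namely $\sqrt{\log(pT)/n}=o(\kappa_n^{-1})$. So your sentence cannot deliver $C\kappa_n^{1-\nu}=o(\sqrt{\log(pT)/n})$. The claim is nevertheless true, for a different reason. Since $\kappa_n\ge n^{1/(\nu-\xi)}$,
\[ \sqrt{n}\,\kappa_n^{1-\nu}\le n^{\frac{1}{2}-\frac{\nu-1}{\nu-\xi}}\to 0, \]
because $(\nu-1)/(\nu-\xi)>1/2$ whenever $\nu+\xi>2$. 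Also $\log(pT)\ge\log 2$, since $T>K\ge1$. Together these make the deterministic centring term $o(\sqrt{\log(pT)/n})$ uniformly in the indices, which is all you need. With that replacement, your proof of (ii) is complete whenever $pT\to\infty$.

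Your remark on bounded $pT$ does not work as stated. If $p$ and $T$ stay fixed and the $V_{i,tjkl}$ are, for example, i.i.d.\ and nondegenerate, the CLT shows that the probability in (ii) converges to a positive limit for every fixed $C$. So neither enlarging $C$ nor the convention $\log(pT)\vee 1$ yields $o(1)$. This is a defect of the literal statement, not of your argument. In that regime your bound $2KL\,(pT)^{1-cC'^2}\le 2KL\,2^{1-cC'^2}$ shows the probability can be made arbitrarily small by taking $C$ large. In other words, $\max_{t,j,k,l}|n^{-1}\sum_i V_{i,tjkl}|=O_p(\sqrt{\log(pT)/n})$, which is the form in which the lemma is actually used throughout the appendix.
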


\begin{proof}
Statement \ref{lemma:generictruncationarguments1} is a simple consequence of the union bound and Markov's inequality: 
\begin{align*}
    \pr\left( \max_{i,t,j,k,l}| V_{i,tjkl} | \geq C' (npT)^{1/\nu} \right) &\leq \sum_{i,t,j,k,l} \pr\left( | V_{i,tjkl} | \geq C' (npT)^{1/\nu} \right) \\
    &\leq \sum_{i,t,j,k,l} \frac{\ex[|V_{i,tjkl}|^\nu] }{(C')^\nu(npT)} \leq \frac{K L C}{(C')^\nu},
\end{align*}
where we can choose $C'$ as large as desired to make the right-hand side arbitrarily small.
Statement \ref{lemma:generictruncationarguments2} can be proven by virtually the same arguments as the assertions of Lemma A.1 in \cite{Mruecker2025CCE}. 
\end{proof}

The next two lemmas restate Lemmas A.3/A'.3 and A.5(i)/A'.5(i) from \cite{Mruecker2025CCE} and can be proven exactly as laid out there.

\begin{lemmaA}\label{lemma:bargammagammanorm}
Let $\overline{\bs{\Gamma}} = n^{-1} \sum_{i=1}^n \bs{\Gamma}_i$ be the cross-sectional average of the loading matrices $\bs{\Gamma}_i$ and $\bs{\Gamma} = \ex[\bs{\Gamma}_i]$. It holds that 
\begin{enumerate}[label=(\roman*),leftmargin=0.975cm,topsep=0.5cm]
\item \label{lemma:bargammagammanorm:1} $\displaystyle{\big\| \overline{\bs{\Gamma}} - \bs{\Gamma} \big\| = O_p \Big( \sqrt{\frac{p \log p}{n}} \Big)}$.
\item \label{lemma:bargammagammanorm:2} $\displaystyle{\big\| \overline{\bs{\Gamma}} \overline{\bs{\Gamma}}^\top - \ex \, \overline{\bs{\Gamma}}  \overline{\bs{\Gamma}}^\top \big\| = O_p \Big( p \sqrt{\frac{\log p}{n}} \Big)}$.
\end{enumerate}
\end{lemmaA}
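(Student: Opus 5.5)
The plan is to prove the two parts of Lemma~\ref{lemma:bargammagammanorm} in turn, with (i) feeding into (ii).

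For part (i), we write $\overline{\bs{\Gamma}} - \bs{\Gamma} = n^{-1}\sum_{i=1}^n (\bs{\Gamma}_i - \bs{\Gamma})$. This is a $p \times K$ matrix with $K$ fixed, so we can bound the spectral norm by the Frobenius norm, $\|\overline{\bs{\Gamma}} - \bs{\Gamma}\|^2 \le \sum_{j=1}^p \sum_{k=1}^K (\overline{\Gamma}_{jk} - \Gamma_{jk})^2 \le pK \max_{j,k} |\overline{\Gamma}_{jk} - \Gamma_{jk}|^2$.

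By \ref{C:loadings}, the summands $\Gamma_{i,jk} - \Gamma_{jk}$ are i.i.d.\ across $i$, centred, and have $\theta > 8$ moments. We therefore apply Lemma~\ref{lemma:generictruncationarguments}\ref{lemma:generictruncationarguments2}, with the $t$- and $l$-indices trivial and $\nu = \theta$. The required rate condition $(np)^{1/(\theta-\xi)} = o(\sqrt{n/\log p})$ follows from \ref{C:nTp-large} or \ref{C:nTp-small}. This yields $\max_{j,k}|\overline{\Gamma}_{jk} - \Gamma_{jk}| = O_p(\sqrt{\log p / n})$, and hence $\|\overline{\bs{\Gamma}} - \bs{\Gamma}\| = O_p(\sqrt{p\log p/n})$.

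For part (ii), we first compute the expectation. By independence across $i$, $\ex\,\overline{\bs{\Gamma}}\,\overline{\bs{\Gamma}}^\top = \bs{\Gamma}\bs{\Gamma}^\top + n^{-1}\ex[(\bs{\Gamma}_1-\bs{\Gamma})(\bs{\Gamma}_1-\bs{\Gamma})^\top]$. The second term has spectral norm at most $n^{-1}\ex\|\bs{\Gamma}_1-\bs{\Gamma}\|_F^2 \le CpK/n$, which is $o(p\sqrt{\log p/n})$. It then suffices to bound $\|\overline{\bs{\Gamma}}\,\overline{\bs{\Gamma}}^\top - \bs{\Gamma}\bs{\Gamma}^\top\|$. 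Writing $\bs{D} = \overline{\bs{\Gamma}} - \bs{\Gamma}$, we have
\[ \overline{\bs{\Gamma}}\,\overline{\bs{\Gamma}}^\top - \bs{\Gamma}\bs{\Gamma}^\top = \bs{D}\bs{\Gamma}^\top + \bs{\Gamma}\bs{D}^\top + \bs{D}\bs{D}^\top. \]
By \ref{C:id2GAMMATGAMMAEV}, $\|\bs{\Gamma}\| \le \sqrt{c_{\max} p}$. Combining this with part (i), the cross terms are $O_p(\sqrt{p}\cdot\sqrt{p\log p/n}) = O_p(p\sqrt{\log p/n})$. The quadratic term is $\|\bs{D}\|^2 = O_p(p\log p/n)$, which is of smaller order because $\log p / n \to 0$ under the dimension conditions.

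The only delicate step is checking that the truncation lemma applies with the stated rate, specifically that its rate condition is implied by \ref{C:nTp-large} or \ref{C:nTp-small}. Since the condition involves only $n$ and $p$ (the factor $T$ only helps), I expect this to be immediate. Everything else is elementary matrix-norm algebra.
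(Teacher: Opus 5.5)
Your proof is correct. The paper does not prove this lemma itself; it defers to Lemma A.3 of \cite{Mruecker2025CCE}. The $\sqrt{\log p}$ in the stated rates suggests that the cited argument also goes through an entrywise maximal bound, as yours does.

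Two points are worth making explicit.

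First, you invoke Lemma \ref{lemma:generictruncationarguments}\ref{lemma:generictruncationarguments2} with a time index of range one, rather than copying $\Gamma_{i,jk}$ across $t=1,\ldots,T$. This is what yields $\log p$ instead of $\log(pT)$. The resulting requirement $(np)^{1/(\theta-\xi)}=o(\sqrt{n/\log p})$ is literally \ref{C:nTp-small} in the small-$T$ case. It follows from \ref{C:nTp-large} in the large-$T$ case, because $(np)^{1/(\theta-\xi)}\le(npT)^{1/(\theta-\xi)}$ and $\sqrt{n/\log p}\ge\sqrt{n/\log(pT)}$. Since $(npT)^{1/(\theta-\xi)}\ge 1$, the same condition gives the $\log p/n\to 0$ you use for the quadratic term $\|\bs{D}\|^2$.

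Second, the truncation lemma is heavier than necessary. The $\bs{\Gamma}_i$ are i.i.d.\ with bounded second moments, so $\ex\|\overline{\bs{\Gamma}}-\bs{\Gamma}\|_F^2=n^{-1}\sum_{j,k}\var(\Gamma_{1,jk})\le CpK/n$. Markov's inequality then gives $\|\overline{\bs{\Gamma}}-\bs{\Gamma}\|=O_p(\sqrt{p/n})$ with no dimension condition at all. Your decomposition $\bs{D}\bs{\Gamma}^\top+\bs{\Gamma}\bs{D}^\top+\bs{D}\bs{D}^\top$, together with your bound on $\ex\,\bs{D}\bs{D}^\top$, then yields $O_p(p/\sqrt{n})$ for part (ii).

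Your route reproduces exactly the stated rates under the paper's assumptions. The second-moment route is shorter, needs only \ref{C:loadings}, and is sharper by a factor $\sqrt{\log p}$. The difference is immaterial downstream: in Lemma \ref{lemma1:eigenstructure} the $\overline{\bs{Z}}$ terms already produce the rate $p\sqrt{\log(pT)/n}$.
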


\begin{lemmaA}\label{lemmaA:barZ} 
The cross-sectional average $\overline{\bs{Z}} = n^{-1} \sum_{i=1}^n \bs{Z}_i$ of the idiosyncratic regressor components $\bs{Z}_i$ has the property that 
\[ \|\overline{\bs{Z}}\| = O_p\left(\sqrt{\frac{pT\log(pT)}{n}} \right). \]
\end{lemmaA}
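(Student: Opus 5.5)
The plan is to decompose $\overline{\bs{Z}} = n^{-1}\sum_{i=1}^n \bs{Z}_i \in \reals^{T\times p}$ and bound its spectral norm by its Frobenius norm:
\[ \|\overline{\bs{Z}}\|^2 \le \|\overline{\bs{Z}}\|_F^2 = \sum_{t=1}^T\sum_{j=1}^p \overline{Z}_{t,j}^2 \le pT \max_{t,j} \Big|\frac{1}{n}\sum_{i=1}^n Z_{it,j}\Big|^2. \]
The statement then reduces to a uniform bound on the $pT$ cross-sectional averages $\overline{Z}_{t,j}$. We make no attempt to exploit cancellations between entries; this is necessary anyway, since no condition is imposed on the time series or cross-regressor dependence of $Z_{it}$.

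For the uniform bound we invoke Lemma \ref{lemma:generictruncationarguments}\ref{lemma:generictruncationarguments2}, with $V_{i,tjkl} := Z_{it,j}$ and $K=L=1$ as dummy indices. The hypotheses are checked against the assumptions.
\begin{itemize}
\item By \ref{C:Z}, the processes $\mathcal{C}_Z^{(i)}$ are independent across $i$, so the $V_{i,tj}$ are independent across $i$ for fixed $(t,j)$.
\item Again by \ref{C:Z}, they have mean zero and $\ex|Z_{it,j}|^\theta\le C$ with $\theta>8$, so we may take $\nu=\theta$.
\item The growth condition $(npT)^{1/(\theta-\xi)} = o(\sqrt{n/\log(pT)})$ is exactly \ref{C:nTp-large} in the large-$T$ case. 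In the small-$T$ case it follows from \ref{C:nTp-small}, because $T$ is fixed and $\log(pT)\asymp\log p$.
\end{itemize}
The lemma then gives $\max_{t,j}|\overline{Z}_{t,j}| = O_p(\sqrt{\log(pT)/n})$.

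Plugging this into the Frobenius bound yields $\|\overline{\bs{Z}}\|^2 = O_p(pT\log(pT)/n)$. Taking square roots gives the claim.

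The only step with genuine content is the uniform maximal inequality, and that is already supplied by Lemma \ref{lemma:generictruncationarguments}\ref{lemma:generictruncationarguments2}. Behind it lies a truncation at level $(npT)^{1/(\theta-\xi)}$, a Bernstein/Hoeffding bound for the truncated part, and control of the tail part via the moment condition. So no step should be a real obstacle, provided the growth condition is verified correctly in both the small-$T$ and large-$T$ regimes.
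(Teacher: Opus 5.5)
Your proof is correct and is the standard argument behind this lemma, which the paper does not prove itself but defers to \cite{Mruecker2025CCE}: you bound $\|\overline{\bs{Z}}\|^2 \le \|\overline{\bs{Z}}\|_F^2 \le pT \max_{t,j} |\overline{Z}_{t,j}|^2$ and control the maximum with Lemma \ref{lemma:generictruncationarguments}\ref{lemma:generictruncationarguments2}, which is where the $\log(pT)$ factor comes from. Your check of the growth condition in both the large-$T$ and small-$T$ regimes is also fine; as an aside, $\ex[\overline{Z}_{t,j}^2] \le C/n$ by independence across $i$, so Markov's inequality applied to $\|\overline{\bs{Z}}\|_F^2$ would even give $O_p(\sqrt{pT/n})$ without the log factor or any growth condition.
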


We next use Lemmas \ref{lemma:bargammagammanorm} and \ref{lemmaA:barZ} to characterize the spectral distance between the matrices $\widehat{\bs{\Sigma}} = T^{-1} \bs{\overline{X}}^\top \bs{\overline{X}} = T^{-1} \sum_{t=1}^T \overline{X}_t \overline{X}_t^\top$ and $\bs{\Sigma} = \bs{\Gamma} \bs{\Gamma}^\top$.

\begin{lemmaA}\label{lemma1:eigenstructure}
It holds that 
\[  \| \widehat{\bs{\Sigma}} - \bs{\Sigma} \| = O_p\Big( p \sqrt{\frac{\log (pT)}{n}} \Big). \]
\end{lemmaA}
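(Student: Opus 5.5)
We decompose $\overline{\bs{X}}$ using the factor structure \eqref{eq:model-reg-matrix}. Averaging over $i$ gives $\overline{\bs{X}} = \bs{F}\overline{\bs{\Gamma}}^\top + \overline{\bs{Z}}$. By the normalization $\bs{F}^\top\bs{F}/T = \bs{I}_K$ in \ref{C:factorssummable}, this yields
\begin{equation*}
\widehat{\bs{\Sigma}} = \frac{1}{T}\overline{\bs{X}}^\top\overline{\bs{X}} = \overline{\bs{\Gamma}}\,\overline{\bs{\Gamma}}^\top + \frac{1}{T}\overline{\bs{\Gamma}}\bs{F}^\top\overline{\bs{Z}} + \frac{1}{T}\overline{\bs{Z}}^\top\bs{F}\overline{\bs{\Gamma}}^\top + \frac{1}{T}\overline{\bs{Z}}^\top\overline{\bs{Z}}.
\end{equation*}
We then bound $\|\widehat{\bs{\Sigma}} - \bs{\Gamma}\bs{\Gamma}^\top\|$ by the triangle inequality, treating the four resulting pieces separately.

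\emph{Loading term.} For the first piece, write
\begin{equation*}
\overline{\bs{\Gamma}}\,\overline{\bs{\Gamma}}^\top - \bs{\Gamma}\bs{\Gamma}^\top = (\overline{\bs{\Gamma}}-\bs{\Gamma})\overline{\bs{\Gamma}}^\top + \bs{\Gamma}(\overline{\bs{\Gamma}}-\bs{\Gamma})^\top.
\end{equation*}
By \ref{C:id2GAMMATGAMMAEV}, $\|\bs{\Gamma}\| \le \sqrt{c_{\max} p}$. By Lemma \ref{lemma:bargammagammanorm}\ref{lemma:bargammagammanorm:1}, $\|\overline{\bs{\Gamma}}-\bs{\Gamma}\| = O_p(\sqrt{p\log p/n}) = o_p(\sqrt p)$, so $\|\overline{\bs{\Gamma}}\| = O_p(\sqrt p)$ as well. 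Together these give a bound of order $O_p(\sqrt p \cdot \sqrt{p\log p/n}) = O_p(p\sqrt{\log p/n})$. Alternatively, one could combine part \ref{lemma:bargammagammanorm:2} of that lemma with the elementary identity $\ex\,\overline{\bs{\Gamma}}\,\overline{\bs{\Gamma}}^\top - \bs{\Gamma}\bs{\Gamma}^\top = n^{-1}\ex[(\bs{\Gamma}_1-\bs{\Gamma})(\bs{\Gamma}_1-\bs{\Gamma})^\top]$. The spectral norm of this matrix is at most its trace divided by $n$, i.e.\ $O(p/n)$, by the moment bounds in \ref{C:loadings}. Either route gives the required order.

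\emph{Cross terms.} Since $\|\bs{F}\| = \sqrt T$, we have
\begin{equation*}
\frac1T\|\overline{\bs{\Gamma}}\bs{F}^\top\overline{\bs{Z}}\| \le \frac{1}{T}\,\|\overline{\bs{\Gamma}}\|\,\sqrt T\,\|\overline{\bs{Z}}\|.
\end{equation*}
Lemma \ref{lemmaA:barZ} gives $\|\overline{\bs{Z}}\| = O_p(\sqrt{pT\log(pT)/n})$. Combined with $\|\overline{\bs{\Gamma}}\| = O_p(\sqrt p)$, the cross term is $O_p(p\sqrt{\log(pT)/n})$, and the transposed cross term is identical.

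\emph{Quadratic term.} Lemma \ref{lemmaA:barZ} directly gives $T^{-1}\|\overline{\bs{Z}}\|^2 = O_p(p\log(pT)/n)$. This equals $p\sqrt{\log(pT)/n}\cdot\sqrt{\log(pT)/n}$, and the second factor is $o(1)$ by \ref{C:nTp-large} (respectively \ref{C:nTp-small}, with $T$ fixed). So this term is of smaller order than the target rate.

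\emph{Where the work lies.} Summing the four bounds yields the claimed rate $O_p(p\sqrt{\log(pT)/n})$. The argument is otherwise routine; the only step needing care is the loading term, specifically securing $\|\overline{\bs{\Gamma}}\| = O_p(\sqrt p)$ uniformly in the growing dimension $p$. That bound is exactly where the strong-factor assumption \ref{C:id2GAMMATGAMMAEV} and Lemma \ref{lemma:bargammagammanorm} enter. The other three terms follow immediately from Lemma \ref{lemmaA:barZ} and the factor normalization.
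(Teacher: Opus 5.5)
Your proof is correct. It differs from the paper's mainly in where the centring happens.

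\textbf{The paper's route.} The paper inserts the intermediate matrix $\overline{\bs{\Sigma}} = \ex[T^{-1}\overline{\bs{X}}^\top\overline{\bs{X}}]$ and proves two things separately: the fluctuation $\|\widehat{\bs{\Sigma}} - \overline{\bs{\Sigma}}\|$ is $O_p(p\sqrt{\log(pT)/n})$, and the bias $\|\overline{\bs{\Sigma}} - \bs{\Sigma}\|$ is $O(p/n)$. For the fluctuation it uses Lemma \ref{lemma:bargammagammanorm}\ref{lemma:bargammagammanorm:2} on the loading term and the same cross-term bound as yours. For the quadratic term it bounds the centred matrix $T^{-1}\sum_t(\overline{Z}_t\overline{Z}_t^\top - \ex\,\overline{Z}_t\overline{Z}_t^\top)$ by $p$ times its max-norm, via Lemma \ref{lemma:generictruncationarguments}\ref{lemma:generictruncationarguments2}. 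The bias bound is simply imported from Proposition A.1 of \cite{Mruecker2025CCE}.

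\textbf{Your route.} You compare with $\bs{\Sigma}$ directly. The algebraic identity together with Lemma \ref{lemma:bargammagammanorm}\ref{lemma:bargammagammanorm:1} handles the loading term. The uncentred quadratic term is controlled at once by $\|\overline{\bs{Z}}^\top\overline{\bs{Z}}\| = \|\overline{\bs{Z}}\|^2$ and Lemma \ref{lemmaA:barZ}. As a result, the expectation $\ex\,\overline{Z}_t\overline{Z}_t^\top$ never has to be computed.

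\textbf{What each buys.} Your version is shorter and self-contained, since it needs no external bias bound. The paper's version shows in addition that the deterministic bias is of the smaller order $p/n$, so the rate is driven purely by stochastic fluctuation. It also keeps the argument parallel to \cite{Mruecker2025CCE}. That parallel structure is what lets the authors isolate exactly where their bound departs from the earlier one, namely the absence of weak time-series dependence. Your alternative treatment of the loading term, via part (ii) plus the $O(p/n)$ trace bound, is in effect the loading half of the paper's bias step.
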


\begin{proof}
It suffices to prove that 
\begin{enumerate}[label=(\roman*),leftmargin=0.975cm,topsep=0.5cm]
\item $\displaystyle{ \| \widehat{\bs{\Sigma}} - \overline{\bs{\Sigma}} \| = O_p\Big( p \sqrt{\frac{\log (pT)}{n}} \Big) }$. 
\item $\displaystyle{ \| \overline{\bs{\Sigma}} - \bs{\Sigma} \| = O\Big(  \frac{p}{n} \Big) }$. 
\end{enumerate}
where $\overline{\bs{\Sigma}} = \ex [T^{-1} \bs{\overline{X}}^\top \bs{\overline{X}}] = \ex [T^{-1} \sum_{t=1}^T \overline{X}_t \overline{X}_t^\top]$.

We first verify the bound (i) on $\| \widehat{\bs{\Sigma}} - \overline{\bs{\Sigma}} \|$. Since 
\begin{align*} 
\| \widehat{\bs{\Sigma}} - \overline{\bs{\Sigma}} \|
 & \le \Big\| \frac{1}{T} \sum_{t=1}^T \big\{ \overline{\bs{\Gamma}} F_t F_t^\top \overline{\bs{\Gamma}}^\top - \ex \overline{\bs{\Gamma}} F_t F_t^\top \overline{\bs{\Gamma}}^\top \big\} \Big\| \\
 & \quad + 2 \Big\| \overline{\bs{\Gamma}} \Big\{ \frac{1}{T} \sum_{t=1}^T F_t \overline{Z}_t^\top \Big\} \Big\| + \Big\| \frac{1}{T} \sum_{t=1}^T \big( \overline{Z}_t \overline{Z}_t^\top - \ex \overline{Z}_t \overline{Z}_t^\top \big) \Big\|,
\end{align*} 
it suffices to bound the three terms on the right-hand side. By Lemma \ref{lemma:bargammagammanorm} \ref{lemma:bargammagammanorm:2}, 
\begin{align*} 
 & \Big\| \frac{1}{T} \sum_{t=1}^T \big\{ \overline{\bs{\Gamma}} F_t F_t^\top \overline{\bs{\Gamma}}^\top - \ex \overline{\bs{\Gamma}} F_t F_t^\top \overline{\bs{\Gamma}}^\top \big\} \Big\|   =   \Big\| \overline{\bs{\Gamma}} \overline{\bs{\Gamma}}^\top - \ex \overline{\bs{\Gamma}} \overline{\bs{\Gamma}}^\top \Big\| = O_p \Big(  p \sqrt{\frac{\log p}{n}} \Big).
\end{align*}
Moreover, as $\| \bs{\Gamma} \| = O(\sqrt{p})$ by \ref{C:id2GAMMATGAMMAEV} and $\| \overline{\bs{\Gamma}} - \bs{\Gamma} \|= O_p(\sqrt{p \log p / n})$ by Lemma \ref{lemma:bargammagammanorm}\ref{lemma:bargammagammanorm:1}, we obtain that $\| \overline{\bs{\Gamma}} \| \le \| \overline{\bs{\Gamma}} - \bs{\Gamma} \| + \| \bs{\Gamma} \| = O_p(\sqrt{p})$. From this and Lemma \ref{lemmaA:barZ}, it follows that
 \begin{align*} 
\Big\| \overline{\bs{\Gamma}} \Big\{ \frac{1}{T} \sum_{t=1}^T F_t \overline{Z}_t^\top \Big\} \Big\| 
 & \le \big\| \overline{\bs{\Gamma}} \big \| \Big\| \frac{1}{T} \sum_{t=1}^T F_t \overline{Z}_t^\top \Big\| \\
 & \le \frac{1}{T} \big\| \overline{\bs{\Gamma}} \big \| \big\| \bs{F} \big\| \big\|\bs{\overline{Z}} \big\|  
 =O_p\Big( p \sqrt{\frac{\log(pT)}{n}}\Big).
\end{align*}
Finally, since $|\ex ( n^{-1} \sum_{i=1}^n Z_{it,j} ) ( n^{-1} \sum_{i=1}^n Z_{it,j^\prime} )| \le C/n$ and $\max_{j,t} | n^{-1} \sum_{i=1}^n Z_{it,j} | = O_p(\sqrt{\log(pT)/n})$ by Lemma \ref{lemma:generictruncationarguments}\ref{lemma:generictruncationarguments2}, we obtain that 
\begin{align*}
\Big\| & \frac{1}{T} \sum_{t=1}^T \big( \overline{Z}_t \overline{Z}_t^\top - \ex \overline{Z}_t \overline{Z}_t^\top \big) \Big\|
 \le p \Big\| \frac{1}{T} \sum_{t=1}^T \big( \overline{Z}_t \overline{Z}_t^\top - \ex \overline{Z}_t \overline{Z}_t^\top \big) \Big\|_{\max} \\
 & = p \max_{j,j^\prime} \Big| \frac{1}{T} \sum_{t=1}^T \bigg\{ \Big( \frac{1}{n} \sum_{i=1}^n Z_{it,j} \Big) \Big( \frac{1}{n} \sum_{i=1}^n Z_{it,j^\prime} \Big) - \ex \Big( \frac{1}{n} \sum_{i=1}^n Z_{it,j} \Big) \Big( \frac{1}{n} \sum_{i=1}^n Z_{it,j^\prime} \Big) \bigg\} \Big| \\
 & \le p \bigg\{ \Big( \max_{j,t} \Big| \frac{1}{n} \sum_{i=1}^n Z_{it,j} \Big| \Big)^2 + \frac{C}{n} \bigg\} = O_p\Big( \frac{p \log(pT)}{n} \Big). 
\end{align*} 
Putting everything together, we arrive at the bound $\| \widehat{\bs{\Sigma}} - \overline{\bs{\Sigma}} \| = O_p( p \sqrt{\log (pT)/n})$, which is statement (i). Note that this bound differs slightly from the one derived in Proposition A.1 of \cite{Mruecker2025CCE}. The difference stems from the fact that, unlike there, we do not impose any weak dependence conditions over time on the model variables in the large-$T$ case. The bound (ii) on $\| \overline{\bs{\Sigma}} - \bs{\Sigma} \|$, on the other hand, is the same as in Proposition A.1 of \cite{Mruecker2025CCE} and can be proven exactly as laid out there.  
\end{proof}

\pagebreak
Now let $\psi_1 \ge \ldots \ge \psi_p$ and $\widehat{\psi}_1 \ge \ldots \ge \widehat{\psi}_p$ be the eigenvalues of $\bs{\Sigma}$ and $\widehat{\bs{\Sigma}}$, respectively. By Lemma \ref{lemma1:eigenstructure} and analogous arguments as in \cite{Mruecker2025CCE} (see in particular the proofs of Propositions A.2--A.5 therein), we can infer the following: 
\begin{lemmaA}\label{lemma2:eigenstructure} 
There exists an absolute constant $c > 0$ such that 
\begin{enumerate}[label=(\roman*),leftmargin=0.975cm]
\item $\eig_k \ge c p$ for all $k \le K$ and $\eig_k = 0$ for all $k > K$; 
\item with probability tending to $1$, $\widehat{\eig}_k \ge c p$ for all $k \le K$ and $\widehat{\eig}_k = O_p( p \sqrt{\log (pT)} / \linebreak \sqrt{n}) = o_p(p)$ for all $k > K$. 
\end{enumerate}
\end{lemmaA}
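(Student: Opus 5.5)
The plan is to treat (i) as a deterministic statement about $\bs{\Sigma} = \bs{\Gamma}\bs{\Gamma}^\top$, and then to pass to $\widehat{\bs{\Sigma}}$ in (ii) by Weyl's inequality combined with Lemma \ref{lemma1:eigenstructure}.

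For (i), the key observation is that $\bs{\Sigma} = \bs{\Gamma}\bs{\Gamma}^\top$ is a $p\times p$ matrix of rank at most $K$. Its nonzero eigenvalues coincide with those of the $K\times K$ matrix $\bs{\Gamma}^\top\bs{\Gamma}$. Hence $\eig_k = 0$ for $k>K$, and for $k\le K$ we have $\eig_k = \eig_k(\bs{\Gamma}^\top\bs{\Gamma}) = p\,\eig_k(\bs{\Gamma}^\top\bs{\Gamma}/p) \ge c_{\min} p$ by \ref{C:id2GAMMATGAMMAEV}. We may therefore take $c = c_{\min}$, or any smaller constant needed for (ii). The condition $K\le p$ from \ref{C:K-large} or \ref{C:K-small} guarantees that these indices make sense.

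For (ii), apply Weyl's inequality $|\widehat{\eig}_k - \eig_k| \le \|\widehat{\bs{\Sigma}} - \bs{\Sigma}\|$ for all $k$, and combine it with Lemma \ref{lemma1:eigenstructure}, which gives $\|\widehat{\bs{\Sigma}}-\bs{\Sigma}\| = O_p(p\sqrt{\log(pT)/n})$. For $k>K$ this yields $\widehat{\eig}_k \le 0 + O_p(p\sqrt{\log(pT)}/\sqrt{n})$. Since $\widehat{\bs{\Sigma}}$ is positive semidefinite, $\widehat{\eig}_k\ge 0$, which gives the two-sided bound. For $k\le K$ we get $\widehat{\eig}_k \ge c_{\min}p - O_p(p\sqrt{\log(pT)/n})$. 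Because $\sqrt{\log(pT)/n}\to 0$, the perturbation is bounded by $(c_{\min}/2)p$ with probability tending to one. Consequently $\widehat{\eig}_k \ge (c_{\min}/2) p$ for all $k \le K$ simultaneously on that event, and we choose $c = c_{\min}/2$.

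The only point requiring care is $\sqrt{\log(pT)/n} = o(1)$, which is what gives both the $o_p(p)$ claim and the absorption step. This follows from \ref{C:nTp-large}, and analogously from \ref{C:nTp-small} with $\log p$ in place of $\log(pT)$. That condition states $(npT)^{1/(\theta-\xi)} = o(\sqrt{n/\log(pT)})$. Its left-hand side is at least $1$, so the right-hand side tends to infinity. In the small-$T$ case, $\log(pT)\le \log p + \log T$ with $T$ fixed, so the same conclusion holds. No further probabilistic work is needed beyond Lemma \ref{lemma1:eigenstructure}. I therefore expect no real obstacle; the substantive input is the spectral bound already established there.
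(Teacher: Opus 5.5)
Your proposal is correct and essentially the paper's route. The paper proves this lemma only by pointing to Lemma \ref{lemma1:eigenstructure} and the arguments of Propositions A.2--A.5 in \cite{Mruecker2025CCE}, which amount to your two steps: the rank-$K$ structure of $\bs{\Gamma}\bs{\Gamma}^\top$ with eigenvalues at least $c_{\min}p$ under \ref{C:id2GAMMATGAMMAEV}, then Weyl's inequality, with the absorption step justified by $\log(pT)/n \to 0$ from \ref{C:nTp-large} or \ref{C:nTp-small}.
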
 
\noindent A direct consequence of this lemma is that $\widehat{K} \convp K$, or put differently, that $\pr(\widehat{K} = K) \to 1$.

We next have a closer look at the projection matrix $\widehat{\bs{\Pi}} = \bs{I} - \widehat{\bs{W}} (\widehat{\bs{W}}^\top \widehat{\bs{W}})^{-} \widehat{\bs{W}}^\top$. From Lemma \ref{lemma2:eigenstructure} and the fact that $\widehat{K} \convp K$, it immediately follows that the matrix  
\[ \frac{\widehat{\bs{W}}^\top \widehat{\bs{W}}}{T} = \widehat{\bs{U}}^\top \Big( \frac{\overline{\bs{X}}^\top \overline{\bs{X}}}{T} \Big) \widehat{\bs{U}} = \widehat{\bs{U}}^\top \widehat{\bs{\Sigma}} \widehat{\bs{U}} = \text{diag}(\widehat{\eig}_1,\ldots,\widehat{\eig}_{\widehat{K}}) =: \widehat{\bs{\Eig}} \]
is invertible with probability tending to $1$. Hence, we can replace the generalized inverse $(\widehat{\bs{W}}^\top \widehat{\bs{W}})^{-}$ in the definition of $\widehat{\bs{\Pi}}$ by the proper inverse and write $\widehat{\bs{\Pi}} = \bs{I} - \widehat{\bs{W}} (\widehat{\bs{W}}^\top \widehat{\bs{W}})^{-1} \widehat{\bs{W}}^\top$ with probability tending to $1$. For simplicity of exposition, we often use this formulation without the specifier ``with probability tending to $1$'' in what follows. Our analysis of $\widehat{\bs{\Pi}}$ relies on the following decomposition (which holds with probability tending to $1$):
\begin{align*}
\widehat{\bs{\Pi}} 
 & = \bs{I} - \widehat{\bs{W}} (\widehat{\bs{W}}^\top \widehat{\bs{W}})^{-1} \widehat{\bs{W}}^\top \\
 & = \bigg\{ \bs{I} - \frac{1}{T} (\bs{F} \overline{\bs{\Gamma}}^\top \widehat{\bs{U}}) \Big[ \frac{1}{T} (\bs{F} \overline{\bs{\Gamma}}^\top \widehat{\bs{U}})^\top (\bs{F} \overline{\bs{\Gamma}}^\top \widehat{\bs{U}}) \Big]^{-1} (\bs{F} \overline{\bs{\Gamma}}^\top \widehat{\bs{U}})^\top \bigg\} - \widehat{\bs{R}},
\end{align*}
where 
\begin{align*} 
\widehat{\bs{R}} 
 & = \frac{1}{T} (\bs{F} \overline{\bs{\Gamma}}^\top \widehat{\bs{U}}) \bigg\{ \widehat{\bs{\Eig}}^{-1} - \Big[ \frac{1}{T} (\bs{F} \overline{\bs{\Gamma}}^\top \widehat{\bs{U}})^\top (\bs{F} \overline{\bs{\Gamma}}^\top \widehat{\bs{U}}) \Big]^{-1} \bigg\} (\bs{F} \overline{\bs{\Gamma}}^\top \widehat{\bs{U}})^\top \\
 & \quad + \frac{1}{T} (\bs{F} \overline{\bs{\Gamma}}^\top \widehat{\bs{U}}) \widehat{\bs{\Eig}}^{-1} (\overline{\bs{Z}} \widehat{\bs{U}})^\top 
 + \frac{1}{T} (\overline{\bs{Z}} \widehat{\bs{U}}) \widehat{\bs{\Eig}}^{-1} (\bs{F} \overline{\bs{\Gamma}}^\top \widehat{\bs{U}})^\top 
 + \frac{1}{T} (\overline{\bs{Z}} \widehat{\bs{U}}) \widehat{\bs{\Eig}}^{-1} (\overline{\bs{Z}} \widehat{\bs{U}})^\top 
\end{align*} 
and $\widehat{\bs{\Eig}}$ has already been defined above. Using this decomposition, we can prove the following result.

\begin{lemmaA}\label{lemma:projectionmatrixddecomposable}
With probability tending to $1$, $\widehat{\bs{\Pi}} = \bs{\Pi} - \widehat{\bs{R}}$.
\end{lemmaA}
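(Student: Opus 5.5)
Given the displayed decomposition $\widehat{\bs{\Pi}} = \bs{I} - \bs{P}_{\widehat{\bs{A}}} - \widehat{\bs{R}}$, with $\widehat{\bs{A}} := \bs{F} \overline{\bs{\Gamma}}^\top \widehat{\bs{U}} \in \reals^{T \times K}$ and
\[ \bs{P}_{\widehat{\bs{A}}} := \frac{1}{T} \widehat{\bs{A}} \Big[ \frac{1}{T} \widehat{\bs{A}}^\top \widehat{\bs{A}} \Big]^{-1} \widehat{\bs{A}}^\top, \]
the plan is to show that, with probability tending to $1$, the curly bracket equals the oracle projection $\bs{\Pi} = \bs{I} - \bs{F}(\bs{F}^\top \bs{F})^{-1}\bs{F}^\top$. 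Two things have to be done: verify the decomposition itself, and identify $\bs{P}_{\widehat{\bs{A}}}$ with the projection onto the column space of $\bs{F}$.

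\emph{The decomposition.} Work on the event $\{\widehat{K} = K\} \cap \{\widehat{\psi}_K \ge cp\}$, which has probability tending to $1$ by Lemma \ref{lemma2:eigenstructure}. On this event $\widehat{\bs{W}}^\top \widehat{\bs{W}} = T \widehat{\bs{\Eig}}$ is invertible. Write $\overline{\bs{X}} = \bs{F}\overline{\bs{\Gamma}}^\top + \overline{\bs{Z}}$, so that $\widehat{\bs{W}} = \widehat{\bs{A}} + \overline{\bs{Z}}\widehat{\bs{U}}$. Then
\[ \widehat{\bs{W}} (\widehat{\bs{W}}^\top \widehat{\bs{W}})^{-1} \widehat{\bs{W}}^\top = T^{-1} (\widehat{\bs{A}} + \overline{\bs{Z}}\widehat{\bs{U}}) \widehat{\bs{\Eig}}^{-1} (\widehat{\bs{A}} + \overline{\bs{Z}}\widehat{\bs{U}})^\top. \]
Expand the product into four terms and add and subtract $\bs{P}_{\widehat{\bs{A}}}$. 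This reproduces exactly the four summands of $\widehat{\bs{R}}$, so the decomposition is pure algebra.

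\emph{Identifying the projection.} Set $\bs{H} := \overline{\bs{\Gamma}}^\top \widehat{\bs{U}} \in \reals^{K \times K}$, so $\widehat{\bs{A}} = \bs{F}\bs{H}$. By \ref{C:factorssummable}, $\bs{F}^\top \bs{F} = T\bs{I}_K$, hence $T^{-1}\widehat{\bs{A}}^\top \widehat{\bs{A}} = \bs{H}^\top \bs{H}$. If $\bs{H}$ is invertible, then
\[ \bs{P}_{\widehat{\bs{A}}} = T^{-1} \bs{F} \bs{H} (\bs{H}^\top \bs{H})^{-1} \bs{H}^\top \bs{F}^\top = T^{-1} \bs{F}\bs{F}^\top = \bs{F}(\bs{F}^\top \bs{F})^{-1} \bs{F}^\top, \]
which gives $\bs{I} - \bs{P}_{\widehat{\bs{A}}} = \bs{\Pi}$ and the claim.

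\emph{Invertibility of $\bs{H}$ (main step).} I expect this to be the only real obstacle, and I would lower bound $\psi_{\min}(\bs{H}^\top \bs{H})$ as follows.
\begin{itemize}
\item Since $\widehat{\bs{U}}$ consists of eigenvectors, $\widehat{\bs{U}}^\top \widehat{\bs{\Sigma}} \widehat{\bs{U}} = \widehat{\bs{\Eig}}$, whose entries are at least $cp$ on the event.
\item Using $\overline{\bs{X}} = \bs{F}\overline{\bs{\Gamma}}^\top + \overline{\bs{Z}}$ and $\bs{F}^\top\bs{F} = T\bs{I}_K$, expand
\[ \widehat{\bs{U}}^\top \widehat{\bs{\Sigma}} \widehat{\bs{U}} = \bs{H}^\top \bs{H} + T^{-1}\big(\bs{H}^\top \bs{F}^\top \overline{\bs{Z}}\widehat{\bs{U}} + \widehat{\bs{U}}^\top \overline{\bs{Z}}^\top \bs{F}\bs{H} + \widehat{\bs{U}}^\top \overline{\bs{Z}}^\top \overline{\bs{Z}}\widehat{\bs{U}}\big). \]
\item Bound the remainder in spectral norm. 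We have $\|\bs{H}\| \le \|\overline{\bs{\Gamma}}\| = O_p(\sqrt{p})$ (shown in the proof of Lemma \ref{lemma1:eigenstructure}), $\|\bs{F}\| = \sqrt{T}$, and $\|\overline{\bs{Z}}\| = O_p(\sqrt{pT\log(pT)/n})$ by Lemma \ref{lemmaA:barZ}.
\item This makes the remainder $O_p(p\sqrt{\log(pT)/n}) = o_p(p)$, using \ref{C:nTp-large} or \ref{C:nTp-small}.
\end{itemize}
Hence, by Weyl's inequality, $\psi_{\min}(\bs{H}^\top \bs{H}) \ge cp/2 > 0$ with probability tending to $1$, so $\bs{H}$ is invertible.

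Intersecting the events above finishes the proof. The step needing care is making sure every set on which we argue has probability tending to $1$, including the replacement of the generalized inverse by the proper inverse.
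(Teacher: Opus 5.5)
Your proof is correct and follows essentially the argument the paper intends (it defers to Proposition A.6 of the linear-case paper): on the event $\{\widehat{K}=K,\ \widehat{\eig}_K \ge cp\}$ the displayed decomposition is pure algebra, and the leading term collapses to $T^{-1}\bs{F}\bs{F}^\top = \bs{I}-\bs{\Pi}$ because the $K\times K$ matrix $\overline{\bs{\Gamma}}^\top\widehat{\bs{U}}$ is invertible. Your Weyl argument for that invertibility rests on the same $O_p(p\sqrt{\log(pT)/n})$ spectral bound between $\widehat{\bs{\Eig}}$ and $T^{-1}(\bs{F}\overline{\bs{\Gamma}}^\top\widehat{\bs{U}})^\top(\bs{F}\overline{\bs{\Gamma}}^\top\widehat{\bs{U}})$ that the paper derives in the proof of Lemma \ref{lemma:NormOfRemainderR}, so nothing is missing.
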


\begin{proof}
The proof is virtually identical to that of Proposition A.6 in \cite{Mruecker2025CCE}.
\end{proof}

Importantly, the remainder $\widehat{\bs{R}}$ asymptotically vanishes in the following sense:

\begin{lemmaA}\label{lemma:NormOfRemainderR}
It holds that
\[ \|\widehat{\bs{R}}\| = O_p\left(\sqrt{\frac{\log(pT)}{n} }\right). \]
\end{lemmaA}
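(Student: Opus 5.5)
We bound each of the four summands in the definition of $\widehat{\bs{R}}$ separately in spectral norm, using Lemma \ref{lemma1:eigenstructure}, Lemma \ref{lemma2:eigenstructure}, Lemma \ref{lemma:bargammagammanorm} and Lemma \ref{lemmaA:barZ}. Throughout we work on the event $\{\widehat{K}=K\}$, whose probability tends to one. On this event the following basic bounds hold:
\begin{itemize}
\item $\|\bs{F}\| = \sqrt{T}$ by \ref{C:factorssummable};
\item $\|\overline{\bs{\Gamma}}\| = O_p(\sqrt{p})$, as shown in the proof of Lemma \ref{lemma1:eigenstructure};
\item $\|\widehat{\bs{U}}\| = 1$;
\item $\|\widehat{\bs{\Eig}}^{-1}\| \le (cp)^{-1}$ with probability tending to one, by Lemma \ref{lemma2:eigenstructure};
\item $\|\overline{\bs{Z}}\widehat{\bs{U}}\| \le \|\overline{\bs{Z}}\| = O_p(\sqrt{pT\log(pT)/n})$ by Lemma \ref{lemmaA:barZ}.
\end{itemize}

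The three cross and quadratic terms are then immediate. For the second term,
\[ \Big\|\tfrac1T(\bs{F}\overline{\bs{\Gamma}}^\top\widehat{\bs{U}})\widehat{\bs{\Eig}}^{-1}(\overline{\bs{Z}}\widehat{\bs{U}})^\top\Big\| \le \tfrac1T\sqrt{T}\,O_p(\sqrt p)\,O_p(p^{-1})\,O_p\Big(\sqrt{\tfrac{pT\log(pT)}{n}}\Big) = O_p\Big(\sqrt{\tfrac{\log(pT)}{n}}\Big). \]
The third term is its transpose and has the same bound. The fourth term is of order $T^{-1}p^{-1}\cdot pT\log(pT)/n = \log(pT)/n$, which is even smaller.

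The first term is the main step. Set $\bs{A} = T^{-1}(\bs{F}\overline{\bs{\Gamma}}^\top\widehat{\bs{U}})^\top(\bs{F}\overline{\bs{\Gamma}}^\top\widehat{\bs{U}})$. Since $\bs{F}^\top\bs{F}/T = \bs{I}$, this simplifies to $\bs{A} = \widehat{\bs{U}}^\top\overline{\bs{\Gamma}}\,\overline{\bs{\Gamma}}^\top\widehat{\bs{U}}$. We write
\[ \widehat{\bs{\Eig}}^{-1}-\bs{A}^{-1} = \widehat{\bs{\Eig}}^{-1}(\bs{A}-\widehat{\bs{\Eig}})\bs{A}^{-1}. \]
Two ingredients are needed.

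First, $\|\bs{A}^{-1}\| = O_p(p^{-1})$. To see this, expand $\widehat{\bs{\Sigma}} = \overline{\bs{\Gamma}}\,\overline{\bs{\Gamma}}^\top + \bs{E}$, where $\bs{E}$ collects the cross terms $T^{-1}\overline{\bs{\Gamma}}\bs{F}^\top\overline{\bs{Z}}$ (plus transpose) and $T^{-1}\overline{\bs{Z}}^\top\overline{\bs{Z}}$. By the same bounds as above, $\|\bs{E}\| = O_p(p\sqrt{\log(pT)/n}) = o_p(p)$. Hence
\[ \bs{A} = \widehat{\bs{\Eig}} - \widehat{\bs{U}}^\top\bs{E}\,\widehat{\bs{U}}, \]
so $\psi_{\min}(\bs{A}) \ge cp - o_p(p)$, which gives the claim. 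The same identity also supplies the second ingredient: $\|\bs{A}-\widehat{\bs{\Eig}}\| \le \|\bs{E}\| = O_p(p\sqrt{\log(pT)/n})$.

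Combining these,
\[ \|\widehat{\bs{\Eig}}^{-1}-\bs{A}^{-1}\| = O_p\big(p^{-2}\cdot p\sqrt{\log(pT)/n}\big) = O_p\big(p^{-1}\sqrt{\log(pT)/n}\big). \]
The outer factor satisfies $\|T^{-1/2}\bs{F}\overline{\bs{\Gamma}}^\top\widehat{\bs{U}}\|^2 = O_p(p)$. Multiplying, the first term is $O_p(\sqrt{\log(pT)/n})$.

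The main obstacle is precisely this first term. A naive comparison with the population matrix $\bs{\Gamma}\bs{\Gamma}^\top$ would introduce the $\|\overline{\bs{\Gamma}}-\bs{\Gamma}\|$ error of Lemma \ref{lemma:bargammagammanorm}. The key observation is therefore to compare $\bs{A}$ directly with $\widehat{\bs{\Eig}} = \widehat{\bs{U}}^\top\widehat{\bs{\Sigma}}\widehat{\bs{U}}$. Then only the $\overline{\bs{Z}}$-driven perturbation $\bs{E}$ enters, and the loading-average fluctuation cancels. Summing the four bounds yields $\|\widehat{\bs{R}}\| = O_p(\sqrt{\log(pT)/n})$.
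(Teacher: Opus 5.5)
Your proof is correct and follows essentially the same route as the paper. You use the same four-term split, the same bounds $\|\bs{F}\|=\sqrt{T}$, $\|\widehat{\bs{U}}\|=1$, $\|\overline{\bs{\Gamma}}\|=O_p(\sqrt{p})$, $\|\widehat{\bs{\Eig}}^{-1}\|=O_p(p^{-1})$ together with Lemma \ref{lemmaA:barZ}, and, for the first term, the same direct comparison of $\widehat{\bs{\Eig}}$ with $T^{-1}(\bs{F}\overline{\bs{\Gamma}}^\top\widehat{\bs{U}})^\top(\bs{F}\overline{\bs{\Gamma}}^\top\widehat{\bs{U}})$ at cost $O_p(p\sqrt{\log(pT)/n})$. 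The only cosmetic difference is the last step: the paper passes to inverses via $\|\bs{A}^{-1}-\bs{B}^{-1}\|\le \|\bs{B}^{-1}\|^2\|\bs{B}-\bs{A}\|/(1-\|\bs{B}^{-1}\|\|\bs{B}-\bs{A}\|)$, whereas you first bound $\|\bs{A}^{-1}\|$ by a Weyl argument and then apply the resolvent identity, which is equivalent.
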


\begin{proof}
From the definition of $\widehat{\bs{R}}$, it immediately follows that 
\begin{align}
\big\|\widehat{\bs{R}}\big\| 
 & \leq \frac{1}{T} \Big\|(\bs{F} \overline{\bs{\Gamma}}^\top \widehat{\bs{U}}) \bigg\{ \widehat{\bs{\Eig}}^{-1} - \Big[ \frac{1}{T} (\bs{F} \overline{\bs{\Gamma}}^\top \widehat{\bs{U}})^\top (\bs{F} \overline{\bs{\Gamma}}^\top \widehat{\bs{U}}) \Big]^{-1} \bigg\} (\bs{F} \overline{\bs{\Gamma}}^\top \widehat{\bs{U}})^\top \Big\| \nonumber \\
 & \quad + \frac{1}{T} \big\|(\bs{F} \overline{\bs{\Gamma}}^\top \widehat{\bs{U}}) \widehat{\bs{\Eig}}^{-1} (\overline{\bs{Z}} \widehat{\bs{U}})^\top \big\|  + \frac{1}{T} \big\|(\overline{\bs{Z}} \widehat{\bs{U}}) \widehat{\bs{\Eig}}^{-1} (\bs{F} \overline{\bs{\Gamma}}^\top \widehat{\bs{U}})^\top \big\| \nonumber \\
 & \quad + \frac{1}{T} \big\|(\overline{\bs{Z}} \widehat{\bs{U}}) \widehat{\bs{\Eig}}^{-1} (\overline{\bs{Z}} \widehat{\bs{U}})^\top \big\| \nonumber \\
 &\leq \frac{1}{T} \big\|\bs{F} \overline{\bs{\Gamma}}^\top \big\| \Big\|\widehat{\bs{\Eig}}^{-1} - \Big[ \frac{1}{T} (\bs{F} \overline{\bs{\Gamma}}^\top \widehat{\bs{U}})^\top (\bs{F} \overline{\bs{\Gamma}}^\top \widehat{\bs{U}}) \Big]^{-1} \Big\| \big\|\bs{F} \overline{\bs{\Gamma}}^\top \big\| \nonumber \\
 & \quad + \frac{1}{T} \big\|\bs{F} \overline{\bs{\Gamma}}^\top \big\| \big\| \widehat{\bs{\Eig}}^{-1} \big\| \big\|\overline{\bs{Z}} \big\| + \frac{1}{T} \big\|\overline{\bs{Z}} \big\| \big\|\widehat{\bs{\Eig}}^{-1} \big\| \big\|\bs{F} \overline{\bs{\Gamma}}^\top \big\| + \frac{1}{T} \big\|\overline{\bs{Z}}\big\|^2 \big\|\widehat{\bs{\Eig}}^{-1} \big\| \nonumber \\
 &\leq \big\|\overline{\bs{\Gamma}} \big\|^2 \Big\|\widehat{\bs{\Eig}}^{-1} - \Big[ \frac{1}{T} (\bs{F} \overline{\bs{\Gamma}}^\top \widehat{\bs{U}})^\top (\bs{F} \overline{\bs{\Gamma}}^\top \widehat{\bs{U}}) \Big]^{-1} \Big\| \nonumber \\
 & \quad + \frac{1}{\sqrt{T}} \big\| \overline{\bs{\Gamma}} \big\| \big\| \widehat{\bs{\Eig}}^{-1} \big\| \big\|\overline{\bs{Z}} \big\| + \frac{1}{\sqrt{T}} \big\|\overline{\bs{Z}} \big\| \big\|\widehat{\bs{\Eig}}^{-1} \big\| \big\| \overline{\bs{\Gamma}} \big\| + \frac{1}{T} \big\|\overline{\bs{Z}}\|^2 \|\widehat{\bs{\Eig}}^{-1} \big\|, \label{lemma:NormOfRemainderR:bound}
\end{align}
where we have used that $\| \widehat{\bs{U}} \| = 1$ and $\| \bs{F} \| = \sqrt{T}$. Below, we show that 
\begin{align}
\| \overline{\bs{\Gamma}} \| & = O_p(\sqrt{p}) \label{lemma:NormOfRemainderR:statement0} \\[0.2cm]
\big\lVert \widehat{\bs{\Psi}}^{-1} \big\rVert & = O_p\bigg(\frac{1}{p}\bigg) \label{lemma:NormOfRemainderR:statement1} \\
\Big\| \widehat{\bs{\Eig}}^{-1} - \Big[ \frac{1}{T} (\bs{F} \overline{\bs{\Gamma}}^\top \widehat{\bs{U}})^\top (\bs{F} \overline{\bs{\Gamma}}^\top \widehat{\bs{U}}) \Big]^{-1} \Big\| & = O_p\bigg( \frac{1}{p} \sqrt{\frac{\log(pT)}{n}} \bigg). \label{lemma:NormOfRemainderR:statement2}
\end{align}
Inserting these bounds together with the bound on $\|\overline{\bs{Z}}\|$ from Lemma \ref{lemmaA:barZ} into \eqref{lemma:NormOfRemainderR:bound} completes the proof.  
\end{proof}

\begin{proof}[Proof of \eqref{lemma:NormOfRemainderR:statement0}]
Since $\| \overline{\bs{\Gamma}} \| \le \| \overline{\bs{\Gamma}} - \bs{\Gamma} \| + \| \bs{\Gamma} \|$, \eqref{lemma:NormOfRemainderR:statement0} follows from Lemma \ref{lemma:bargammagammanorm}\ref{lemma:bargammagammanorm:1} and the fact that $\| \bs{\Gamma} \| = O(\sqrt{p})$ by assumption \ref{C:id2GAMMATGAMMAEV}.
\end{proof}

\begin{proof}[Proof of \eqref{lemma:NormOfRemainderR:statement1}]
Since $\widehat{\bs{\Eig}} = \text{diag}(\widehat{\eig}_1,\ldots,\widehat{\eig}_{\widehat{K}})$, it holds that 
$\| \widehat{\bs{\Psi}}^{-1} \| = \lambda_{\textnormal{max}}(\widehat{\bs{\Psi}}^{-1}) = 1/\widehat{\psi}_{\widehat{K}} = O_p(p^{-1})$, where the final equality is a direct consequence of Lemma \ref{lemma2:eigenstructure} and the fact that $\pr(\widehat{K} = K) \to 1$. This shows \eqref{lemma:NormOfRemainderR:statement1}. 
\end{proof}

\begin{proof}[Proof of \eqref{lemma:NormOfRemainderR:statement2}]
By definition, 
\[ \widehat{\bs{\Eig}} = \frac{\widehat{\bs{W}}^\top \widehat{\bs{W}}}{T} = \frac{1}{T} (\bs{F} \overline{\bs{\Gamma}}^\top \widehat{\bs{U}} + \overline{\bs{Z}} \widehat{\bs{U}})^\top (\bs{F} \overline{\bs{\Gamma}}^\top \widehat{\bs{U}} + \overline{\bs{Z}} \widehat{\bs{U}}). \]
Hence, 
\begin{align*} 
\Big\| \widehat{\bs{\Eig}} - \frac{1}{T} (\bs{F} \overline{\bs{\Gamma}}^\top \widehat{\bs{U}})^\top (\bs{F} \overline{\bs{\Gamma}}^\top \widehat{\bs{U}}) \Big\| 
 & \le 2 \Big\| \frac{1}{T} (\bs{F} \overline{\bs{\Gamma}}^\top \widehat{\bs{U}})^\top (\overline{\bs{Z}} \widehat{\bs{U}}) \Big\| + \Big\| \frac{1}{T} (\overline{\bs{Z}} \widehat{\bs{U}})^\top (\overline{\bs{Z}} \widehat{\bs{U}}) \Big\| \\
 & \le 2 \big\| \overline{\bs{\Gamma}} \big\| \Big\| \frac{\bs{F}^\top \overline{\bs{Z}}}{T} \Big\| + \Big\| \frac{\overline{\bs{Z}}^\top \overline{\bs{Z}}}{T} \Big\|
  \\
 & \le \frac{2}{T} \big\| \overline{\bs{\Gamma}} \big\| \big\| \bs{F} \big\| \big\|\overline{\bs{Z}}\big\| +  \frac{1}{T}\big\|\overline{\bs{Z}} \big\|^2,  
\end{align*} 
where we have used that $\| \widehat{\bs{U}} \| = 1$. Since $\| \bs F \| = \sqrt{T}$ and $\| \overline{\bs{\Gamma}} \| = O_p(\sqrt{p})$ by \eqref{lemma:NormOfRemainderR:statement0}, we can apply Lemma \ref{lemmaA:barZ} to obtain that 
\begin{equation}\label{lemma:bargammagammanorm:proof1}
\Big\| \widehat{\bs{\Eig}} - \frac{1}{T} (\bs{F} \overline{\bs{\Gamma}}^\top \widehat{\bs{U}})^\top (\bs{F} \overline{\bs{\Gamma}}^\top \widehat{\bs{U}}) \Big\| = O_p\bigg( p \sqrt{\frac{\log(pT)}{n}}  \bigg).
\end{equation}
We next make use of the following bound for invertible matrices $\bs{A}$ and $\bs{B}$: Since $\bs{A}^{-1} - \bs{B}^{-1} = (\bs{A}^{-1} - \bs{B}^{-1} + \bs{B}^{-1}) (\bs{B} - \bs{A}) \bs{B}^{-1}$, it holds that  $\| \bs{A}^{-1} - \bs{B}^{-1} \| \le ( \|\bs{A}^{-1} - \bs{B}^{-1} \| + \| \bs{B}^{-1} \|) \| \bs{B} - \bs{A} \| \| \bs{B}^{-1} \|$ and thus
\[ \| \bs{A}^{-1} - \bs{B}^{-1} \| \le \frac{\| \bs{B}^{-1} \|^2 \|\bs{B} - \bs{A}\|}{1 - \| \bs{B}^{-1} \| \| \bs{B} - \bs{A} \|}, \] 
provided that $\| \bs{B}^{-1} \| \| \bs{B} - \bs{A} \| < 1$. With this bound, we obtain that 
\begin{align*}
\Big\| \widehat{\bs{\Eig}}^{-1} - \Big[ \frac{1}{T} (\bs{F} \overline{\bs{\Gamma}}^\top \widehat{\bs{U}})^\top (\bs{F} \overline{\bs{\Gamma}}^\top \widehat{\bs{U}}) \Big]^{-1} \Big\| 
 & \le \frac{\| \widehat{\bs{\Eig}}^{-1} \|^2 \|\widehat{\bs{\Eig}} - \frac{1}{T} (\bs{F} \overline{\bs{\Gamma}}^\top \widehat{\bs{U}})^\top (\bs{F} \overline{\bs{\Gamma}}^\top \widehat{\bs{U}}) \|}{1 - \| \widehat{\bs{\Eig}}^{-1} \| \|\widehat{\bs{\Eig}} - \frac{1}{T} (\bs{F} \overline{\bs{\Gamma}}^\top \widehat{\bs{U}})^\top (\bs{F} \overline{\bs{\Gamma}}^\top \widehat{\bs{U}}) \|}.
\end{align*} 
Using this together with \eqref{lemma:NormOfRemainderR:statement1} and \eqref{lemma:bargammagammanorm:proof1}, we can conclude that 
\[ \Big\| \widehat{\bs{\Eig}}^{-1} - \Big[ \frac{1}{T} (\bs{F} \overline{\bs{\Gamma}}^\top \widehat{\bs{U}})^\top (\bs{F} \overline{\bs{\Gamma}}^\top \widehat{\bs{U}}) \Big]^{-1} \Big\| = O_p\bigg( \frac{1}{p} \sqrt{\frac{\log(pT)}{n}} \bigg). \qedhere \]
\end{proof}

According to Lemma \ref{lemma:projectionmatrixddecomposable},
$\| \widehat{\bs{\Pi}} \bs{F} \| = \| (\bs{\Pi} - \widehat{\bs{R}}) \bs{F} \| = \| \widehat{\bs{R}} \bs{F} \|$
with probability tending to $1$. Moreover, since $\| \bs{F} \| = \sqrt{T}$, 
\[  \| \widehat{\bs{R}} \bs{F} \| \le  \| \widehat{\bs{R}} \| \| \bs{F} \| = O_p\left(\sqrt{\frac{\log(pT) T}{n} }\right) \]
by Lemma \ref{lemma:NormOfRemainderR}. This yields the final result of this section.
\begin{lemmaA}\label{lemma:PihatF}
It holds that 
\[ \| \widehat{\bs{\Pi}} \bs{F} \|  = O_p\left(\sqrt{\frac{\log(pT) T}{n} }\right). \]    
\end{lemmaA}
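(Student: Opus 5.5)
The statement follows almost immediately from the decomposition of $\widehat{\bs{\Pi}}$ and the bound on the remainder. I would first work on the event $\mathcal{E}_n$ on which $\widehat{K} = K$ and $\widehat{\bs{W}}^\top \widehat{\bs{W}}$ is invertible. By Lemma \ref{lemma2:eigenstructure}, $\pr(\mathcal{E}_n) \to 1$. On this event, Lemma \ref{lemma:projectionmatrixddecomposable} gives $\widehat{\bs{\Pi}} = \bs{\Pi} - \widehat{\bs{R}}$. Since $\bs{\Pi}$ is the orthogonal projection onto the complement of the column space of $\bs{F}$, we have $\bs{\Pi}\bs{F} = 0$, and therefore $\widehat{\bs{\Pi}}\bs{F} = -\widehat{\bs{R}}\bs{F}$ on $\mathcal{E}_n$.

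Next I would bound $\|\widehat{\bs{R}}\bs{F}\| \le \|\widehat{\bs{R}}\|\,\|\bs{F}\|$ by submultiplicativity of the spectral norm. The normalization $\bs{F}^\top\bs{F}/T = \bs{I}_K$ from \ref{C:factorssummable} gives $\|\bs{F}\| = \psi_{\max}^{1/2}(\bs{F}^\top\bs{F}) = \sqrt{T}$. Combining this with Lemma \ref{lemma:NormOfRemainderR}, which gives $\|\widehat{\bs{R}}\| = O_p(\sqrt{\log(pT)/n})$, yields $\|\widehat{\bs{R}}\bs{F}\| = O_p(\sqrt{T\log(pT)/n})$.

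Finally, I would transfer the bound from $\mathcal{E}_n$ to the unconditional statement. For any $M>0$,
\[
\pr\Big(\|\widehat{\bs{\Pi}}\bs{F}\| > M\sqrt{T\log(pT)/n}\Big) \le \pr\Big(\|\widehat{\bs{R}}\bs{F}\| > M\sqrt{T\log(pT)/n}\Big) + \pr(\mathcal{E}_n^c).
\]
The second term vanishes as $n \to \infty$, and the first can be made arbitrarily small by choosing $M$ large. No step is a real obstacle, because all the substantive work sits in Lemma \ref{lemma:NormOfRemainderR}. The only point that needs care is this restriction to the high-probability event where the generalized inverse coincides with the proper inverse.
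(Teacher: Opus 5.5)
Your proof is correct and is essentially the paper's argument: on the high-probability event where $\widehat{\bs{\Pi}} = \bs{\Pi} - \widehat{\bs{R}}$ (Lemma \ref{lemma:projectionmatrixddecomposable}), you use $\bs{\Pi}\bs{F} = 0$, $\|\bs{F}\| = \sqrt{T}$ and Lemma \ref{lemma:NormOfRemainderR}. Your explicit transfer from $\mathcal{E}_n$ to the unconditional $O_p$ statement spells out what the paper leaves implicit; strictly, $\mathcal{E}_n$ should be taken to be the event on which the identity of Lemma \ref{lemma:projectionmatrixddecomposable} holds, which also has probability tending to $1$.
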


%% file: ms_appendixB.tex
\section*{Proof of Theorem \ref{theo:main}}

Let $\bs{\Phi}_{i(j\ell)}$ be the $(j,\ell)$-th column of $\bs{\Phi}_i$ with $1 \le j \le p$ and $1 \le \ell \le L_j$. Moreover, let $\widetilde{\bs{\Phi}}_{i(j\ell)} = \bs{\Phi}_{i(j\ell)} - \overline{\bs{\Phi}}_{(j\ell)}$ be the empirically centred version of $\bs{\Phi}_{i(j\ell)}$ and analogously set $\widetilde{\gamma}_i = \gamma_i - \overline{\gamma}$ along with $\widetilde{\varepsilon}_i = \varepsilon_i - \overline{\varepsilon}$. Define the event 
\[ \mathcal{T}_\lambda = \left\{\frac{4}{nT}\max_{\substack{1 \le j \le p \\ 1 \le \ell \le L_j}} \Big|\sum_{i=1}^n( \widehat{\bs{\Pi}} \widetilde{\bs{\Phi}}_{i(j\ell)})^\top \widehat{\bs{\Pi}}(\bs{F} \widetilde{\gamma}_i + \widetilde{\varepsilon}_i ) \Big| \leq \lambda\right\} \]
and let $\mathcal{T}_{\text{RE}}$ denote the event that the design matrix 
\[\widehat{\bs{\Phi}} =(\widehat{\bs{\Phi}}_1^\top, \ldots, \widehat{\bs{\Phi}}_n^\top)^\top \quad \textnormal{with} \quad \widehat{\bs{\Phi}}_i =  \widehat{\bs{\Pi}} \widetilde{\bs{\Phi}}_i \] 
fulfills the restricted eigenvalue condition $\textnormal{RE}(S,{\varphi^2})$. We first show that the HD-CCE estimator is well-behaved on the event $\mathcal{T}_\lambda \cap \mathcal{T}_{\text{RE}}$.
\begin{propA}\label{proposition:ACCEBound} 
On the event $\mathcal{T}_\lambda \cap \mathcal{T}_{\textnormal{RE}}$, the HD-CCE estimator $\widehat{\beta}_\lambda$ satisfies the bound 
\[ \sum_{j=1}^p\sum_{\ell=1}^{L_j}  |\beta_{j\ell}- \widehat{\beta}_ {\lambda,j\ell}| \leq \frac{4 \lambda s }{\varphi^2}. \]
\end{propA}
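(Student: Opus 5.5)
This is the standard deterministic lasso oracle inequality. I would run the usual basic-inequality argument on the transformed model, with $\mathcal{T}_\lambda$ controlling the noise term and $\mathcal{T}_{\text{RE}}$ supplying the curvature.

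First, I rewrite the transformed response. By the centred model \eqref{eq:model-centred},
\[ \widehat{Y}_i = \widehat{\bs{\Pi}}(Y_i - \overline{Y}) = \widehat{\bs{\Phi}}_i\beta + \widehat{\bs{\Pi}}(\bs{F}\widetilde{\gamma}_i + \widetilde{\varepsilon}_i) =: \widehat{\bs{\Phi}}_i\beta + \widehat{e}_i. \]
The factor part is not eliminated here, since $\widehat{\bs{\Pi}}\bs{F} \neq 0$. It simply becomes part of the noise $\widehat{e}_i$, and the event $\mathcal{T}_\lambda$ is defined with exactly this composite noise.

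Second, I compare the objective at $\widehat{\beta}_\lambda$ and at $\beta$. Set $\delta = \widehat{\beta}_\lambda - \beta$. Since $\widehat{\beta}_\lambda$ minimises the objective, expanding the squares gives
\[ \frac{1}{nT}\sum_{i=1}^n \|\widehat{\bs{\Phi}}_i\delta\|^2 + \lambda\|\widehat{\beta}_\lambda\|_1 \le \frac{2}{nT}\sum_{i=1}^n \widehat{e}_i^\top \widehat{\bs{\Phi}}_i \delta + \lambda\|\beta\|_1. \]
Since $\widehat{\bs{\Pi}}$ is symmetric and idempotent, $(\widehat{\bs{\Pi}}\widetilde{\bs{\Phi}}_{i(j\ell)})^\top\widehat{e}_i$ is exactly the quantity appearing in $\mathcal{T}_\lambda$. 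By Hölder's inequality, on $\mathcal{T}_\lambda$ the cross term is bounded by $2\cdot(\lambda/4)\|\delta\|_1 = (\lambda/2)\|\delta\|_1$. Next, I use the triangle inequality on $S$ and the fact that $\beta_{S^c} = 0$, so that $\|\beta\|_1 - \|\widehat{\beta}_\lambda\|_1 \le \|\delta_S\|_1 - \|\delta_{S^c}\|_1$. Multiplying through by $2$ then yields
\[ \frac{2}{nT}\|\widehat{\bs{\Phi}}\delta\|^2 + \lambda\|\delta_{S^c}\|_1 \le 3\lambda\|\delta_S\|_1. \]

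Third, I apply the restricted eigenvalue condition. The last display shows that $\delta$ lies in the cone $\|\delta_{S^c}\|_1 \le 3\|\delta_S\|_1$. On $\mathcal{T}_{\text{RE}}$, Definition \ref{def:RE-condition} therefore gives $\|\delta_S\|_1^2 \le \|\delta\|_1^2 \le (s/\varphi^2)\|\widehat{\bs{\Phi}}\delta\|^2/(nT)$. Adding $\lambda\|\delta_S\|_1$ to both sides of the cone display gives
\[ \frac{2}{nT}\|\widehat{\bs{\Phi}}\delta\|^2 + \lambda\|\delta\|_1 \le 4\lambda\|\delta_S\|_1 \le 4\lambda\sqrt{s/\varphi^2}\,\|\widehat{\bs{\Phi}}\delta\|/\sqrt{nT}. \]
The elementary inequality $4uv \le 2v^2 + 2u^2$, applied with $v = \|\widehat{\bs{\Phi}}\delta\|/\sqrt{nT}$ and $u = \lambda\sqrt{s}/\varphi$, turns the right-hand side into $2v^2 + 2\lambda^2 s/\varphi^2$. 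Cancelling the $2v^2$ term gives $\lambda\|\delta\|_1 \le 2\lambda^2 s/\varphi^2$, that is, $\|\delta\|_1 \le 2\lambda s/\varphi^2 \le 4\lambda s/\varphi^2$.

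The argument is purely deterministic, so I do not expect a real obstacle. The only point requiring care is the bookkeeping of the constants. The factor $4$ in $\mathcal{T}_\lambda$, the factor $3$ in the cone, and the RE condition being stated for $\|b\|_1$ rather than $\|b_S\|_1$ must be matched so that the final bound comes out as stated. A cruder Young step would also suffice, since $2\lambda s/\varphi^2$ already sits below the claimed $4\lambda s/\varphi^2$. The genuinely hard work, showing that $\mathcal{T}_\lambda$ has probability tending to one, is deferred to later steps of the proof of Theorem \ref{theo:main}.
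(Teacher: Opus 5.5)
Your proposal is correct and follows the paper's proof essentially step for step. Both arguments use the basic inequality, bound the cross term by $(\lambda/2)\|\delta\|_1$ on $\mathcal{T}_\lambda$, derive $\frac{2}{nT}\|\widehat{\bs{\Phi}}\delta\|^2 + \lambda\|\delta_{S^c}\|_1 \le 3\lambda\|\delta_S\|_1$, and close with the RE condition and a Young-type inequality. The only difference is that your balanced choice $4uv \le 2u^2 + 2v^2$, in place of the paper's $4ab \le 4a^2 + b^2$, gives the slightly sharper constant $2\lambda s/\varphi^2$, which implies the stated bound.
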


\begin{proof}[Proof of Proposition \ref{proposition:ACCEBound}]
The proposition follows from standard finite-sample theory for the lasso. We provide a short proof for completeness. 
The basic inequality of the lasso reads
\begin{align*}
\frac{1}{nT} \sum_{i=1}^n \big\| \widehat{Y}_i - \widehat{\bs{\Phi}}_i \widehat{\beta}_\lambda \big\|^2 + \pen \sum_{j=1}^p \sum_{\ell=1}^{L_j}|\widehat{\beta}_{\lambda,j\ell}| 
 & \leq \frac{1}{nT} \sum_{i=1}^n \big\| \widehat{Y}_i - \widehat{\bs{\Phi}}_i \beta \big\|^2 + \pen \sum_{j=1}^p\sum_{\ell=1}^{L_j} |\beta_{j\ell}| \\
 & = \frac{1}{nT} \sum_{i=1}^n \big\| \widehat{\bs{\Pi}} (\bs{F}\widetilde{\gamma}_i + \widetilde{\varepsilon}_i)  \big\|^2 + \pen \sum_{j=1}^p \sum_{\ell=1}^{L_j}|\beta_{j\ell}|.
\end{align*}
Rearranging this inequality yields 
\begin{align*}
& \frac{1}{nT} \sum_{i=1}^n \big\|  \widehat{\bs{\Phi}}_i (\beta- \widehat{\beta}_\lambda)  \big\|^2   + \pen\sum_{j=1}^p \sum_{\ell=1}^{L_j}|\widehat{\beta}_{\lambda,j\ell}| \\*
& \leq - \frac{2}{nT} \sum_{i=1}^n (\widehat{\bs{\Phi}}_i (\beta- \widehat{\beta}_\lambda)) ^\top \widehat{\bs{\Pi}}(\bs{F}\widetilde{\gamma}_i  + \widetilde{\varepsilon}_i) + \pen  \sum_{j=1}^p \sum_{\ell=1}^{L_j}|\beta_{j\ell}| \\
& = -\frac{2}{nT} \sum_{i=1}^n \left(\sum_{j=1}^p\sum_{\ell=1}^{L_j} \widehat{\bs{\Pi}} \widetilde{\bs{\Phi}}_{i(j\ell)} (\beta_{j\ell}- \widehat{\beta}_ {\lambda,j\ell})\right)^\top\widehat{\bs{\Pi}}(\bs{F} \widetilde{\gamma}_i  +\widetilde{\varepsilon}_i)  + \pen  \sum_{j=1}^p \sum_{\ell=1}^{L_j}|\beta_{j\ell}| \\
& \leq \frac{2}{nT} \bigg\{ \max_{j,\ell} \bigg|\sum_{i=1}^n( \widehat{\bs{\Pi}} \widetilde{\bs{\Phi}}_{i(j\ell)})^\top \widehat{\bs{\Pi}}(\bs{F} \widetilde{\gamma}_i + \widetilde{\varepsilon}_i )\bigg| \bigg\} \sum_{j=1}^p \sum_{\ell=1}^{L_j}   |\beta_{j\ell}- \widehat{\beta}_ {\lambda,j\ell}|  + \pen  \sum_{j=1}^p \sum_{\ell=1}^{L_j}|\beta_{j\ell}|.
\end{align*}
Thus, on the event $\mathcal{T}_\lambda$, 
\begin{align*}
\frac{2}{nT} \sum_{i=1}^n \big\|  \widehat{\bs{\Phi}}_i (\beta- \widehat{\beta}_\lambda)  \big\|^2  
 &\leq 3\lambda \sum_{(j,\ell) \in S}  |\beta_{j\ell}- \widehat{\beta}_ {\lambda,j\ell}| - \lambda \sum_{(j,\ell) \in S^c}  | \widehat{\beta}_ {\lambda,j\ell}|,
\end{align*}
implying that the lasso's error vector $\beta - \widehat{\beta}_\lambda$ satisfies the constraint $\sum_{(j,\ell) \in S^c}  | \beta_{j\ell} - \widehat{\beta}_ {\lambda,j\ell}| \le 3 \sum_{(j,\ell) \in S}  |\beta_{j\ell}- \widehat{\beta}_ {\lambda,j\ell}|$. Consequently, on the event $\mathcal{T}_\lambda \cap \mathcal{T}_{\text{RE}}$, 
\begin{align*}
 & \frac{2}{nT} \sum_{i=1}^n \big\|  \widehat{\bs{\Phi}}_i (\beta- \widehat{\beta}_\lambda)  \big\|^2 + \lambda \sum_{j=1}^p \sum_{\ell=1}^{L_j}  |\beta_{j\ell}- \widehat{\beta}_{\lambda,j\ell}| \\*
 & \leq 3\lambda \sum_{(j,\ell) \in S}  |\beta_{j\ell}- \widehat{\beta}_{\lambda,j\ell}| - \lambda \sum_{(j,\ell) \in S^c}  | \widehat{\beta}_{\lambda,j\ell}| + \lambda \sum_{j=1}^p\sum_{\ell=1}^{L_j}  |\beta_{j\ell}- \widehat{\beta}_{\lambda,j\ell}| \\
 & \leq 4\lambda \sum_{(j,\ell) \in S}  |\beta_{j\ell}- \widehat{\beta}_ {\lambda,j\ell}| 
  \leq \frac{4\lambda \sqrt{s} }{\sqrt{{\varphi^2}}  } \sqrt{\frac{1}{nT} \sum_{i=1}^n \|\widehat{\bs{\Phi}}_i (\beta - \widehat{\beta}_{\lambda})\|^2} \\
 & \leq \frac{4 \lambda^2 s }{{\varphi^2} } + \frac{1}{nT} \sum_{i=1}^n \|\widehat{\bs{\Phi}}_i (\beta- \widehat{\beta}_ {\lambda})\|^2,
\end{align*}
where the last inequality uses $4ab \le 4a^2 + b^2$. From this, it immediately follows that 
\[ \sum_{j=1}^p\sum_{\ell=1}^{L_j}  |\beta_{j\ell}- \widehat{\beta}_ {\lambda,j\ell}| \leq \frac{4 \lambda s }{\varphi^2} \]
on the event $\mathcal{T}_\lambda \cap \mathcal{T}_{\text{RE}}$.
\end{proof}

\noindent The next result shows that the event $\mathcal{T}_\lambda$ occurs with probability tending to $1$, provided that the penalty constant $\lambda$ is chosen as $\lambda = h_{n} \nu_{n,T}$ with
\[ \nu_{n,T} = \sqrt{\frac{\log(pT)}{n}} (npT)^{2/\theta}\]
and $\{h_{n}\}$ some slowly diverging sequence, e.g., $h_{n} =\log \log(n)$.

\begin{propA}\label{prop:convergenceofsetSl}
Under \ref{C:loadings}--\ref{C:id2GAMMATGAMMAEV}, it holds that
\[ \frac{1}{nT}\max_{j,\ell} \left|\sum_{i=1}^n( \widehat{\bs{\Pi}} \widetilde{\bs{\Phi}}_{i(j\ell)})^\top \widehat{\bs{\Pi}}(\bs{F} \widetilde{\gamma}_i + \widetilde{\varepsilon}_i )\right| = O_p(\nu_{n,T}). \]
\end{propA}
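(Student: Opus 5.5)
We work on the event $\widehat{\bs{\Pi}} = \bs{\Pi} - \widehat{\bs{R}}$ of Lemma \ref{lemma:projectionmatrixddecomposable}, which has probability tending to $1$. Since $\widehat{\bs{\Pi}}$ is a symmetric idempotent projection, $(\widehat{\bs{\Pi}} \widetilde{\bs{\Phi}}_{i(j\ell)})^\top \widehat{\bs{\Pi}} = \widetilde{\bs{\Phi}}_{i(j\ell)}^\top \widehat{\bs{\Pi}}$. We therefore split the target quantity into a factor part and an error part:
\[
\frac{1}{nT}\sum_{i=1}^n \widetilde{\bs{\Phi}}_{i(j\ell)}^\top \widehat{\bs{\Pi}} \bs{F}\widetilde{\gamma}_i \quad\text{and}\quad \frac{1}{nT}\sum_{i=1}^n \widetilde{\bs{\Phi}}_{i(j\ell)}^\top \widehat{\bs{\Pi}} \widetilde{\varepsilon}_i .
\]
We bound each one uniformly over $(j,\ell)$.

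\textbf{Factor part.} No cancellation is available here, because $\widetilde{\bs{\Phi}}_{i(j\ell)}$ and $\widetilde\gamma_i$ are correlated through the loadings. We use only the smallness of $\widehat{\bs{\Pi}}\bs{F}$. By Cauchy--Schwarz the term is at most
\[
\frac{1}{nT}\sum_i \|\widetilde{\bs{\Phi}}_{i(j\ell)}\|\,\|\widehat{\bs{\Pi}}\bs{F}\|\,\|\widetilde\gamma_i\|.
\]
Lemma \ref{lemma:PihatF} gives $\|\widehat{\bs{\Pi}}\bs{F}\| = O_p(\sqrt{T\log(pT)/n})$. Lemma \ref{lemma:generictruncationarguments}\ref{lemma:generictruncationarguments1} applied under \ref{C:phi} and \ref{C:loadings} gives $\max_{i,t,j,\ell}|\widetilde{\Phi}_{it,j\ell}| = O_p((npT)^{1/\theta})$ and $\max_{i}\|\widetilde\gamma_i\| = O_p(n^{1/\theta})$, where the centring costs at most a factor $2$. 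Hence $\|\widetilde{\bs{\Phi}}_{i(j\ell)}\| \le \sqrt{T}\,O_p((npT)^{1/\theta})$, and the factor part is
\[
O_p\Big(\sqrt{\log(pT)/n}\,(npT)^{2/\theta}\Big) = O_p(\nu_{n,T}).
\]
This is exactly where the exponent $2/\theta$ in $\nu_{n,T}$ comes from.

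\textbf{Error part.} We write $\widehat{\bs{\Pi}} = \bs{\Pi} - \widehat{\bs{R}}$. For the $\widehat{\bs{R}}$ piece we again apply Cauchy--Schwarz, now with $\|\widehat{\bs{R}}\| = O_p(\sqrt{\log(pT)/n})$ from Lemma \ref{lemma:NormOfRemainderR}. Together with the maxima bounds $\|\widetilde{\bs{\Phi}}_{i(j\ell)}\|\le \sqrt{T}(npT)^{1/\theta}$ and $\|\widetilde\varepsilon_i\|\le\sqrt{T}(nT)^{1/\theta}$, this piece is also $O_p(\nu_{n,T})$.

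The $\bs{\Pi}$ piece is deterministic in its projection, so we expand it as
\[
\frac{1}{n}\sum_i \frac{1}{T}\sum_{t,s}\Pi_{ts}\,\widetilde{\Phi}_{it,j\ell}\,\widetilde\varepsilon_{is}.
\]
We remove the centrings: $\sum_i \widetilde{\Phi}_{i}^\top \bs{\Pi}\widetilde\varepsilon_i = \sum_i (\Phi_i - \overline\Phi)^\top\bs{\Pi}\varepsilon_i$. The summands $V_{i,j\ell} = T^{-1}(\bs{\Phi}_{i(j\ell)}-\ex\bs{\Phi}_{i(j\ell)})^\top\bs{\Pi}\varepsilon_i$ are independent across $i$ and have mean zero, because $\varepsilon_i$ is independent of $(\bs{\Gamma}_i,\bs{Z}_i)$ and centred. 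The remainder $(\overline{\bs{\Phi}}-\ex\bs{\Phi}_1)^\top\bs{\Pi}\overline\varepsilon$ is handled with Lemma \ref{lemma:generictruncationarguments}\ref{lemma:generictruncationarguments2} applied to both averages; it is $O_p(\log(pT)/n)$ times a bounded factor, using $\max_{t,s}|\Pi_{ts}|$ and $T^{-1}\sum_{t,s}|\Pi_{ts}|\le$ an $O(1)$ bound. That last bound needs care; we will instead write $\bs{\Pi}=\bs{I}-\bs{F}\bs{F}^\top/T$ and use \ref{C:factorssummable}.

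To control $\max_{j,\ell}|n^{-1}\sum_i V_{i,j\ell}|$ we apply Lemma \ref{lemma:generictruncationarguments}\ref{lemma:generictruncationarguments2}, which requires uniform $\nu$-th moments of $V_{i,j\ell}$ with $\nu>8$. Writing $V_{i,j\ell} = T^{-1}\sum_t \Phi^c_{it}\varepsilon_{it} - \sum_k (T^{-1}\sum_t \Phi^c_{it}F_{tk})(T^{-1}\sum_s F_{sk}\varepsilon_{is})$, Minkowski and Hölder together with \ref{C:phi}, \ref{C:eps} and \ref{C:factorssummable} give moments of order $\theta/2>4$ only.

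\textbf{Main obstacle.} The crux is therefore the moment condition in the error part. Rather than rely on the lemma as stated, we will truncate at level $(npT)^{2/\theta}$ and apply Bernstein/Hoeffding plus a union bound over the $pL$ indices. The tail contribution is controlled via Lemma \ref{lemma:generictruncationarguments}\ref{lemma:generictruncationarguments1} applied separately to $\Phi$ and $\varepsilon$. This yields a bound of order $\sqrt{\log(pT)/n}\,(npT)^{2/\theta}=\nu_{n,T}$, again matching the stated rate, and the growth condition \ref{C:nTp-large} (resp.\ \ref{C:nTp-small}) ensures the truncation tail probability vanishes. Combining the three pieces proves the proposition.
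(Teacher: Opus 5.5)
Your proof is correct, and its skeleton is the paper's. The factor part is bounded exactly as in the paper, via $\|\widehat{\bs{\Pi}}\bs{F}\|$ (Lemma \ref{lemma:PihatF}) and the maxima from Lemma \ref{lemma:generictruncationarguments}\ref{lemma:generictruncationarguments1}. In the error part, the $\widehat{\bs{R}}$-piece is handled by the same Cauchy--Schwarz bound with Lemma \ref{lemma:NormOfRemainderR}.

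The one real difference is the $\bs{\Pi}$-piece. The paper expands $\bs{\Pi}=\bs{I}-T^{-1}\bs{F}\bs{F}^\top$ and pulls the time indices outside the cross-sectional sum. It then bounds the two terms by $\max_{j,\ell,t}|n^{-1}\sum_i\widetilde{\phi}_{j\ell}(X_{it,j})\widetilde{\varepsilon}_{it}|$ and by $K\max_k(T^{-1}\sum_t|F_{tk}|)^2\max_{j,\ell,t,t'}|n^{-1}\sum_i\widetilde{\phi}_{j\ell}(X_{it,j})\widetilde{\varepsilon}_{it'}|$. Lemma \ref{lemma:generictruncationarguments}\ref{lemma:generictruncationarguments2} applies directly to the elementary products $\phi_{j\ell}(X_{it,j})\varepsilon_{it'}$ and gives $O_p(\sqrt{\log(pT)/n})$ with no further work. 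You instead keep $\bs{\Pi}$ inside a single summand $V_{i,j\ell}$ indexed by $(j,\ell)$ only. This buys a union bound over $pL$ rather than $pLT^2$ indices, at the price of computing moments of weighted time averages.

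On that computation, the ``main obstacle'' you identify is not real. By \ref{C:loadings}--\ref{C:eps}, $\varepsilon_i$ is independent of $\bs{X}_i$, so no H\"older step is needed: $\ex|\Phi^c_{it}\varepsilon_{is}|^\theta=\ex|\Phi^c_{it}|^\theta\,\ex|\varepsilon_{is}|^\theta\le C$. Likewise, in the second term of $V_{i,j\ell}$ the factors $T^{-1}\sum_t\Phi^c_{it}F_{tk}$ and $T^{-1}\sum_s F_{sk}\varepsilon_{is}$ are independent, each with bounded $\theta$-th moment by Minkowski and \ref{C:factorssummable}. Hence $\ex|V_{i,j\ell}|^\theta\le C$, and Lemma \ref{lemma:generictruncationarguments}\ref{lemma:generictruncationarguments2} applies to $V_{i,j\ell}$ as stated under \ref{C:nTp-large} (resp.\ \ref{C:nTp-small}), even giving $O_p(\sqrt{\log(pT)/n})$.

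Your truncation fallback would also work. For suitable $C_\delta$, the level $C_\delta(npT)^{2/\theta}$ bounds $\max_{i,j,\ell}|V_{i,j\ell}|$ with probability at least $1-\delta$ by Lemma \ref{lemma:generictruncationarguments}\ref{lemma:generictruncationarguments1}, and Hoeffding then finishes. Two corrections if you keep it:
\begin{itemize}
\item The tail event is made small by the choice of $C_\delta$, not by the growth condition.
\item The truncated summands must be recentred, and that bias needs a sufficiently high moment. The variance alone does not suffice at this truncation level; the $\theta/2$-th moment does.
\end{itemize}

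Finally, the centring remainder needs no bound on $\sum_{t,s}|\Pi_{ts}|$. Since $\|\bs{\Pi}\|\le 1$, it is at most $\max_t|n^{-1}\sum_i\phi_{j\ell}(X_{it,j})-\ex\phi_{j\ell}(X_{1t,j})|\cdot\max_t|\overline{\varepsilon}_t|=O_p(\log(pT)/n)$, uniformly in $(j,\ell)$.
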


\begin{proof}[Proof of Proposition \ref{prop:convergenceofsetSl}]
It suffices to prove the following two statements: 
\begin{align}
\frac{1}{nT}  \max_{j,\ell} \left|\sum_{i=1}^n( \widehat{\bs{\Pi}} \widetilde{\bs{\Phi}}_{i(j\ell)})^\top \widehat{\bs{\Pi}}\bs{F} \widetilde{\gamma}_i \right| & =  O_p\left(\sqrt{\frac{\log(pT)}{n}} (npT)^{2/\theta} \right)  \label{proof:prop:SLTERM1} \\  
\frac{1}{nT}\max_{j,\ell} \left|\sum_{i=1}^n( \widehat{\bs{\Pi}} \widetilde{\bs{\Phi}}_{i(j\ell)})^\top \widehat{\bs{\Pi}}\widetilde{\varepsilon}_i \right| & = O_p\left(\sqrt{\frac{\log(pT)}{n}} (npT)^{2/\theta} \right).  \label{proof:prop:SLTERM3}
\end{align}
For the proof, we make use of the notation $\widetilde{\bs{\Phi}}_{i(j\ell)} = (\widetilde{\phi}_{j\ell}(X_{i1,j}),\ldots,\widetilde{\phi}_{j\ell}(X_{iT,j}))^\top$ with $\widetilde{\phi}_{j\ell}(X_{it,j}) = \phi_{j\ell}(X_{it,j}) - n^{-1} \sum_{i=1}^n \phi_{j\ell}(X_{it,j})$ and $\widetilde{\varepsilon}_i = (\widetilde{\varepsilon}_{i1},\ldots, \widetilde{\varepsilon}_{iT})^\top$ with $\widetilde{\varepsilon}_{it} = \varepsilon_{it} - n^{-1} \sum_{i=1}^n \varepsilon_{it}$.

We start with the proof of \eqref{proof:prop:SLTERM1}. With the help of Lemma \ref{lemma:generictruncationarguments}, we can show that $\max_{i,t,j,\ell} |  \widetilde{\phi}_{j\ell}(X_{it, j})| = O_p((npT)^{1/\theta})$ and $\max_i \| \widetilde{\gamma}_i \| = O_p(n^{1/\theta})$. Using this together with Lemma \ref{lemma:PihatF} immediately yields that 
\begin{align*}
   \frac{1}{nT}\max_{j,\ell} \left|\sum_{i=1}^n( \widehat{\bs{\Pi}} \widetilde{\bs{\Phi}}_{i(j\ell)})^\top \widehat{\bs{\Pi}}\bs{F} \widetilde{\gamma}_i \right| &\leq \|\widehat{\bs{\Pi}}\bs{F}\| \frac{1}{nT}\max_{j,\ell} \sum_{i=1}^n \| \widetilde{\bs{\Phi}}_{i(j\ell)}\| \| \widetilde{\gamma}_i \|\\
   &\leq \|\widehat{\bs{\Pi}}\bs{F}\| \frac{1}{\sqrt{T}}\max_{i,t,j,\ell} |  \widetilde{\phi}_{j\ell}(X_{it, j})| \max_i \| \widetilde{\gamma}_i \| \\
   &= O_p\left(\sqrt{\frac{\log(pT)}{n}} (npT)^{2/\theta} \right),
\end{align*}
which is statement \eqref{proof:prop:SLTERM1}.

We next turn to the proof of \eqref{proof:prop:SLTERM3}. Since $\widehat{\bs{\Pi}} = \bs{\Pi} - \widehat{\bs{R}} = \bs{I} - T^{-1} \bs{F} \bs{F}^\top - \widehat{\bs{R}}$ with probability tending to $1$ by Lemma \ref{lemma:projectionmatrixddecomposable}, it holds that 
\begin{align}
\frac{1}{nT}\max_{j,\ell} \bigg|\sum_{i=1}^n( \widehat{\bs{\Pi}} \widetilde{\bs{\Phi}}_{i(j\ell)})^\top \widehat{\bs{\Pi}}\widetilde{\varepsilon}_i \bigg| 
 & \le \frac{1}{nT}\max_{j,\ell} \bigg|\sum_{i=1}^n \widetilde{\bs{\Phi}}_{i(j\ell)}^\top \widetilde{\varepsilon}_i \bigg| \nonumber \\*
 & \quad + \frac{1}{nT}\max_{j,\ell} \bigg|\sum_{i=1}^n \widetilde{\bs{\Phi}}_{i(j\ell)}^\top (T^{-1} \bs{F} \bs{F}^\top) \widetilde{\varepsilon}_i \bigg| \nonumber \\
 & \quad + \frac{1}{nT}\max_{j,\ell} \bigg|\sum_{i=1}^n \widetilde{\bs{\Phi}}_{i(j\ell)}^\top \widehat{\bs{R}} \widetilde{\varepsilon}_i \bigg| \label{proof:prop:SLTERM3:step1}
\end{align}
with probability tending to $1$. From Lemma \ref{lemma:generictruncationarguments}, it follows that 
\begin{align}
\frac{1}{nT}\max_{j,\ell} \bigg|\sum_{i=1}^n \widetilde{\bs{\Phi}}_{i(j\ell)}^\top \widetilde{\varepsilon}_i \bigg| 
 & \le \max_{j,\ell,t} \bigg| \frac{1}{n} \sum_{i=1}^n \phi_{j\ell}(X_{it,j})\varepsilon_{it} \bigg| \nonumber \\
 & \quad + \max_{j,\ell,t} \bigg| \Big\{ \frac{1}{n} \sum_{i'=1}^n\phi_{j\ell}(X_{i't,j}) \Big\} \Big\{\frac{1}{n}\sum_{i'=1}^n\varepsilon_{i't}\Big\} \bigg| \nonumber \\
 & = O_p\left(\sqrt{\frac{\log(pT)}{n}}\right). \label{proof:prop:SLTERM3:step2} 
 \end{align}
Moreover, Lemma \ref{lemma:generictruncationarguments} and \ref{C:factorssummable} yield that
\begin{align}
 & \frac{1}{nT}\max_{j,\ell} \bigg|\sum_{i=1}^n \widetilde{\bs{\Phi}}_{i(j\ell)}^\top (T^{-1} \bs{F} \bs{F}^\top) \widetilde{\varepsilon}_i \bigg| \nonumber \\
 & = \max_{j,\ell}\frac{1}{nT^2}\bigg|\sum_{k=1}^K\sum_{t,t'=1}^T \sum_{i=1}^n \widetilde{\phi}_{j\ell}(X_{it,j}) F_{tk} F_{t'k}\widetilde{\varepsilon}_{it'} \bigg| \nonumber \\
 & \le K \max_k \bigg\{ \frac{1}{T^2}\sum_{t,t'=1}^T| F_{tk}||F_{t'k}| \bigg\} \max_{j,\ell,t,t'} \bigg| \frac{1}{n}  \sum_{i=1}^n \widetilde{\phi}_{j\ell}(X_{it,j}) \widetilde{\varepsilon}_{it'} \bigg| \nonumber \\
 & = O_p\left(\sqrt{\frac{\log(pT)}{n}}\right). \label{proof:prop:SLTERM3:step3}
\end{align}
Finally, using again that $\max_{i,j,\ell,t}|\widetilde{\phi}_{j\ell}(X_{it,j})| = O_p((npT)^{1/\theta})$ and $\max_{i,t}|\widetilde{\varepsilon}_{it}| = O_p((nT)^{1/\theta})$ by Lemma \ref{lemma:generictruncationarguments} and applying Lemma \ref{lemma:NormOfRemainderR}, we obtain that
\begin{align}
\frac{1}{nT}\max_{j,\ell} \bigg|\sum_{i=1}^n \widetilde{\bs{\Phi}}_{i(j\ell)}^\top \widehat{\bs{R}} \widetilde{\varepsilon}_i \bigg|    & \le \frac{1}{T} \big\|\widehat{\bs{R}}\big\| \max_{i,j,\ell} \big\|\widetilde{ \bs{\Phi}}_{i(j\ell)} \big\| \max_{i} \big\|\widetilde{\varepsilon}_i\big\| \nonumber \\
 & \le \big\|\widehat{\bs{R}}\big\| \max_{i,j,\ell,t} \big|\widetilde{\phi}_{j\ell}(X_{it,j}) \big| \max_{i,t} |\widetilde{\varepsilon}_{it}| \nonumber \\
 & = O_p\left(\sqrt{\frac{\log(pT)}{n}} (npT)^{2/\theta} \right). \label{proof:prop:SLTERM3:step4}
\end{align}
Plugging \eqref{proof:prop:SLTERM3:step2}--\eqref{proof:prop:SLTERM3:step4} into the bound \eqref{proof:prop:SLTERM3:step1} yields statement \eqref{proof:prop:SLTERM3}.
\end{proof}

\noindent Theorem \ref{theo:main} follows immediately by combining Propositions \ref{proposition:ACCEBound} and \ref{prop:convergenceofsetSl}.

%% file: ms_appendixC.tex
\section*{Proof of Theorem \ref{prop:compatibility}}

The RE condition RE$(S,\varphi^2)$ can be formulated as follows: 
\[ \bigg(\sum_{(j,\ell)\in S}| v_{j\ell} |\bigg)^2 \le \frac{\| \widehat{\bs{\Phi}} v \|^2}{nT} \frac{|S|}{{\varphi^2}} \qquad \text{for all } v \in \mathcal{C}(S,3), \] 
where 
$\mathcal{C}(S,3) = \{ v: \sum_{(j,\ell)\in S^c} | v_{j\ell} | \leq  3 \sum_{(j,\ell)\in S}| v_{j\ell} | \}$. 
The aim is to verify that this condition holds with probability tending to $1$, provided that the population-level RE condition \eqref{eq:CC-theoretical} is satisfied.

To prove this, we bound the term $\| \widehat{\bs{\Phi}} v \|^2 / (nT)$ from below. As before, we use the notation $\widehat{\bs{\Phi}}_i = \widehat{\bs{\Pi}} \widetilde{\bs{\Phi}}_i$ with $\widetilde{\bs{\Phi}}_i = \bs{\Phi}_i - n^{-1} \sum_{i'=1}^n \bs{\Phi}_{i'}$. Moreover, we let $\bs{\Phi}_{i(j\ell)} = (\phi_{j\ell}(X_{i1,j}),\ldots,\phi_{j\ell}(X_{iT,j}))^\top$ be the $(j,\ell)$-th column of $\bs{\Phi}_i$ and define $\widetilde{\bs{\Phi}}_{i(j\ell)}$ analogously. 
For any fixed $v=(v_1,\ldots,v_p)^\top \in \mathcal{C}(S,3)$, the following holds with probability tending to $1$: 
\begin{align*}
\frac{\| \widehat{\bs{\Phi}} v \|^2}{nT} 
 & = \frac{1}{nT} \sum_{i=1}^n \|\widehat{\bs{\Pi}} \widetilde{\bs{\Phi}}_i v\|^2 \\
 & \geq \frac{1}{nT} \sum_{i=1}^n \|\bs{\Pi} \widetilde{\bs{\Phi}}_i v\|^2 - \frac{2}{nT  } \sum_{i=1}^n  (\widehat{\bs{R}} \widetilde{\bs{\Phi}}_i  v)^\top (\bs{\Pi} \widetilde{\bs{\Phi}}_i v) \\
 & = \frac{1}{nT} \sum_{i=1}^n \bigg\| \bs{\Pi} \bigg(\bs{\Phi}_i - \frac{1}{n}\sum_{i'=1}^n \bs{\Phi}_{i'}\bigg)v \bigg\|^2 - \frac{2}{nT  } \sum_{i=1}^n (\widehat{\bs{R}} \widetilde{\bs{\Phi}}_i  v)^\top (\bs{\Pi} \widetilde{\bs{\Phi}}_i v) \\
 & \geq \frac{1}{nT} \sum_{i=1}^n \big\|\bs{\Pi} (\bs{\Phi}_i - \ex[\bs{\Phi}_i])v\big\|^2 \\
 & \quad + \frac{2}{nT} \sum_{i=1}^n  \big\{ \bs{\Pi} (\bs{\Phi}_i -  \ex[\bs{\Phi}_i])v \big\}^\top \bigg\{ \bigg( \ex[\bs{\Phi}_i] - \frac{1}{n}\sum_{i'=1}^n \bs{\Phi}_{i'} \bigg) v \bigg\} \\
 & \quad - \frac{2}{nT} \sum_{i=1}^n  (\widehat{\bs{R}} \widetilde{\bs{\Phi}}_i  v)^\top (\bs{\Pi} \widetilde{\bs{\Phi}}_i v) \\
 & \geq - \frac{1}{nT} \left |\sum_{i=1}^n \bigg\{ \big\|\bs{\Pi} (\bs{\Phi}_i -  \ex[\bs{\Phi}_i])v \big\|^2 - \ex\big[\big\|\bs{\Pi} (\bs{\Phi}_i -  \ex[\bs{\Phi}_i])v\big\|^2 \big] \bigg\} \right| \\
 & \quad + \frac{1}{nT} \sum_{i=1}^n \ex\big[ \big\|\bs{\Pi} (\bs{\Phi}_i -  \ex[\bs{\Phi}_i])v\big\|^2 \big] \\
 & \quad + \frac{2}{nT} \sum_{i=1}^n  \big\{ \bs{\Pi} (\bs{\Phi}_i -  \ex[\bs{\Phi}_i])v \big\}^\top \bigg\{ \bigg( \ex[\bs{\Phi}_i] - \frac{1}{n}\sum_{i'=1}^n \bs{\Phi}_{i'} \bigg) v \bigg\} \\
 & \quad- \frac{2}{nT  } \sum_{i=1}^n  (\widehat{\bs{R}} \widetilde{\bs{\Phi}}_i  v)^\top (\bs{\Pi} \widetilde{\bs{\Phi}}_i v),
\end{align*} 
where the first inequality is by Lemma \ref{lemma:projectionmatrixddecomposable}. In the sequel, we show the following three statements:
\begin{align}
\frac{1}{nT} & \Bigg|\sum_{i=1}^n \bigg\{ \big\|\bs{\Pi} (\bs{\Phi}_i - \ex[\bs{\Phi}_i])v \big\|^2 - \ex\big[\big\|\bs{\Pi} (\bs{\Phi}_i -  \ex[\bs{\Phi}_i])v\big\|^2 \big] \bigg\} \Bigg| \nonumber \\ & = O_p\left(\sqrt{\frac{\log(pT)}{n}} \bigg(\sum_{(j,\ell)\in S}|v_{j\ell}| \bigg)^2\right) \label{eq:comp-cond-R1} \\
\frac{1}{nT} & \sum_{i=1}^n  \big\{ \bs{\Pi} (\bs{\Phi}_i - \ex[\bs{\Phi}_i])v \big\}^\top \bigg\{ \bigg( \ex[\bs{\Phi}_i] - \frac{1}{n}\sum_{i'=1}^n \bs{\Phi}_{i'} \bigg) v \bigg\} \nonumber \\ & = O_p\left(\sqrt{\frac{\log(pT)}{n}} (npT)^{1/\theta} \bigg(\sum_{(j,\ell)\in S}|v_{j\ell}| \bigg)^2\right) \label{eq:comp-cond-R2} \\
\frac{1}{nT} & \sum_{i=1}^n (\widehat{\bs{R}} \widetilde{\bs{\Phi}}_i  v)^\top (\bs{\Pi} \widetilde{\bs{\Phi}}_i v) \nonumber \\ & =  O_p\left( \sqrt{\frac{\log(pT)}{n}}(npT)^{2/\theta} \bigg(\sum_{(j,\ell)\in S}|v_{j\ell}| \bigg)^2\right). \label{eq:comp-cond-R3}
\end{align}    
Using these three statements together with the population-level RE condition \eqref{eq:CC-theoretical}, we obtain that for any $v \in \mathcal{C}(S,3)$, 
\begin{align*}
\frac{\| \widehat{\bs{\Phi}} v \|^2}{nT} 
 & \ge \frac{1}{nT} \sum_{i=1}^n \ex\big[ \big\|\bs{\Pi} (\bs{\Phi}_i -  \ex[\bs{\Phi}_i])v\big\|^2 \big] \\*
 & \quad + O_p\left( \sqrt{\frac{\log(pT)}{n}}(npT)^{2/\theta} \right) \bigg(\sum_{(j,\ell)\in S}|v_{j\ell}| \bigg)^2 \\
 & \ge c^{(1)} \sum_{j=1}^p \sum_{\ell=1}^{L_j} |v_{j\ell}|^2 \\*
 & \quad + O_p\left( \sqrt{\frac{\log(pT)}{n}}(npT)^{2/\theta} \right) \bigg(\sum_{(j,\ell)\in S}|v_{j\ell}| \bigg)^2 \\
 & \ge \Bigg\{ \frac{c^{(1)}}{s} +  O_p\left( \sqrt{\frac{\log(pT)}{n}}(npT)^{2/\theta} \right) \Bigg\} \bigg(\sum_{(j,\ell)\in S}|v_{j\ell}| \bigg)^2, 
\end{align*}
where the last inequality is due to the fact that $\sum_{j=1}^p \sum_{\ell=1}^{L_j} |v_{j\ell}|^2 \ge \sum_{(j,\ell) \in S} |v_{j\ell}|^2 \ge  (\sum_{(j,\ell) \in S} |v_{j\ell}|)^2/s$. Since $\sqrt{\log(pT)/n}(npT)^{2/\theta} = o(1/s)$ by \ref{C:s-large} in the large-$T$ and \ref{C:s-small} in the small-$T$ case, respectively, we can conclude that 
\[ \frac{\| \widehat{\bs{\Phi}} v \|^2}{nT} \ge \frac{c^{(1)}}{2s} \bigg(\sum_{(j,\ell)\in S}|v_{j\ell}| \bigg)^2 \]
with probability tending to $1$. Put differently, the condition RE$(S,\varphi^2)$ with $\varphi^2 = c^{(1)}/2$ is satisfied with probability tending to $1$.

\begin{proof}[Proof of \eqref{eq:comp-cond-R1}]
It holds that
\begin{align*}
 & \frac{1}{nT} \left |\sum_{i=1}^n \Big\{ \|\bs{\Pi} (\bs{\Phi}_i -  \ex[\bs{\Phi}_i])v\|^2 - \ex\left[\|\bs{\Pi} (\bs{\Phi}_i -  \ex[\bs{\Phi}_i])v\|^2 \right] \Big\} \right| \\
 & = \Bigg|v^\top \frac{1}{nT}\sum_{i=1}^n \Big\{ (\bs{\Phi}_i -  \ex[\bs{\Phi}_i])^\top \bs{\Pi} (\bs{\Phi}_i -  \ex[\bs{\Phi}_i]) - \ex\left[(\bs{\Phi}_i -  \ex[\bs{\Phi}_i])^\top \bs{\Pi} (\bs{\Phi}_i -  \ex[\bs{\Phi}_i]) \right] \Big\} v \Bigg| \\
 & \leq \sum_{j,j'=1}^p\sum_{\ell=1}^{L_j}\sum_{\ell'=1}^{L_{j'}} |v_{j\ell}||v_{j'\ell'}| \\*
 & \qquad \Bigg |\frac{1}{nT}\sum_{i=1}^n  \Big\{ \big(\phi_{j\ell}(X_{i(j)}) -  \ex[\phi_{j\ell}(X_{i(j)})]\big)^\top \bs{\Pi} \big(\phi_{j'\ell'}(X_{i(j')}) -  \ex[\phi_{j'\ell'}(X_{i(j')})]\big) \\* 
 & \qquad - \ex\left[ \big(\phi_{j\ell}(X_{i(j)}) -  \ex[\phi_{j\ell}(X_{i(j)})]\big)^\top \bs{\Pi} \big(\phi_{j'\ell'}(X_{i(j')}) -  \ex[\phi_{j'\ell'}(X_{i(j')})]\big) \right] \Big\} \Bigg| \\
 & \le Q_A + Q_B
\end{align*} 
with 
\begin{align*} 
Q_A  & = \sum_{j,j'=1}^p\sum_{\ell=1}^{L_j}\sum_{\ell'=1}^{L_{j'}} |v_{j\ell}||v_{j'\ell'}| \\*
     & \qquad \Bigg |\frac{1}{nT}\sum_{i=1}^n  \Big\{ \big(\phi_{j\ell}(X_{i(j)}) -  \ex[\phi_{j\ell}(X_{i(j)})]\big)^\top \big(\phi_{j'\ell'}(X_{i(j')}) - \ex[\phi_{j'\ell'}(X_{i(j')})]\big) \\* 
     & \qquad - \ex\left[ \big(\phi_{j\ell}(X_{i(j)}) - \ex[\phi_{j\ell}(X_{i(j)})]\big)^\top \big(\phi_{j'\ell'}(X_{i(j')}) -  \ex[\phi_{j'\ell'}(X_{i(j')})]\big) \right] \Big\} \Bigg| \\
Q_B  & = \sum_{j,j'=1}^p\sum_{\ell=1}^{L_j}\sum_{\ell'=1}^{L_{j'}} |v_{j\ell}||v_{j'\ell'}| \\*
     & \qquad \Bigg |\frac{1}{nT}\sum_{i=1}^n  \bigg\{ \big(\phi_{j\ell}(X_{i(j)}) -  \ex[\phi_{j\ell}(X_{i(j)})]\big)^\top \frac{\bs{F}\bs{F}^\top}{T} \big(\phi_{j'\ell'}(X_{i(j')}) - \ex[\phi_{j'\ell'}(X_{i(j')})]\big) \\* 
     & \qquad - \ex\Big[ \big(\phi_{j\ell}(X_{i(j)}) - \ex[\phi_{j\ell}(X_{i(j)})]\big)^\top \frac{\bs{F}\bs{F}^\top}{T} \big(\phi_{j'\ell'}(X_{i(j')}) -  \ex[\phi_{j'\ell'}(X_{i(j')})]\big) \Big] \bigg\} \Bigg|.
\end{align*}
Moreover, for any $v \in \mathcal{C}(S,3)$, 
\begin{align*}
Q_A 
 &\leq \bigg(4 \sum_{(j,\ell)\in S}|v_{j\ell}| \bigg)^2 \\*
 & \quad \ \Bigg\{\max_{t,j,j',\ell,\ell'} \bigg|\frac{1}{n}\sum_{i=1}^n \Big\{ \big(\phi_{j\ell}(X_{it,j}) -  \ex[\phi_{j\ell}(X_{it,j})]\big) \big(\phi_{j'\ell'}(X_{it,j'}) -  \ex[\phi_{j'\ell'}(X_{it,j'})]\big) \\ 
 & \qquad - \ex\Big[ \big(\phi_{j\ell}(X_{it,j}) -  \ex[\phi_{j\ell}(X_{it,j})]\big) \big(\phi_{j'\ell'}(X_{it,j'}) -  \ex[\phi_{j'\ell'}(X_{it,j'})]\big) \Big] \Big\} \bigg| \Bigg\} \\ 
 & = O_p\left(\sqrt{\frac{\log(pT)}{n}} \bigg(\sum_{(j,\ell)\in S}|v_{j\ell}| \bigg)^2\right),    
\end{align*}
where the maximum over $t,j,j',\ell,\ell'$ can be shown to be of order $\sqrt{\log(pT)/n}$ by using the same techniques as in the proof of Lemma \ref{lemma:generictruncationarguments}\ref{lemma:generictruncationarguments2}. Analogously, 
\begin{align*}
Q_B 
 & = \sum_{j,j'=1}^p\sum_{\ell=1}^{L_j}\sum_{\ell' =1}^{L_{j'}} |v_{j\ell}||v_{j'\ell'}| \ \Bigg|\frac{1}{nT^2} \sum_{k=1}^K \sum_{t,t'=1}^T F_{tk}F_{t'k} \\ 
 & \qquad \sum_{i=1}^n \Big\{ \big(\phi_{j\ell}(X_{it,j}) -  \ex[\phi_{j\ell}(X_{it,j})]\big)\big(\phi_{j'\ell'}(X_{it',j'}) - \ex[\phi_{j'\ell'}(X_{it',j'})]\big) \\ 
 & \phantom{\qquad \sum_{i=1}^n \Big\{} - \ex\Big[ \big(\phi_{j\ell}(X_{it,j}) -  \ex[\phi_{j\ell}(X_{it,j})]\big) \big(\phi_{j'\ell'}(X_{it',j'}) -  \ex[\phi_{j'\ell'}(X_{it',j'})] \big) \Big] \Big\} \Bigg| \\
 & \le K\bigg(\max_k\frac{1}{T}\sum_{t=1}^T |F_{tk}|\bigg)^2 \bigg(4 \sum_{(j,\ell)\in S}|v_{j\ell}| \bigg)^2 \\
 & \qquad \max_{t,t',j,j',\ell,\ell'}\Bigg |\frac{1}{n}\sum_{i=1}^n \Big\{ \big(\phi_{j\ell}(X_{it,j}) -  \ex[\phi_{j\ell}(X_{it,j})]\big)\big(\phi_{j'\ell'}(X_{it',j'}) - \ex[\phi_{j'\ell'}(X_{it',j'})]\big) \\* 
 & \phantom{\qquad \max_{t,t',j,j',\ell,\ell'}\Bigg |} - \ex\Big[ \big(\phi_{j\ell}(X_{it,j}) -  \ex[\phi_{j\ell}(X_{it,j})]\big) \big(\phi_{j'\ell'}(X_{it',j'}) -  \ex[\phi_{j'\ell'}(X_{it',j'})] \big) \Big] \Big\} \Bigg| \\
 & = O_p\left(\sqrt{\frac{\log(pT)}{n}} \bigg(\sum_{(j,\ell)\in S}|v_{j\ell}| \bigg)^2\right),
\end{align*}
where $\max_k\frac{1}{T}\sum_{t=1}^T |F_{tk}| \le C < \infty$ by \ref{C:factorssummable}. 
\end{proof}

\pagebreak
\begin{proof}[Proof of \eqref{eq:comp-cond-R2}]
By Lemma \ref{lemma:generictruncationarguments},
\[ \max_{i,j,\ell,t} \big| \phi_{j\ell}(X_{it,j}) -  \ex[\phi_{j\ell}(X_{it,j})]\big| =  O_p\left((npT)^{1/\theta} \right) \]
and
\begin{align*} 
 & \max_{i,j,\ell,t} \bigg| \frac{1}{n}\sum_{i'=1}^n\phi_{j\ell}(X_{i't,j}) - \ex[\phi_{j\ell}(X_{it,j})] \bigg| \\
 & = \max_{j,\ell,t} \bigg| \frac{1}{n}\sum_{i'=1}^n \big( \phi_{j\ell}(X_{i't,j}) - \ex[\phi_{j\ell}(X_{i't,j})] \big) \bigg| \\
 & = O_p\left(\sqrt{\frac{\log(pT)}{n}} \right). 
\end{align*}
Using this together with the fact that $\|\bs{\Pi}\|= 1$, we get that  
\begin{align*}
 & \frac{1}{nT} \sum_{i=1}^n \big\{ \bs{\Pi} (\bs{\Phi}_i -  \ex[\bs{\Phi}_i])v \big\}^\top \bigg\{ \bigg(\ex[\bs{\Phi}_i] - \frac{1}{n}\sum_{i'=1}^n \bs{\Phi}_{i'} \bigg) v\bigg\} \\
 & =\sum_{j,j'=1}^p\sum_{\ell=1}^{L_j}\sum_{\ell'=1}^{L_{j'}} \frac{1}{nT} \sum_{i=1}^n  v_{j\ell}v_{j'\ell'} \big(\bs{\Pi} (\bs{\Phi}_i -  \ex[\bs{\Phi}_i])\big)_{(j\ell)}^\top \bigg(\ex[\bs{\Phi}_i] - \frac{1}{n}\sum_{i'=1}^n \bs{\Phi}_{i'}\bigg)_{(j'\ell')} \\
 & \leq \sum_{j,j'=1}^p\sum_{\ell=1}^{L_j}\sum_{\ell'=1}^{L_{j'}} \frac{1}{nT} \sum_{i=1}^n  |v_{j\ell}| |v_{j'\ell'}| \big\| \big( \bs{\Pi} (\bs{\Phi}_i -  \ex[\bs{\Phi}_i]) \big)_{(j\ell)} \big\|  \Bigg\|\bigg(\ex[\bs{\Phi}_i] - \frac{1}{n}\sum_{i'=1}^n \bs{\Phi}_{i'}\bigg)_{(j'\ell')} \bigg\| \\
 & \leq \frac{1}{T}\max_{i,j,\ell} \big\| (\bs{\Phi}_i -  \ex[\bs{\Phi}_i])_{(j\ell)} \big\| \max_{i,j,\ell} \bigg\| \bigg(\ex[\bs{\Phi}_i] - \frac{1}{n}\sum_{i'=1}^n \bs{\Phi}_{i'}\bigg)_{(j\ell)} \bigg\|    \sum_{j,j'=1}^p\sum_{\ell=1}^{L_j}\sum_{\ell' =1}^{L_{j'}}   |v_{j\ell}| |v_{j'\ell'}| \\
 & \leq \max_{i,j,\ell,t} \big| \phi_{j\ell}(X_{it,j}) -  \ex[\phi_{j\ell}(X_{it,j})]\big|  \max_{i,j,\ell,t} \bigg| \ex[\phi_{j\ell}(X_{it,j})] - \frac{1}{n}\sum_{i'=1}^n\phi_{j\ell}(X_{i't,j}) \bigg|  \bigg(4\sum_{(j,\ell)\in S}|v_{j\ell}| \bigg)^2 \\
 & = O_p\left(\sqrt{\frac{\log(pT)}{n}} (npT)^{1/\theta}  \bigg(\sum_{(j,\ell)\in S}|v_{j\ell}| \bigg)^2\right)
\end{align*}
for any $v \in \mathcal{C}(S,3)$.
\end{proof}

\begin{proof}[Proof of \eqref{eq:comp-cond-R3}] For any $v \in \mathcal{C}(S,3)$, it holds that 
\begin{align*}
 & \frac{1}{nT} \sum_{i=1}^n  (\widehat{\bs{R}} \widetilde{\bs{\Phi}}_i  v)^\top (\bs{\Pi} \widetilde{\bs{\Phi}}_i v) \\
 & = \frac{1}{nT} \sum_{j,j'=1}^p\sum_{\ell=1}^{L_j}\sum_{\ell' =1}^{L_{j'}}\sum_{i=1}^n v_{j\ell} \widetilde{\phi}_{j\ell}(X_{i(j)})^\top \widehat{\bs{R}}^\top  \bs{\Pi} \widetilde{\phi}_{j'\ell'}(X_{i(j')}) v_{j'\ell'} \\
 & \leq \frac{1}{nT} \sum_{j,j'=1}^p\sum_{\ell=1}^{L_j}\sum_{\ell' =1}^{L_{j'}} \sum_{i=1}^n |v_{j\ell}| \|\widetilde{\phi}_{j\ell}(X_{i(j)})\| \| \widehat{\bs{R}} \| \| \widetilde{\phi}_{j'\ell'}(X_{i(j')})\| | v_{j'\ell'}| \\
 & \leq \frac{1}{T} \| \widehat{\bs{R}}\| \left(\max_{i,j,\ell}\|\widetilde{\phi}_{j\ell}(X_{i(j)})\|\right)^2 \bigg(4\sum_{(j,\ell)\in S}|v_{j\ell}| \bigg)^2 \\
 & = O_p\left( \sqrt{\frac{\log(pT)}{n}}(npT)^{2/\theta}  \bigg(\sum_{(j,\ell)\in S}|v_{j\ell}| \bigg)^2\right),
\end{align*}
where we have used that $\| \widehat{\bs{R}}\| = O_p(\sqrt{\log(pT) / n})$ by Lemma \ref{lemma:NormOfRemainderR} and 
\[\max_{i,j,\ell}\|\widetilde{\phi}_{j\ell}(X_{i(j)})\| = O_p\left(\sqrt{T}(npT)^{1/\theta} \right),\] 
by Lemma \ref{lemma:generictruncationarguments}\ref{lemma:generictruncationarguments1}.
\end{proof}

\section*{Proof of Theorem \ref{prop:compatibility-parametric}}

We begin by proving the following lemma. 
\begin{lemmaA}\label{lemma:PolynomialBlocks} 
Under the assumptions of Theorem \ref{prop:compatibility-parametric}, there exists a fixed constant $c^{(2)}$ (which is in particular independent of $n$, $T$ and $p$) such that
\[ \min_{1 \le t \le T} \bigg\{ \sum_{j=1}^p v_j^\top \bs{\Omega}_{jt} v_j \bigg\} \ge  c^{(2)} \sum_{j=1}^p \|v_j\|^2 \quad \text{for any vectors } v_j \in \reals^{L_j}, \]
where $\bs{\Omega}_{jt} = \ex[\{ \phi_j(X_{it,j}) - \ex[\phi_j(X_{it,j})] \} \{ \phi_j(X_{it,j}) - \ex[\phi_j(X_{it,j})] \}^\top]$ is the covariance matrix of the random vector $\phi_j(X_{it,j}) = (\phi_{j1}(X_{it,j}),\ldots,\phi_{jL_j}(X_{it,j}))^\top$.
\end{lemmaA}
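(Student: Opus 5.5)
The lemma is separable over $j$: it suffices to show that there is a constant $c^{(2)}>0$, independent of $j$, $t$, $n$, $T$, $p$, with $\eig_{\min}(\bs{\Omega}_{jt}) \ge c^{(2)}$ for all $j$ and $t$. Summing $v_j^\top \bs{\Omega}_{jt} v_j \ge c^{(2)}\|v_j\|^2$ over $j$ and taking the minimum over $t$ then gives the claim. So the whole task is a uniform lower bound on the smallest eigenvalue of the covariance matrix of the monomial vector $(X,X^2,\ldots,X^{L_j})$ with $X = X_{it,j}$.

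\emph{Step 1: distribution of $X_{it,j}$.} Under conditions (a) and (b) of Theorem \ref{prop:compatibility-parametric}, $X_{it,j} = \Gamma_{i,j\cdot}^\top F_t + Z_{it,j}$, where $Z_{it,j}$ is independent of the Gaussian loading vector $\Gamma_{i,j\cdot}$. Hence $X_{it,j} \sim N(\mu_{jt},\sigma_{jt}^2)$ with
\[
\mu_{jt} = \bs{\Gamma}_{j\cdot}^\top F_t, \qquad \sigma_{jt}^2 = F_t^\top \var(\Gamma_{i,j\cdot}) F_t + (\bs{\Sigma}_Z)_{jj}.
\]
These parameters are controlled uniformly as follows. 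First, $(\bs{\Sigma}_Z)_{jj} \ge c_Z$ gives $\sigma_{jt}^2 \ge c_Z$. Second, the moment bounds in \ref{C:loadings} and \ref{C:Z} give $|\mu_{jt}| \le C$ and $\sigma_{jt}^2 \le C$. For this we use bounded factors (condition (c)), $K$ fixed, $\max_{j,k}\ex|\Gamma_{i,jk}|^2 \le C$ and $\ex Z_{it,j}^2 \le C$. Consequently, all parameters $(\mu_{jt},\sigma_{jt})$ lie in the compact set $\mathcal{K} = [-C,C]\times[\sqrt{c_Z},\sqrt{C}]$.

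\emph{Step 2: positivity for each fixed $(\mu,\sigma)$.} Let $X \sim N(\mu,\sigma^2)$ with $\sigma>0$, and suppose $w^\top \bs{\Omega}(\mu,\sigma) w = \var\big(\sum_{\ell=1}^L w_\ell X^\ell\big) = 0$. Then the polynomial $\sum_\ell w_\ell x^\ell$ is almost surely constant on the support of $X$, which is all of $\reals$. A nonconstant polynomial takes each value only finitely often, and the polynomial has no constant term, so $w=0$. Thus $\bs{\Omega}(\mu,\sigma)$ is positive definite for every $L \le L_{\max}$.

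\emph{Step 3: uniformity.} The entries of $\bs{\Omega}(\mu,\sigma)$ are moments of a Gaussian up to order $2L_{\max}$, hence polynomials in $(\mu,\sigma)$. The map $(\mu,\sigma) \mapsto \eig_{\min}(\bs{\Omega}(\mu,\sigma))$ is therefore continuous and, by Step 2, strictly positive. On the compact set $\mathcal{K}$ it attains a positive minimum $c_L$. Taking $c^{(2)} = \min_{1\le L \le L_{\max}} c_L > 0$ covers all possible $L_j$.

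Alternatively, one can write $X = \mu + \sigma G$ with $G$ standard normal. The monomial vector is then a triangular linear map $\bs{A}(\mu,\sigma)$ of $(G,\ldots,G^L)$, with diagonal entries $\sigma^\ell$ and all entries bounded on $\mathcal{K}$. This gives the explicit bound $\eig_{\min}(\bs{\Omega}) \ge \eig_{\min}(\var(G,\ldots,G^L))\,\sigma_{\min}(\bs{A})^2$. The compactness argument is simpler, so I would use that.

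The main point needing care is Step 1: one must make sure that $\mu_{jt}$ and $\sigma_{jt}$ are bounded uniformly in $j$ and $t$, and that $\sigma_{jt}$ is bounded away from zero. Both follow from the Gaussian structure and the stated moment and boundedness conditions. Uniformity in $j$ is where the fixed $L_{\max}$ and the compactness of $\mathcal{K}$ are essential.
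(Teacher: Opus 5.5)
Your proof is correct. It shares the paper's overall skeleton: $X_{it,j}$ is Gaussian, $(\mu_{tj},\sigma_{tj}^2)$ ranges over a compact set with $\sigma_{tj}^2$ bounded away from zero, and uniformity in $j$ and $t$ comes from that compactness. The mechanism for positivity and uniformity differs, however.

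The paper's route has four steps.
\begin{enumerate}
\item It passes to the augmented moment matrix $\widetilde{\bs{\Omega}}_{jt}=\ex[(1,\phi_j(X_{it,j})^\top)^\top(1,\phi_j(X_{it,j})^\top)]$. This uses $v_j^\top\bs{\Omega}_{jt}v_j=\widetilde v_{jt}^\top\widetilde{\bs{\Omega}}_{jt}\widetilde v_{jt}$ with $\|\widetilde v_{jt}\|\ge\|v_j\|$.
\item It computes $\det(\widetilde{\bs{\Omega}}_{jt})=c^{(1)}_{L_j}(\sigma_{tj}^2)^{L_j(L_j+1)/2}$ exactly, via the triangular change of basis to $(X_{it,j}-\mu_{tj})/\sigma_{tj}$. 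This is essentially your ``alternative''.
\item It bounds $\tr(\widetilde{\bs{\Omega}}_{jt})$ from above by continuity of Gaussian moments on the compact parameter set.
\item It concludes, by contradiction, that a determinant bounded below together with a trace bounded above keep the smallest eigenvalue away from zero. In effect this is $\psi_{\min}(\widetilde{\bs{\Omega}}_{jt})\ge\det(\widetilde{\bs{\Omega}}_{jt})/\tr(\widetilde{\bs{\Omega}}_{jt})^{L_j}$.
\end{enumerate}

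You instead get pointwise positive definiteness from the fact that a nonconstant polynomial cannot be almost surely constant under a distribution with full support. You then get uniformity from continuity of $(\mu,\sigma)\mapsto\psi_{\min}(\bs{\Omega}(\mu,\sigma))$ on $\mathcal{K}$.

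Your argument is shorter. It avoids the augmented matrix, the explicit determinant computation and the somewhat loosely phrased contradiction step. It also makes clear what is really needed: a support with more than $L_{\max}$ points, and moments up to order $2L_{\max}$ that depend continuously on parameters ranging over a compact set. The paper's determinant--trace argument is more quantitative. It yields an explicit lower bound in terms of $c_Z$, the moment bounds and the standard Gaussian Hankel determinant $c^{(1)}_{L_j}$, whereas your compactness step only gives existence of $c^{(2)}$.
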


\begin{proof}
Since $v_j^\top \bs{\Omega}_{jt} v_j = \widetilde{v}_{jt}^\top \widetilde{\bs{\Omega}}_{jt} \widetilde{v}_{jt}$ with 
\[ \widetilde{\bs{\Omega}}_{jt} = \ex\left[ \begin{pmatrix} 1 \\ \phi_j(X_{it,j}) \end{pmatrix} \begin{pmatrix} 1 & \phi_j(X_{it,j})^\top \end{pmatrix} \right] \quad \text{and} \quad \widetilde{v}_{jt} = \begin{pmatrix} -\ex[\phi_j(X_{it,j})^\top] v_j \\ v_j \end{pmatrix}, \]
it suffices to show that 
\begin{equation}\label{eq:statement:PolynomialBlocksLemma} 
\min_{1 \le t \le T} \bigg\{ \sum_{j=1}^p \widetilde{v}_{jt}^\top \widetilde{\bs{\Omega}}_{jt} \widetilde{v}_{jt} \bigg\} \ge c^{(2)} \sum_{j=1}^p \|v_j\|^2. 
\end{equation}

We first derive some properties of the matrices $\widetilde{\bs{\Omega}}_{jt}$. To do so, we use the notation $\mu_{tj}=\ex[X_{it,j}]$ and $\sigma^2_{tj} = \var(X_{it,j})$. Notably, both $\mu_{tj}$ and $\sigma^2_{tj}$ do not depend on $i$, since the variables $X_{1t,j}, \ldots,X_{nt,j}$ are identically distributed under our assumptions. We verify the following two facts:
\begin{enumerate}[label=(\roman*)]
\item \label{proof:gaussianpolynomialblocks:determinantclaim} There exist positive real numbers $c^{(1)}_{L_j}$ and $c^{(2)}_{L_j}$ which only depend on $L_j$ such that 
\begin{align*}
\det\big(\widetilde{\bs{\Omega}}_{jt}\big) & = c^{(1)}_{L_j} (\sigma_{tj}^2)^{c^{(2)}_{L_j}}.
\end{align*}
\item \label{proof:gaussianpolynomialblocks:traceclaim} It holds that
\[ \tr\big(\widetilde{\bs{\Omega}}_{jt}\big) = \sum_{\ell=0}^{L_j} \ex[X_{it,j}^{2\ell}]. \]
\end{enumerate}
The second claim \ref{proof:gaussianpolynomialblocks:traceclaim} is clear. To prove the first claim \ref{proof:gaussianpolynomialblocks:determinantclaim}, we proceed as follows. According to the binomial theorem,
\begin{align*}
\left(\frac{X_{it,j} - \mu_{tj}}{\sigma_{tj}}\right)^\ell 
 & = \sum_{k=0}^\ell a_{\ell,k}^{[tj]} X_{it,j}^k \quad \text{with} \quad a_{\ell,k}^{[tj]} = \binom{\ell}{k} \left(\frac{1}{\sigma_{tj}}\right)^k \left(\frac{-\mu_{tj}}{\sigma_{tj}}\right)^{\ell-k}
\end{align*} 
for $\ell=0,\ldots,{L_j}$. Introducing the transformation matrix
\[ \bs{A}^{[tj]} = 
\begin{pmatrix}
a_{0,0}^{[tj]} & 0 & 0 & \cdots & 0 \\
a_{1,0}^{[tj]} & a_{1,1}^{[tj]} & 0 & \cdots & 0 \\
a_{2,0}^{[tj]} & a_{2,1}^{[tj]} & a_{2,2}^{[tj]} & \cdots & 0 \\
\vdots & \vdots & \vdots & \ddots & \vdots \\
a_{L_j,0}^{[tj]} & a_{L_j,1}^{[tj]} & a_{L_j,2}^{[tj]} & \cdots & a_{L_j,L_j}^{[tj]},
\end{pmatrix}, \]
we can thus write
\[ \begin{pmatrix} 1 \\ \phi_j(X_{it,j}) \end{pmatrix}  
= \begin{pmatrix} 1 \\ X_{it,j} \\ \vdots \\ X_{it,j}^{L_j} \end{pmatrix}
= \big( \bs{A}^{[tj]} \big)^{-1} 
\begin{pmatrix} 1  \\ \big(\frac{X_{it,j} - \mu_{tj}}{\sigma_{tj}}\big) \\ \vdots \\ \big(\frac{X_{it,j} - \mu_{tj}}{\sigma_{tj}}\big)^{L_j} \end{pmatrix}.  \]
Note that the inverse of $\bs{A}^{[tj]}$ exists since $\det(\bs{A}^{[tj]}) = \prod_{\ell=0}^{L_j} a_{\ell,\ell}^{[tj]}$ with $a_{\ell,\ell}^{[tj]} = (1/\sigma_{tj})^\ell$ for all $\ell$ and $0 < c_\sigma \le \sigma_{tj}^2 \le C_\sigma < \infty$ under our conditions. It follows that  
\begin{align*}
\det\big(\widetilde{\bs{\Omega}}_{jt}\big) 
 & = \det\left(\ex\left[ \begin{pmatrix} 1 \\ \phi_j(X_{it,j}) \end{pmatrix} \begin{pmatrix} 1 & \phi_j(X_{it,j})^\top \end{pmatrix} \right]\right) \\
 & = \frac{1}{\det(\bs{A}^{[tj]})^2} \det\left(\ex\left[ \begin{pmatrix*}[c] 1 \\ \big(\frac{X_{it,j} - \mu_{tj}}{\sigma_{tj}}\big) \\\vdots \\ \big( \frac{X_{it,j} - \mu_{tj}}{\sigma_{tj}} \big)^{L_j} \end{pmatrix*} \begin{pmatrix*}[c] 1 & \big(\frac{X_{it,j} - \mu_{tj}}{\sigma_{tj}}\big) & \ldots & \big( \frac{X_{it,j} - \mu_{tj}}{\sigma_{tj}} \big)^{L_j} \end{pmatrix*} \right]\right) \\
 & = \frac{1}{\det(\bs{A}^{[tj]})^2} \cdot \det\left(\ex\left[ \begin{pmatrix*}[c] 1 \\ W \\ \vdots \\ W^{L_j} \end{pmatrix*} \begin{pmatrix*} 1 & W & \ldots & W^{L_j} \end{pmatrix*} \right] \right) 
\end{align*}
with $W \sim N(0,1)$. Since 
\[ c_{L_j}^{(1)} := \det\left(\ex\left[ \begin{pmatrix*}[c] 1 \\ W \\ \vdots \\ W^{L_j} \end{pmatrix*} \begin{pmatrix*} 1 & W & \ldots & W^{L_j} \end{pmatrix*} \right] \right) > 0, \]
which can be shown by simple calculations, and $\det(\bs{A}^{[tj]}) = \prod_{\ell=0}^{L_j} a_{\ell,\ell}^{[tj]} = (1/\sigma_{tj})^{c_{L_j}^{(2)}}$ with $c_{L_j}^{(2)} ={{L_j}({L_j}+1)/2}$, we finally arrive at 
\[ \det\big(\widetilde{\bs{\Omega}}_{jt}\big) = c_{L_j}^{(1)} (\sigma_{tj}^2)^{c_{L_j}^{(2)}}, \]
which completes the proof of \ref{proof:gaussianpolynomialblocks:determinantclaim}.

We next show that the minimal eigenvalue $\psi_{\min}(\widetilde{\bs{\Omega}}_{jt})$ of $\widetilde{\bs{\Omega}}_{jt}$ is bounded from below in the following sense: 
\begin{equation}\label{eq:minEV:PolynomialBlocksLemma}
\min_{\substack{1 \le j \le p, \\ 1 \le t \le T}} \psi_{\min}(\widetilde{\bs{\Omega}}_{jt}) \ge c_{\Omega} > 0 
\end{equation}
for some fixed constant $c_{\Omega}$ that is independent of $T$ and $p$. Since \eqref{eq:minEV:PolynomialBlocksLemma} immediately implies \eqref{eq:statement:PolynomialBlocksLemma}, this completes the proof of Lemma \ref{lemma:PolynomialBlocks}.

We prove \eqref{eq:minEV:PolynomialBlocksLemma} by contradiction: Suppose that $\min_{j,t} \psi_{\min}(\widetilde{\bs{\Omega}}_{jt})$ is not bounded from below in the sense of \eqref{eq:minEV:PolynomialBlocksLemma}. In particular, suppose w.l.o.g.\ that $\min_{j,t} \psi_{\min}(\widetilde{\bs{\Omega}}_{jt})$ converges to zero as $T$ and/or $p$ increases. 
In this situation, the maximal eigenvalue $\max_{j,t} \psi_{\max}(\widetilde{\bs{\Omega}}_{jt})$ has to diverge. 
Otherwise, $\min_{j,t} \det(\widetilde{\bs{\Omega}}_{jt})$ would converge to zero, which would contradict \ref{proof:gaussianpolynomialblocks:determinantclaim} according to which 
\[ \min_{j,t} \det\big(\widetilde{\bs{\Omega}}_{jt}\big) = \min_{j,t} c^{(1)}_{L_j} (\sigma_{tj}^2)^{c^{(2)}_{L_j}} \geq c > 0 \]
for some sufficiently small but fixed constant $c$, where we have used that $1 \le L_j \le L_{\max} < \infty$ for all $j$ and $0 < c_\sigma \le \sigma_{tj}^2 \le C_\sigma < \infty$ under our conditions. 
Since $\tr(\widetilde{\bs{\Omega}}_{jt})$ is the sum of the eigenvalues of $\widetilde{\bs{\Omega}}_{jt}$ and thus at least as large as the maximal eigenvalue $\psi_{\max}(\widetilde{\bs{\Omega}}_{jt})$, $\max_{j,t} \tr(\widetilde{\bs{\Omega}}_{jt})$ must diverge as well. 
On the other hand, by \ref{proof:gaussianpolynomialblocks:traceclaim}, $\tr(\widetilde{\bs{\Omega}}_{jt})$ is equal to 
\[ h(\mu_{tj}, \sigma_{tj}^2) := \sum_{\ell=0}^{L_j} \ex[X_{it,j}^{2\ell}], \]
where the function $h$ is continuous by the properties of Gaussian moments. Under our conditions, there exist constants $C_\mu$, $c_\sigma$ and $C_\sigma$ such that $|\mu_{tj}|\leq C_\mu < \infty$ and $0 < c_\sigma \leq \sigma_{tj}^2 \leq C_\sigma < \infty$. Since $\mathcal{M} := \{(x,y): |x|\leq C_\mu,\ c_\sigma \leq y \leq C_\sigma\}$ is a compact set, the function $h$ is bounded on $\mathcal{M}$. Consequently, $\max_{j,t} \tr(\widetilde{\bs{\Omega}}_{jt}) = \max_{j,t} h(\mu_{tj}, \sigma_{tj}^2)$ is bounded, which obviously contradicts the above conclusion that $\max_{j,t} \tr(\widetilde{\bs{\Omega}}_{jt})$ diverges. We have thus verified \eqref{eq:minEV:PolynomialBlocksLemma}. 
\end{proof}

\noindent We now turn to the proof of Theorem \ref{prop:compatibility-parametric}. For fixed (but arbitrary) vectors $v_j \in \reals^{L_j}$, define the functions $g_{t'j}^{(t)}$ by setting 
\[ g_{t'j}^{(t)}(W_{t'j})  = \Pi_{tt'} \big( \phi_j(X_{it',j}) - \ex[\phi_j(X_{it',j})]\big)^\top v_j, \] 
where $\Pi_{tt'}$ denotes the $(t,t')$-th entry of the projection matrix $\bs{\Pi}$ and 
\begin{equation*}
W_{t'j} = \frac{X_{it',j} - \mu_{t'j}}{\sigma_{t'j}} \sim \normal(0,1)
\end{equation*}
with $\mu_{tj}=\ex[X_{it,j}]$ and $\sigma^2_{tj} = \var(X_{it,j})$. For ease of notation, we replace the set of double indices $\{(t',j): 1 \le t' \le T, \, 1 \le j \le p\}$ by the simpler index set $\{h: 1 \le h \le Tp \}$ in what follows, which can be achieved, e.g., by stipulating the bijective mapping $(t',j) \mapsto h:= (t'-1)p + j$. We thus write $g_{h}^{(t)}(W_{h})$ instead of $g_{t'j}^{(t)}(W_{t'j})$. Next, let $\vartheta_{\nu}$ for ${\nu}\in \naturals$ be the standardized Hermite-Tschebycheff polynomials given by
\[\vartheta_{\nu}(x) = \frac{1}{\sqrt{{\nu}!}} (-1)^{\nu} e^{x^2/2} \frac{d^{\nu}}{dx^{\nu}} \big( e^{-x^2/2} \big), \] 
which form an orthonormal basis of ${L}^2(\reals, (2\pi)^{-1/2} e^{-x^2/2}dx)$ and satisfy the equations
\begin{align*}
\ex[\vartheta_\nu(W_{h})\vartheta_{\mu}(W_{h'})] = \left(\ex [ W_{h} W_{h'}]\right)^\nu\delta_{\nu,\mu}
\end{align*} for $h,h'=1,\ldots,Tp$ and $\nu,\mu \in \naturals$ with $\delta_{\nu,\mu}$ denoting the Kronecker delta \citep[see eq.\ (9) in][]{lancasterOrthogonalPolynomials}. Since $g_{h}^{(t)}(\cdot)\in {L}^2\big(\reals, (2\pi)^{-1/2} e^{-x^2/2}dx\big)$, it has the  Hermite expansion 
\[ g_{h}^{(t)}(W_{h}) = \sum_{\nu =1}^\infty d_{h,\nu}^{(t)} \vartheta_\nu(W_{h}), \]  
implying that $\ex[\{ g_{h}^{(t)}(W_{h}) \}^2] = \sum_{\nu=1}^\infty \{d_{h,\nu}^{(t)}\}^2$. With these preparations in place, we get
\begin{align}
 & \frac{1}{T} \ex\left[ \big\|\bs{\Pi} (\bs{\Phi}_i -  \ex[\bs{\Phi}_i])v \big\|^2\right] \nonumber \\
 & = \frac{1}{T} \sum_{t=1}^T \ex\left[ \bigg\{\sum_{j=1}^p\sum_{t'=1}^T\Pi_{tt'} \big(\phi_j(X_{it',j}) - \ex[\phi_j(X_{it',j})]\big)^\top v_j \bigg\}^2 \right] \nonumber \\
 & = \frac{1}{T} \sum_{t=1}^T \ex\left[ \bigg\{\sum_{h=1}^{Tp} g_h^{(t)}(W_{h})\bigg\}^2 \right] \nonumber \\
 & = \frac{1}{T}\sum_{t=1}^T \sum_{h=1}^{Tp}  \sum_{h'=1}^{Tp} \sum_{\nu =1}^\infty d_{h,\nu}^{(t)} d_{h',\nu}^{(t)} \ex\left[   \vartheta_\nu(W_{h}) \vartheta_\nu(W_{h'}) \right ] \nonumber \\
 & = \frac{1}{T}\sum_{t=1}^T\sum_{h=1}^{Tp} \sum_{h'=1}^{Tp} \sum_{\nu = 1}^\infty d_{h,\nu}^{(t)} d_{h',\nu}^{(t)} \left(\ex [  W_{h} W_{h'}]\right)^\nu. \label{eq:applyGUO}
\end{align}
To obtain a lower bound on the term in the final line above, we apply the following result, which is a special case of Lemma 2 in \cite{GUO20221037}:
\begin{lemmaA}
Let $W \sim \normal(0, \ex[W W^\top])$ be a Gaussian random vector in $\reals^q$ with $\ex[W_h^2] = 1$ for all $1 \le h \le q$. For all $m \ge 1$ and all $a \in \reals^q$ with $\|a\|=1$,
\[ \sum_{h,h'=1}^q \big\{ \ex[W_h W_{h'}] \big\}^m a_h a_{h'} \ge \psi_{\min} \big( \ex[W W^\top] \big), \]
where $\psi_{\min} \big( \ex[W W^\top] \big)$ is the minimal eigenvalue of $\ex[W W^\top]$.     
\end{lemmaA}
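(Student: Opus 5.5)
The plan is to prove the lemma by induction on $m$, using only two facts: the entrywise (Hadamard) product preserves positive semi-definiteness, and $\bs{\Sigma} := \ex[W W^\top]$ is a correlation matrix. Gaussianity plays no role in this step; it matters only for the application in \eqref{eq:applyGUO}, where it makes the Hermite polynomials orthogonal across the correlated coordinates.

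\textbf{Setup.} Let $\bs{\Sigma} = (\sigma_{hh'})$ with $\sigma_{hh'} = \ex[W_h W_{h'}]$ and $\sigma_{hh} = 1$. Write $\bs{\Sigma}^{\circ m} = (\sigma_{hh'}^m)$ for the $m$-th Hadamard power and $\psi := \psi_{\min}(\bs{\Sigma}) \ge 0$. The left-hand side of the claimed inequality is then the quadratic form $a^\top \bs{\Sigma}^{\circ m} a$. So it suffices to show the matrix inequality $\bs{\Sigma}^{\circ m} - \psi \bs{I}_q \succeq 0$ for all $m \ge 1$.

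\textbf{Induction.} For $m=1$ the inequality is the definition of the minimal eigenvalue: $a^\top \bs{\Sigma} a \ge \psi \|a\|^2 = \psi$.

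For the step from $m-1$ to $m$, with $m \ge 2$, decompose $\bs{\Sigma} = \bs{A} + \psi \bs{I}_q$, where $\bs{A} := \bs{\Sigma} - \psi \bs{I}_q \succeq 0$. Then
\[ \bs{\Sigma}^{\circ m} = \bs{\Sigma}^{\circ (m-1)} \circ \bs{\Sigma} = \bs{\Sigma}^{\circ (m-1)} \circ \bs{A} + \psi \, \bs{\Sigma}^{\circ (m-1)} \circ \bs{I}_q . \]
I will handle the two terms separately.

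\begin{enumerate}[label=(\roman*),leftmargin=0.975cm]
\item \emph{First term.} By the Schur product theorem and induction, $\bs{\Sigma}^{\circ (m-1)}$ is positive semi-definite, since it is a Hadamard product of copies of the positive semi-definite matrix $\bs{\Sigma}$. Applying the Schur product theorem once more, $\bs{\Sigma}^{\circ (m-1)} \circ \bs{A} \succeq 0$.
\item \emph{Second term.} The matrix $\bs{\Sigma}^{\circ (m-1)} \circ \bs{I}_q$ is the diagonal part of $\bs{\Sigma}^{\circ (m-1)}$. Its diagonal entries are $\sigma_{hh}^{m-1} = 1$, so it equals $\bs{I}_q$.
\end{enumerate}

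Combining (i) and (ii) gives $\bs{\Sigma}^{\circ m} \succeq \psi \bs{I}_q$, and hence $\sum_{h,h'} \sigma_{hh'}^m a_h a_{h'} \ge \psi$ for every unit vector $a$. Strictly speaking the induction hypothesis is not even needed, since positive semi-definiteness of $\bs{\Sigma}^{\circ (m-1)}$ follows from the Schur product theorem directly. The argument is therefore just this single decomposition.

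\textbf{Main obstacle.} The one non-trivial ingredient is the Schur product theorem, which I will quote as standard. If a self-contained argument is wanted, it follows by writing $\bs{A} = \sum_k u_k u_k^\top$ and $\bs{B} = \sum_l w_l w_l^\top$ and noting that
\[ \bs{A} \circ \bs{B} = \sum_{k,l} (u_k \circ w_l)(u_k \circ w_l)^\top \succeq 0 . \]
The only delicate point to check is the unit-diagonal normalization $\ex[W_h^2] = 1$. This is exactly what makes $\bs{\Sigma}^{\circ (m-1)} \circ \bs{I}_q = \bs{I}_q$, and it is guaranteed in the application because the $W_{t'j}$ were standardized.
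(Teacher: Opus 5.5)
Your proof is correct. It takes a different route from the paper only in the sense that the paper gives no proof of this lemma: it imports the inequality as a special case of Lemma~2 in \cite{GUO20221037}. The lemma is phrased there in terms of a Gaussian vector because that is how it is used after \eqref{eq:applyGUO}.

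You instead prove it directly. You split $\bs{\Sigma}=(\bs{\Sigma}-\psi\bs{I}_q)+\psi\bs{I}_q$, use the Schur product theorem for the first part, and use the unit diagonal of $\bs{\Sigma}^{\circ(m-1)}$ for the second. This is exactly Schur's classical bound $\psi_{\min}(\bs{A}\circ\bs{B})\ge\psi_{\min}(\bs{A})\min_h b_{hh}$ for positive semi-definite $\bs{A},\bs{B}$, applied with $\bs{A}=\bs{\Sigma}$ and $\bs{B}=\bs{\Sigma}^{\circ(m-1)}$.

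What your route buys:
\begin{itemize}
\item The argument is self-contained and elementary.
\item It shows that Gaussianity plays no role in this step. Only positive semi-definiteness and the standardization $\ex[W_h^2]=1$ enter. Gaussianity is needed only for the Hermite orthogonality relation that produces \eqref{eq:applyGUO}.
\item The matrix form $\bs{\Sigma}^{\circ m}\succeq\psi_{\min}(\bs{\Sigma})\bs{I}_q$ applies directly to the unnormalized coefficient vectors $(d^{(t)}_{h,\nu})_{h}$ in the paper's display, without rescaling to unit norm.
\end{itemize}

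The citation is shorter, but it leaves the reader to check that the cited lemma specializes as claimed.

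Both versions implicitly take $m$ to be a positive integer. That is all the application needs, since $m=\nu\in\naturals$.
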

\noindent Applying this lemma to \eqref{eq:applyGUO} yields that 
\begin{align*}
 & \frac{1}{T} \ex\left[ \big\|\bs{\Pi} (\bs{\Phi}_i -  \ex[\bs{\Phi}_i])v \big\|^2\right] \\
 & \geq \frac{1}{T}\sum_{t=1}^T  \sum_{\nu = 1}^\infty \psi_{\textnormal{min}}\big(\ex[W W^\top]\big) \sum_{h=1}^{Tp} (d_{h,\nu}^{(t)})^2 \\
 & = \frac{1}{T}\sum_{t=1}^T \psi_{\textnormal{min}}\big(\ex[W W^\top]\big) \sum_{h=1}^{Tp} \ex\big[\{g_h^{(t)}(W_{h})\}^2\big] \\
 & = \psi_{\textnormal{min}}\big(\ex[W W^\top]\big) \frac{1}{T} \sum_{t=1}^T \sum_{t'=1}^T  \Pi_{tt'}^2 \sum_{j=1}^p \ex\left[ \big\{ \big(\phi_j(X_{it',j}) - \ex[\phi_j(X_{it',j})]\big)^\top v_j \big\}^2 \right] \\
 & = \psi_{\textnormal{min}}\big(\ex[W W^\top]\big) \frac{1}{T} \sum_{t=1}^T \sum_{t'=1}^T  \Pi_{tt'}^2 \sum_{j=1}^p v_j^\top \bs{\Omega}_{jt'} v_j. 
\end{align*}
Moreover, applying Lemma \ref{lemma:PolynomialBlocks} to the final line above, we obtain that 
\begin{align}
\frac{1}{T} \ex\left[ \big\|\bs{\Pi} (\bs{\Phi}_i -  \ex[\bs{\Phi}_i])v \big\|^2\right] 
 & \ge c^{(2)} \sum_{j=1}^p \|v_j\|^2  \psi_{\textnormal{min}}(\ex[W W^\top]) \frac{1}{T}\sum_{t=1}^T  \sum_{t'=1}^T \Pi_{tt'}^2 \nonumber \\
 & =  c^{(2)} \sum_{j=1}^p \|v_j\|^2  \psi_{\textnormal{min}}(\ex[W W^\top]) \frac{T-K}{T}, \label{eq:applyEVlowerbound} 
 \end{align}
where the last equality follows from the fact that $\sum_{t=1}^T \sum_{t'=1}^T \Pi_{tt'}^2 = \|\bs{\Pi}\|_F^2 \geq \sum_{t=1}^T \psi_t(\bs{\Pi}) = T-K$ with $\psi_t(\bs{\Pi})$ being the $t$-th eigenvalue of $\bs{\Pi}$. We next show that 
\begin{equation}\label{eq:EVmin-EWW}
\psi_{\textnormal{min}}(\ex[W W^\top]) \ge c_W > 0
\end{equation}
with a fixed constant $c_W$ (that does not depend on the dimension $T p$ of the random vector $W$). Plugging \eqref{eq:EVmin-EWW} into \eqref{eq:applyEVlowerbound}, we arrive at  
\[ \frac{1}{T} \ex\left[ \big\|\bs{\Pi} (\bs{\Phi}_i -  \ex[\bs{\Phi}_i])v \big\|^2\right] \ge c^{(2)} c_W \frac{T-K}{T} \bigg( \sum_{j=1}^p \sum_{\ell=1}^{L_j} |v_{j\ell}|^2 \bigg), \]
which immediately implies the statement of Theorem \ref{prop:compatibility-parametric}. To complete the proof, it thus remains to verify \eqref{eq:EVmin-EWW}: for any $w\in\reals^{Tp}$ and $\bs{D}_t=\textnormal{diag}(1/\sqrt{\var(X_{1t,1})}, \ldots$ $\ldots, 1/\sqrt{\var(X_{1t,p})})$,
\begin{align*}
w^\top \ex[W W^\top] w 
 & = \sum_{t,t'=1}^T \sum_{j,j'=1}^p w_{tj} \ex[W_{tj} W_{t'j'}] w_{t'j'} \\
 & = \sum_{t,t'=1}^T \sum_{j,j'=1}^p w_{tj} \frac{\cov(X_{it,j}, X_{it',j'})} {\sqrt{\var(X_{it,j}) \var(X_{it',j'})}} w_{t'j'} \\
 & = \sum_{t,t'=1}^T \sum_{j,j'=1}^p w_{tj} \frac{\cov(\Gamma_{i,j}F_t + Z_{it,j}, \Gamma_{i,j'}F_{t'} + Z_{it',j'})}{\sqrt{\var(X_{it,j}) \var(X_{it',j'})}} w_{t'j'} \\
 & = \sum_{t,t'=1}^T \sum_{j,j'=1}^p w_{tj} \frac{\cov(\Gamma_{i,j}F_t, \Gamma_{i,j'}F_{t'})}{\sqrt{\var(X_{it,j}) \var(X_{it',j'})}} w_{t'j'} \\
 & \quad + \sum_{t,t'=1}^T \sum_{j,j'=1}^p w_{tj} \frac{\cov( Z_{it,j}, Z_{it',j'})}{\sqrt{\var(X_{it,j}) \var(X_{it',j'})}} w_{t'j'} \\
 & \geq \sum_{t,t'=1}^T \sum_{j,j'=1}^p w_{tj} \frac{\cov( Z_{it,j}, Z_{it',j'})}{\sqrt{\var(X_{it,j}) \var(X_{it',j'})}} w_{t'j'} \\
 & = \sum_{t=1}^T \sum_{j,j'=1}^p w_{tj} \frac{\cov( Z_{it,j}, Z_{it,j'})}{\sqrt{\var(X_{it,j}) \var(X_{it,j'})}} w_{tj'} \\
 & = \sum_{t=1}^T  w_t^\top \bs{D}_t^\top \bs{\Sigma}_Z \bs{D}_t w_t 
 \geq \sum_{t=1}^T \psi_{\textnormal{min}}(\bs{\Sigma}_Z) \|\bs{D}_t w_t\|^2 \\
 & = \sum_{t=1}^T  \psi_{\textnormal{min}}(\bs{\Sigma}_Z) \sum_{j=1}^p \frac{w_{tj}^2}{\var(X_{it,j})} \geq c_W \|w\|^2
\end{align*}
with a sufficiently small constant $c_W > 0$, since $\psi_{\min}(\bs \Sigma_Z) \ge c_Z > 0$ and $0 < c_\sigma \le \var(X_{it,j}) \le C_\sigma < \infty$ under our conditions.